\documentclass[aps,pra,twocolumn,floatfix,superscriptaddress,longbibliography,10pt,notitlepage,nofootinbib]{revtex4-1}
\usepackage{graphicx, color, graphpap}
\usepackage{float}
\usepackage{enumitem}
\usepackage{amssymb}
\usepackage{amsfonts}
\usepackage{amsthm}
\usepackage{multirow}
\usepackage[dvipsnames]{xcolor}
\usepackage[colorlinks=true,citecolor=MidnightBlue,linkcolor=magenta]{hyperref}
\usepackage[T1]{fontenc}
\usepackage{bbm}
\usepackage{thmtools,thm-restate}
\usepackage{verbatim}
\usepackage{mathtools}
\usepackage{titlesec}
\usepackage{dsfont}
\usepackage{diagbox}
\usepackage[caption = false]{subfig}
\usepackage{subfig}
\usepackage{tikz}
\usetikzlibrary{quantikz}
\usepackage[normalem]{ulem}

\long\def\ca#1\cb{}

\newcommand{\ketbra}[2]{| \hspace{1pt} #1 \rangle\! \langle #2 \hspace{1pt} |}

\newcommand{\dya}[1]{\ket{#1}\!\bra{#1}}







\newcommand{\DC}{\mathcal{D}}
\newcommand{\D}{\mathcal{D}}

\newcommand{\NC}{\mathcal{N}}
\newcommand{\N}{\mathcal{N}}

\newcommand{\PC}{\mathcal{P}}

\newcommand{\SC}{\mathcal{S}}

\newcommand{\UC}{\mathcal{U}}
\newcommand{\U}{\mathcal{U}}

\newcommand{\YC}{\mathcal{Y}}

\newcommand{\Tr}{{\rm Tr}}

\renewcommand{\geq}{\geqslant}
\renewcommand{\leq}{\leqslant}

\renewcommand{\vec}[1]{\boldsymbol{#1}}  


\newcommand*{\id}{\mathbbm{I}}


\def\bmv{ {\vec{v}} }

\newtheorem{theorem}{Theorem}
\newtheorem{lemma}{Lemma}

\newtheorem{corollary}{Corollary}
\newtheorem{proposition}{Proposition}

\newtheorem{definition}{Definition}

\makeatletter
\newcommand{\dbloverline}[1]{\overline{\dbl@overline{#1}}}
\newcommand{\dbl@overline}[1]{\mathpalette\dbl@@overline{#1}}
\newcommand{\dbl@@overline}[2]{%
  \begingroup
  \sbox\z@{$\m@th#1\overline{#2}$}%
  \ht\z@=\dimexpr\ht\z@-2\dbl@adjust{#1}\relax
  \box\z@
  \ifx#1\scriptstyle\kern-\scriptspace\else
  \ifx#1\scriptscriptstyle\kern-\scriptspace\fi\fi
  \endgroup
}
\newcommand{\dbl@adjust}[1]{%
  \fontdimen8
  \ifx#1\displaystyle\textfont\else
  \ifx#1\textstyle\textfont\else
  \ifx#1\scriptstyle\scriptfont\else
  \scriptscriptfont\fi\fi\fi 3
}
\makeatother

\setlength{\parskip}{.5\baselineskip}

\usepackage[usestackEOL]{stackengine}[2013-10-15]

\stackMath

\begin{document}

\title{A framework of partial error correction for intermediate-scale quantum computers}

\author{Nikolaos Koukoulekidis}
\email{nkouk96@gmail.com}
\affiliation{Theoretical Division, Los Alamos National Laboratory, Los Alamos, NM, USA}
\affiliation{Department of Physics, Imperial College London, London, UK}

\author{Samson Wang}
\affiliation{Department of Physics, Imperial College London, London, UK}

\author{Tom O'Leary}
\affiliation{Theoretical Division, Los Alamos National Laboratory, Los Alamos, NM, USA}
\affiliation{Department of Physics, Clarendon Laboratory, University of Oxford, Oxford, UK}

\author{Daniel Bultrini}
\affiliation{Theoretische Chemie, Physikalisch-Chemisches Institut, Universit{\"a}t Heidelberg, Heidelberg, Germany}

\author{Lukasz Cincio}
\affiliation{Theoretical Division, Los Alamos National Laboratory, Los Alamos, NM, USA}
\affiliation{Quantum Science Center, Oak Ridge, TN, USA}

\author{Piotr Czarnik}
\affiliation{Institute of Theoretical Physics, Jagiellonian University, Krak\'ow, Poland.}
\affiliation{Mark Kac Center for Complex Systems Research, Jagiellonian University, Krak\'ow, Poland}

\begin{abstract}
\vspace{5pt}
\begin{center}
    \textbf{ABSTRACT}
\end{center}

As quantum computing hardware steadily increases in qubit count and quality, one important question is how to allocate these resources to mitigate the effects of hardware noise. 
In a transitional era between noisy small-scale and fully fault-tolerant systems, we envisage a scenario in which we are only able to error-correct a fraction of the qubits required to perform an interesting computation. 
In this work, we develop concrete constructions of logical operations on a joint system of a collection of noisy and a collection of error-corrected logical qubits. 
Within this setting and under Pauli noise assumptions, we provide analytic evidence that brick-layered circuits display on average slower concentration to the ``useless'' uniform distribution with increasing circuit depth compared to fully noisy circuits. 
We corroborate these findings by numerical demonstration of slower decoherence with an increasing fraction of error-corrected qubits under depolarizing noise acting at the circuit level. 
We find that this advantage only manifests when the number of error-corrected qubits passes a specified threshold which depends on the number of couplings between error-corrected and noisy registers.
\end{abstract}

\maketitle

\section*{\MakeUppercase{Introduction}}

In the present noisy intermediate-scale quantum (NISQ) computing era we have access to a growing number of devices with hundreds of qubits \cite{preskill2018quantum}. 
Current NISQ algorithm development aims to use physical qubits directly and circumvent noise through various techniques such as  classical compute assistance \cite{bharti2021noisy, cerezo2020variationalreview, peng2020simulating, piveteau2023circuit}, noise-aware circuit compilation~\cite{cincio2018learning,murali2019noise}, and error mitigation methods \cite{temme2017error, cai2022quantum}. 
On the other hand, fault-tolerant implementations of quantum algorithms assume access to a sufficiently large amount of resources so that error detection and correction can occur throughout the computation \cite{gottesman2009introduction, campbell2017roads}. 
All the while, recent progress in hardware has brought exciting milestones in materializing the first generation of error-corrected qubits \cite{google2025quantum,mayer2024benchmarking,bluvstein2023logical,pogorelov2024experimental,postler2022demonstration,egan2021fault,livingston2022experimental,zhao2022realization,krinner2022realizing,acharya2022suppressing,sundaresan2023demonstrating,ryan2021realization,sivak2023real}.
As we move towards a transitional period between the NISQ and fault-tolerant eras, one important question is how to optimally distribute resources between computational and error-correcting tasks in order to materialize any possible quantum advantage in the earliest time frame~\cite{cao2021nisq, holmes2020nisq+, suzuki2022quantum, piveteau2021error, wahl2023zero, self2022protecting,pogorelov2024experimental,yin2025flexion,akahoshi2024partially,dangwal2025variational,akahoshi2024compilation,toshio2025practical}.

One simple framework in the regime of limited quantum error correction (QEC) is to take a collection of error-corrected qubits and couple them to noisy, non-error-corrected, qubits, which from hereon we sometimes refer to as the ``partial error correction'' framework.  Furthermore, we may refer to error-corrected qubits as ``clean qubits''. Since the early QEC implementations only allow for a very limited number of high-quality QEC qubits, this framework has the potential to greatly augment the computational space of a quantum computer utilizing QEC. This framework can also be employed to describe the trade-off between a less robust but more qubit-efficient error correction strategy on all available qubits, or a more robust but less qubit-efficient strategy on a fraction of the qubits. Such strategies can be beneficial if most errors occur only on a small fraction of the logical qubits, making their suppression a priority.

There has yet been no attempt in the literature to develop a concrete framework that considers performing error correction on a fraction of the logical space to obtain a computational advantage.
We note that this framework is distinct from the ``one clean qubit model'' \cite{knill1998power,fujii2016power,morimae2017power}, which considers computation with one single-qubit state coupled to a maximally mixed state, but crucially without noise throughout the circuit.
It is motivated by recent work~\cite{bultrini2022battle} which considers noiseless qubits combined with noisy qubits leading to slower concentration of expectation values in specific architectures.
It is also distinct from recent partially-fault-tolerant error correction schemes that interleave fault-tolerant Clifford gates with non-fault-tolerant Pauli rotations~\cite{akahoshi2024partially,dangwal2025variational,akahoshi2024compilation,toshio2025practical}, thus performing partial error correction on every register.

The present work explores explicit implementations and subsequent analyses of the partial error correction framework. 
Specifically, we propose a concrete implementation of universal two-qubit gates that interact with clean and noisy qubits. 
Our implementation depends on the QEC code characteristics, is optimal in the number of physical two-qubit gates required and provides a framework to explore quantum circuits with clean and noisy registers.

With explicit implementations established, we then explore possible advantages and trade-offs of the partial error correction framework. 
Using a circuit model with Pauli noise applied on the noisy and clean registers, we prove that the output state of an ensemble of brick-layered circuits can, on average, converge more slowly to the maximally mixed state than a fully noisy circuit.
However, this behavior is conditioned on the number of clean qubits passing a non-zero threshold that depends on the number of couplings between clean and noisy registers. 
We note that this fundamentally distinguishes our results from the case where error-corrected qubits are modeled as noiseless qubits~\cite{bultrini2022battle}, where a single noiseless qubit is sufficient for advantage.
Furthermore, we observe these effects in numerical simulations. We use randomized benchmarking to assess the errors of the different types of logical two-qubit gates required in the partial error correction framework. We follow this with demonstration of a threshold on the number of clean qubits above which our framework obtains an advantage for brick-layered circuits that are simulated with depolarizing and device-inspired noise models. 

The results of this manuscript are structured as follows. In Part A, we explicitly outline how to materialize the partially error-corrected framework. In Part B, we present our analytic scaling results for a model for the partially error-corrected setting with a brick-layered circuit. In Part C, we detail our numerical investigations. Finally, in the Discussion, we conclude and provide open questions stemming from our framework.

\section*{\MakeUppercase{Results}}

\subsection*{Part A: Framework for Partial Error Correction: Stabilizer codes with transversal gates}
\label{sec:partial-ec-framework}

Stabilizer codes~\cite{kitaev2003fault,dennis2002topological,raussendorf2007fault,fowler2012surface} form the leading candidate for realizing quantum error correction.
They are defined by a set of Pauli operators $\SC$, called stabilizers, which leave the encoded state unchanged and are usually characterized as $[[N, K, D]]$ codes, where $N$ is the number of physical qubits required to encode $K$ logical qubits with distance $D$.
A Pauli operator in $\SC$ has weight $w$ if it can be expressed as a tensor product of $w$ non-identity single-qubit Pauli operators.
The distance indicates how many physical errors are needed to cause a logical error. 
Formally, it is defined as the minimum weight of any Pauli operator $P$ such that $P \notin \SC$ and the commutation relation $[P,Q]=0$ holds for all $Q \in \SC$.
The surface code $[[N, 1, O(\sqrt{N})]]$ constitutes a prominent family of stabilizer codes, with recent promising experimental demonstrations~\cite{krinner2022realizing,zhao2022realization,acharya2022suppressing,sundaresan2023demonstrating}.

Stabilizer codes admit logical Pauli operators $\overline{X}$ and $\overline{Z}$ that allow logical computation.
For the surface code, they take the form
\begin{equation}\label{eq:logical_pauli}
    \overline{X} = \prod_{i=1}^w X_i; \quad \overline{Z} = \prod_{i=1}^w Z_i \,,
\end{equation}
such that they follow the Pauli algebra, i.e.~they anti-commute with each other $\overline{X}\overline{Z} + \overline{Z}\overline{X} = 0$ and $[\overline{X},Q] = [\overline{Z},Q] = 0$ for all $Q \in \SC$.
In general, the weight of $\overline{X}, \overline{Z}$ is at least equal to the code distance, $w \geq D$, meaning that an undesired logical operator can cause a logical error.
The logical state $\ket{\overline{0}}$ corresponds to all physical qubits being prepared in state $\ket{0}^{\otimes N}$ followed by a projection on the $(+1)$-eigenspace of all stabilizers,
\begin{equation}\label{eq:zerological}
    \ket{\overline{0}} \propto \prod_{P \in \SC} (\id^{\otimes N} + P) \ket{0}^{\otimes N}\,.
\end{equation}
The logical state $\ket{\overline{1}}$ is then simply defined as $\overline{X}\ket{\overline{0}}$.

Another important class of codes is the family of color codes~\cite{bombin2006topological,bombin2013self,bombin2015gauge,kubica2015universal,brown2016fault}, which is seeing significant experimental realizations~\cite{mayer2024benchmarking,bluvstein2023logical,postler2022demonstration,nigg2014quantum,pogorelov2024experimental}. 
On a 2-dimensional lattice, color codes have the characteristic property that Clifford operations, i.e.~the operations that preserve the Pauli group, can be implemented transversally.
An operation is transversal if the $i$'th physical qubit of a code can only interact with the $i$'th qubits of other codes~\cite{gottesman2009introduction}.
This property is important as it ensures that errors propagate only locally.
The Clifford gates on physical qubits are generated by $\langle H, S, CNOT \rangle$, so the logical Clifford gate generators are denoted by $\langle \overline{H}, \overline{S}, \overline{CNOT} \rangle$, where, for color codes, one can choose to express them as

\begin{equation} \label{eq:transgen}
    \begin{split}
    \overline{H} & = \prod_{i=1}^w H_i; \quad
    \overline{S} = \prod_{i=1}^w Z_i S_i; \\
    \overline{CNOT}_{a b} & = \prod_{i=1}^w CNOT_{a(i)\hspace{1pt} b(i)} \; ,
    \end{split}
\end{equation}
where $a,b$ index logical qubits and $a(i),b(i)$ index their constituent $i$'th physical qubit.
The Steane code~\cite{steane1996multiple} is the simplest example of a 2-dimensional color code, with $w=7$.

The planar surface code admits transversal $CNOT$ gates, which facilitates 2-qubit logical operations.
It has further been shown that variants of the surface code are equivalent to color codes up to local unitary transformations~\cite{kubica2015unfolding}, a result that may illuminate the transversal implementation of certain logical gates on the surface code, such as the controlled-controlled-$Z$ gate~\cite{vasmer2019three} for the 3-dimensional surface code.

\subsection*{Clifford gates on clean-noisy qubit pairs}\label{sec:clifford-gates}

We want to explore the interaction between a noisy qubit in state $\rho$ which can be subjected to noise and an encoded qubit in state $\overline{\psi}$ which undergoes some QEC cycle that aims to detect and correct errors according to a QEC code.
In particular, we want to find an expression for the logical $CNOT$ gate between a clean and a noisy register, which we denote as $\widetilde{CNOT}$. 
This gate, along with single-qubit gates, allows for implementing any Clifford operation on a partially error-corrected setup.

The effect of a physical $CNOT$ gate between two qubits can be fully described by its action on the 2-qubit Pauli operators, summarized as
\begin{align} \label{eq:unencoded}
    &\id \otimes X \rightarrow \id \otimes X; \quad &&X \otimes \id \rightarrow X \otimes X; \nonumber\\
    &\id \otimes Z \rightarrow Z \otimes Z; \quad &&Z \otimes \id \rightarrow Z \otimes \id,
\end{align}
where the first qubit acts as the control and the second as the target.
We want to achieve the same logical operation when the control qubit is in a noisy state $\rho$ and the target qubit is in the encoded state $\overline{\psi}$.
Namely, we need to find a sequence of physical operations that achieves
\begin{align} \label{eq:encoded}
    &\id \otimes \overline{X} \rightarrow \id \otimes \overline{X}; \quad &&X \otimes \overline{\id} \rightarrow X \otimes \overline{X}; \nonumber\\
    &\id \otimes \overline{Z} \rightarrow Z \otimes \overline{Z}; \quad &&Z \otimes \overline{\id} \rightarrow Z \otimes \overline{\id}.
\end{align}

As a guiding example, we first demonstrate this on the bit-flip repetition code, which corrects for single-qubit $X$ errors.
The encoding is $\ket{\overline{0}} = \ket{000}$ and $\ket{\overline{1}} = \ket{111}$, so that a general single-qubit state $\ket{\psi} \coloneqq \alpha \ket{0} + \beta \ket{1}$ can be encoded as
\begin{equation}
    \ket{\psi} \rightarrow \ket{\overline{\psi}}_{\rm bit} = \alpha \ket{\overline{0}} + \beta \ket{\overline{1}} = \alpha \ket{000} + \beta \ket{111},
\end{equation}
and $\overline{\psi}_{\rm bit} \coloneqq \ketbra{\overline{\psi}_{\rm bit}}{\overline{\psi}_{\rm bit}}$. 
We assume a pure encoded state for clarity of exposition.
Using index $0$ for the noisy qubit and $1, \dots, 3$ for the physical qubits that make up the clean qubit, as in Fig.~\ref{fig:bit-flip}, the logical Pauli gates are $\overline{X} = X_1X_2X_3$ and $\overline{Z} = Z_1Z_2Z_3$.

\begin{figure}[t]
    \centering
    \includegraphics[width=0.9\linewidth]{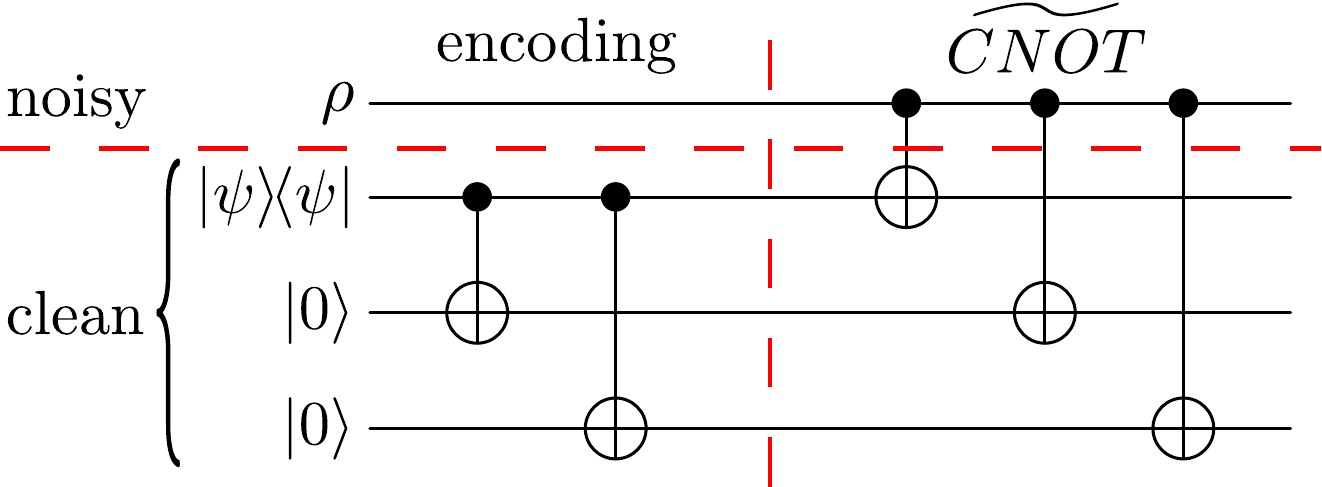} 
    \caption{ {\bf Implementation of a clean-noisy gate.} Diagram illustrating the interaction between a noisy and an encoded qubit, where the code is the bit-flip repetition code.}
    \label{fig:bit-flip}
\end{figure} 

In this example, it turns out we can simply express the gate $\widetilde{CNOT}$ as a sequence of the physical gates $CNOT_{01}, CNOT_{02}, CNOT_{03}$, so that it has the desired effect on $\rho \otimes \overline{\psi}$, as given in Eq.~(\ref{eq:encoded}).
Another instructive example is the phase-flip repetition code, which has logical Pauli $X$ operator given by $\overline{X} = Z_1Z_2Z_3$, so the gate $\widetilde{CNOT}$ must be implemented as a sequence of the physical controlled-Z gates $CZ_{01}, CZ_{02}, CZ_{03}$.

This result can be generalized for any code that admits transversal implementation of the logical Pauli operators, such as surface codes and color codes.
\begin{theorem}\label{thm:cnotboundary}
    Let $C$ be an $n$-qubit circuit, with noisy qubits on registers $1, \dots, n_d$ and clean qubits on registers $n_d+1, \dots, n$. 
    Suppose that the clean qubits are encoded by a given QEC code that admits logical Pauli operators with weight $w$, as expressed in Eq.~(\ref{eq:logical_pauli}).
    
    Then, all Clifford operations on $C$ are generated by 
    \begin{equation} \label{eq:mixed_cliff_gen}
        \langle H_i, S_i, CNOT_{ij}, \widetilde{CNOT}_{i a}, \overline{H}_{a}, \overline{S}_{a}, \overline{CNOT}_{a b} \rangle,
    \end{equation}
    where $1 \leq i,j \leq n_d$,\quad $n_d+1 \leq a,b \leq n$ and the gate
    \begin{equation}
        \widetilde{CNOT}_{i a} \coloneqq \ketbra{0}{0}_i \otimes \id_a + \ketbra{1}{1}_i \otimes \overline{X}_a
    \end{equation}
    can be implemented using exactly $w$ physical two-qubit gates.
\end{theorem}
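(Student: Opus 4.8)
The plan is to split the statement into its two independent claims and dispatch them in turn: the explicit $w$-gate realization (and optimality) of $\widetilde{CNOT}_{ia}$, and the assertion that the displayed set generates the full logical Clifford group on $C$. For the realization I would simply lift the bit-flip calculation to a general transversal code. Since the code is transversal, $\overline{X}_a=\prod_{k=1}^{w}X_{a(k)}$ is a product of $w$ commuting single-qubit $X$ operators, so applying $CNOT_{i\,a(k)}$ for each physical target qubit $a(k)$ and iterating Eq.~(\ref{eq:unencoded}) once per qubit gives $\prod_{k=1}^{w}CNOT_{i\,a(k)}=\ketbra{0}{0}_i\otimes\id_a+\ketbra{1}{1}_i\otimes\overline{X}_a=\widetilde{CNOT}_{ia}$. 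This already exhibits an implementation with exactly $w$ physical two-qubit gates.

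For the word ``exactly'' I would then argue that no fewer than $w$ two-qubit gates can suffice. If the decomposition is restricted to Clifford gates, the cleanest route is a Heisenberg-picture weight count: $\widetilde{CNOT}_{ia}$ conjugates $X_i$ to $X_i\overline{X}_a$, a Pauli of weight $1+w$; since single-qubit gates preserve Pauli weight and a two-qubit Clifford changes it by at most one, at least $w$ two-qubit gates are needed. To cover arbitrary (non-Clifford) two-qubit gates I would instead use a locality argument: viewing the circuit's two-qubit gates as edges of an interaction graph on the $w+1$ essential qubits $\{i,a(1),\dots,a(w)\}$, I would check that $\widetilde{CNOT}_{ia}$ has operator-Schmidt rank two across every bipartition of these qubits (the branches $\ketbra{0}{0}_i\otimes\id_a$ and $\ketbra{1}{1}_i\otimes\overline{X}_a$ never factorize), so it does not split as a tensor product across any cut. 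Hence the essential qubits lie in a single connected component, and a connected graph on $w+1$ vertices has at least $w$ edges, matching the upper bound.

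For the generation claim I would invoke the standard fact that the $n$-qubit Clifford group is generated by $H$ and $S$ on every qubit together with $CNOT$ on every pair. By the transversality hypothesis of Eq.~(\ref{eq:transgen}), the gates $\overline{H}_a,\overline{S}_a,\overline{CNOT}_{ab}$ act on the clean logical qubits exactly as the bare Cliffords act on unencoded qubits, so the clean register already carries the full logical Clifford group; likewise $H_i,S_i,CNOT_{ij}$ handle the noisy register. It then remains to produce every mixed $CNOT$: the generator $\widetilde{CNOT}_{ia}$ supplies the noisy-control/clean-target direction, and the reverse direction follows from the conjugation identity $\widetilde{CNOT}_{ai}=(H_i\overline{H}_a)\,\widetilde{CNOT}_{ia}\,(H_i\overline{H}_a)$, which swaps control and target via Hadamards. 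Having $H,S$ on all $n$ logical qubits and $CNOT$ between every pair, the listed set generates the full Clifford group on $C$.

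The routine parts are the upper-bound construction and the group-theoretic bookkeeping. The main obstacle I anticipate is the optimality claim: making ``exactly $w$'' rigorous for an unrestricted physical gate set, rather than only for the specific $CNOT$ construction, is what forces the operator-Schmidt/connectivity argument above, and verifying non-factorization across \emph{all} bipartitions (not merely the single-qubit cuts) is the delicate step.
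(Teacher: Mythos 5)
Your proposal is correct and follows essentially the same route as the paper: the same transversal product of $w$ physical $CNOT_{i\,a(k)}$ gates for the construction, the same Hadamard-conjugation identity to reverse control and target, and the same reduction of the generation claim to the standard Clifford generators on each register plus the mixed $\widetilde{CNOT}$. The only divergence is that the paper justifies the ``at least $w$ two-qubit gates'' claim with a one-sentence weight-counting assertion, whereas you supply two genuine arguments (Heisenberg-picture Pauli-weight growth for Clifford decompositions, and the operator-Schmidt/connectivity argument for arbitrary gates), which is strictly more rigorous than what the paper provides.
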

\begin{proof}
   All Clifford gates between noisy qubits are generated by $\langle H, S, CNOT \rangle$ and all Clifford gates restricted on clean qubits are generated by $\langle \overline{H}, \overline{S}, \overline{CNOT} \rangle$. 

   We can then generate all Clifford gates between a noisy qubit on register $i$ and a clean qubit on register $a$ by combining single-qubit gates 
   \begin{equation}
       \langle H_i \otimes \overline{\id}_a, \id_i \otimes \overline{H}_a, S_i \otimes  \overline{\id}_a, \id_i \otimes \overline{S}_a \rangle
   \end{equation}
    with $\widetilde{CNOT}_{i a}$.
    In particular, we can change the role of control and target, 
    \begin{equation}\label{eq:Hproperty}
        \widetilde{CNOT}_{a i} = (H_i \otimes \overline{H}_a) \widetilde{CNOT}_{i a} (H_i \otimes \overline{H}_a),
    \end{equation}
    hence we can generate any Clifford operation.
    
    Assuming we do not have access to $m$-qubit physical gates for $m>2$, we need at least $w$ gates to implement the logical operator $\overline{X}_a$ on register $a$, controlled on register $i$. 
    Let the logical $X$ operator be 
    \begin{equation}
        \overline{X}_a = \bigotimes_{\kappa=1}^w P_{a(\kappa)}
    \end{equation}    
    where $P_{a(\kappa)}$ denotes the Pauli operator $P$ on the $\kappa$'th physical register of the $a$'th logical qubit.
    It then suffices to use the following sequence of $w$ physical controlled gates to implement a $\widetilde{CNOT}_{i a}$,
    \begin{equation}
        \widetilde{CNOT}_{i a} = \prod_{\kappa=1}^w \Big( \ketbra{0}{0}_i \otimes \id_{a(\kappa)} + \ketbra{1}{1}_i \otimes P_{a(\kappa)} \Big).
    \end{equation}
    Similarly, the most efficient implementation of the target-control gate $\widetilde{CNOT}_{a i}$ is the sequence
    \begin{equation}
        \widetilde{CNOT}_{a i} = \prod_{\kappa=1}^w \Big( \id_i \otimes \ketbra{+1}{+1}_{a(\kappa)} + X_i \otimes \ketbra{-1}{-1}_{a(\kappa)} \Big),
    \end{equation}
    where we have introduced the $\pm 1$ eigenstates $\ket{\pm 1}_{a(\kappa)}$ of the 1-qubit Pauli operator $\overline{Z}_a(\kappa)$ for $\kappa = 1, \dots, w$.
    This expression coincides with Eq.~(\ref{eq:Hproperty}) for a transversal logical Hadamard gate which satisfies $\overline{H}_a \overline{X}_a \overline{H}_a = \overline{Z}_a$.
    \end{proof}

We stress that although the construction of the $\widetilde{CNOT}$ gate is similar to the transversal construction of the $\overline{CNOT}$ logical gate, the gate $\widetilde{CNOT}$ is not fault-tolerant, as an error on the noisy register can spread to many clean registers.
Sacrificing fault-tolerance is inevitable, given the need for at least one gate connecting clean and noisy registers.

Finally, we note that arbitrary and universal logical operations can be achieved within the framework presented in this section by introducing non-Clifford gates, such as the $T$-gate.
The least resource-demanding method allowed by our framework is the introduction of non-Clifford gates on the noisy registers of the circuit.
This is sufficient to promote Clifford gates to universality~\cite{bravyi2005universal}.
However, one can also implement non-Clifford gates directly on clean registers via one of several methods known in the quantum error correction community. 
A generally applicable method is magic state distillation~\cite{bravyi2005universal}, where a significant amount of noisy magic states are required to prepare certain magic states with low error.
This method can complement the Clifford transversality offered by 2-dimensional color codes to achieve universal quantum computation.
It is also well-suited for our framework, as the entire distillation process can be realized by transversal Clifford operations.
Other methods include code switching~\cite{pogorelov2024experimental,connor2014using} which circumvents the Eastin-Knill restriction~\cite{eastin2009restrictions} by choosing two codes with complementary sets of transversal gates; gauge fixing~\cite{paetznick2013universal} that allows transversal implementation of the Hadamard and the $CCZ$ gates which together are universal; and color codes of higher dimensions~\cite{bombin2015gauge}, which can be viewed as an application of gauge fixing.
These methods may generally admit a transversal gate set different to the Clifford group, hence our framework would need to be adapted to the new transversal gate set, which lies outside the scope of this paper.

\subsection*{Part B: Analytic bound on concentration scaling}
\label{sec:analytic-scaling}

We now present our analytic model used to glean insights on the partial error correction framework. First, we discuss an effective logical noise model, before presenting our lower bound on the convergence of random circuit outputs to uniform under this effective noise model. Finally, we discuss the possibility of this lower bound on convergence surpassing known upper bounds for fully noisy circuits, thus illuminating a possible scaling advantage of the partial error correction model over fully noisy circuits, given sufficiently many clean qubits.

\subsection*{Noise model for clean-noisy circuits}\label{sec:noise-model}

In order to fairly compare the clean-noisy setup with equivalent noisy circuits, we need to understand the effective action of physical noise on a partially error-corrected circuit. 
Specifically, our clean-noisy setup introduces a complex type of error propagation via clean-noisy couplings.
We now provide a heuristic approach towards modeling the noise induced by clean-noisy couplings and explore how they affect Pauli error propagation throughout the circuit.

We consider local Pauli noise channels, defined by their action on a single-qubit state $\rho$ as
\begin{align}\label{eq:paulinoise}
    &\PC_k(\rho; p_X, p_Y, p_Z) = \nonumber\\
    &\hspace{10pt} \left( 1 \hspace{5pt}-\hspace{-10pt} \sum_{Q \in \{X,Y,Z\}} \hspace{-5pt}p^{(k)}_Q \right) \rho \hspace{5pt}+\hspace{-10pt} \sum_{Q \in \{X,Y,Z\}} \hspace{-5pt}p^{(k)}_Q Q \rho Q,
\end{align}
where the subscript $k \in \{c,b,d\}$ will label one of three specific Pauli noise channels we consider in Fig.~\ref{fig:gate_types}. The action of the Pauli noise channel is to map non-identity Pauli operators as $Q \rightarrow (1-\varepsilon_Q^{(k)}) Q$, where $\varepsilon_Q^{(k)}= 2 p^{(k)}_{Q'} + 2 p^{(k)}_{Q''}$ for $\{Q,Q',Q''\} = \{X,Y,Z\}$.
We denote
\begin{equation}
    \varepsilon_k = 1- \min_{Q\in {\{X,Y,Z\}}}|1-\varepsilon^{(k)}_Q|,
\end{equation}
and hereon refer to this quantity as the (maximum scaled) error rate. This characterizes the rate at which non-identity Pauli operators are damped due to the noise, but does not take into account any phase factors they incur.
The depolarizing noise channel on a single-qubit quantum state $\rho$ corresponds to $\varepsilon^{(k)}_X = \varepsilon^{(k)}_Y = \varepsilon^{(k)}_Z = \varepsilon_k$, and we denote its action as
\begin{equation} \label{eq:depol_noise}
    \D_k(\rho) = (1-\varepsilon_k)\rho + \varepsilon_k\frac{\id}{2}.
\end{equation}
In SI Sec.~I (Supplementary Information Section I), we extend the scope of our proofs to arbitrary Pauli noise, which we define precisely. That may include correlations between two logical registers, as is common in the analysis of quantum error correction code thresholds~\cite{stephens2014fault}. However, for clarity of exposition, we focus on single-qubit Pauli noise in the main text.

We assume that physical gate noise is the dominant noise process on both clean and noisy registers, and that the effects of this noise can be characterized by instances of single-qubit Pauli noise that follow application of the ideal gate. It is simple to model noise for all-noisy systems, but considerations are not so obvious when one considers error correction on part of the circuit architecture. When modeling noisy two-qubit gates, we first assume that $CNOT$ gate error dominates the noise process, and, secondly, that the effective error rate of two-qubit gates scales proportionally to the effective error rate of a $CNOT$ gate. 
We note that arbitrary two-qubit gates can be compiled from at most 3 $CNOT$s and single-qubit gates~\cite{shende2004minimal}.

\begin{figure}[t]
    \includegraphics[width=0.9\linewidth]{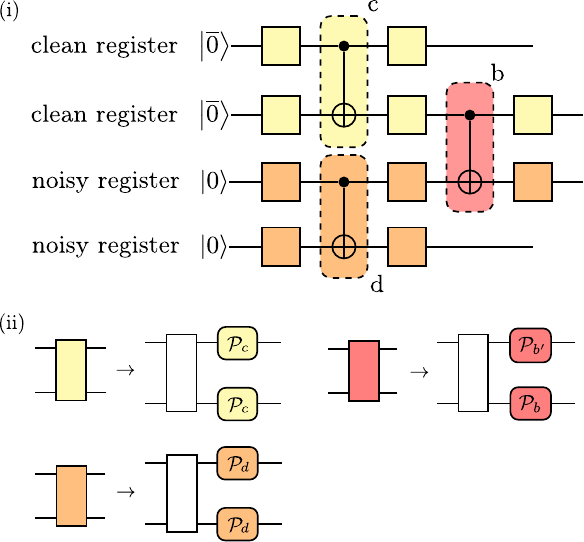}
    \caption{\textbf{Heuristic noise model for the clean-noisy setup.} 
     In the partial error correction model there are two types of logical single-qubit gates with corresponding effective noise models to consider: gates acting on noisy registers, and gates acting on error-corrected (``clean'') registers plotted in~\textbf{(i)} as orange and yellow squares, respectively.  Furthermore, we identify \textit{three} types of logical two-qubit gates which we label as ($d,c,b$) and show in \textbf{(i)} for the example of $CNOT$ gates. 
     These $CNOT$ gates along with single-qubit gates complete a universal gate set (Theorem~\ref{thm:cnotboundary}), and we assume them to be the dominant source of error.
     Type-$d$ $CNOT$s act on noisy registers (shaded orange in the plot) comprising 1 physical gate.
     Type-$c$ $CNOT$s (also written as $\overline{CNOT}$) act upon clean registers (shaded yellow) comprising $w$ physical gates implemented transversally according to the code.
     Finally, type-$b$ $CNOT$s (also written as $\widetilde{CNOT}$) on the boundary act on a noisy and a clean register (shaded red) comprising $w$ physical gates implemented according to Theorem~\ref{thm:cnotboundary}.
     In \textbf{(ii)}, we show schematically our effective gate error model where noisy two-qubit gates are modeled as ideal unitary gates followed by single-qubit Pauli noise of the form in Eq.~(\ref{eq:paulinoise}). The effective noise in our model acts on the logical (noisy or clean) registers, and not on the underlying physical registers comprising the clean qubits. In SI Sec.~I, we extend this noise model by considering arbitrary multi-qubit Pauli channels following the ideal logical gates instead of single-qubit Pauli channels.}
     \label{fig:gate_types}
\end{figure} 

In Fig.~\hyperref[{fig:gate_types}]{2(i)} we identify three different types of logical $CNOT$ gates, each of which has a different physical implementation. These lead to three types of two-qubit gates to consider, each with their own effective noise model, which we outline in Fig.~\hyperref[{fig:gate_types}]{2(ii)}. The first type of $CNOT$ gate we identify acts on two noisy qubits (labeled~$d$ in Fig.~\hyperref[{fig:gate_types}]{2(i)}). We characterize the noisy implementation of this gate with an error rate which reflects the underlying physical error rate. 
The second type of $CNOT$ gate acts only on clean qubits (labeled $c$) and are described by a reduced logical error rate compared to the type~$d$ gates, representing an imperfect QEC cycle following the gate implementation.
The third $CNOT$ gate acts on a boundary (labeled $b$), i.e.~coupling one clean qubit to one noisy qubit.
In this case, we model the effects of noise as local Pauli channels with augmented error rates, reflecting the fact that multiple physical $CNOT$s coupled to one noisy qubit are required for the implementation of the logical $CNOT$ gate acting on the boundary. 
We proceed to elaborate upon and justify these choices. 

The modeling of gates acting on noisy registers (colored orange in Fig.~\hyperref[{fig:gate_types}]{2(ii)}) is straightforward. Such a noisy gate is modeled as a sequence of the ideal gate followed by a local Pauli noise channel with an error rate of $\varepsilon_d$. We equate $\varepsilon_d$ with the effective base physical error rate of the device.

Gates acting only on clean registers (yellow in Fig.~\hyperref[{fig:gate_types}]{2(ii)}) are implemented according to the 
error-correcting properties of the code.
Given an $[[N,1,d]]$  QEC code, each clean register can be thought of as an encoding of $N$ underlying noisy physical registers. 
We describe local physical noise $\PC_d^{\otimes N}$ on the underlying registers by an effective logical noise channel $\PC_c$, which must have a lower error rate $\varepsilon_c < \varepsilon_d$ due to the QEC process. 
In the absence of unitary dynamics, i.e. when no logical gates are applied, the form of the idling channel $\PC_c$ was calculated in Ref.~\cite{rahn2002exact} taking into account the structure of stabilizer codes.
In particular, it is shown that physical Pauli noise is represented by effective logical Pauli noise which maps the code space back into the code space.
For example, following the discussion for the Steane code in Sec.~VI of Ref.~\cite{rahn2002exact}, given physical idling noise in the form of a Pauli channel with error rate $\varepsilon_d$, we find an effective logical error rate $\varepsilon_c = 42\varepsilon_d^2 +O(\varepsilon_d^3)$ achievable with perfect encoding and decoding.
In the presence of unitary dynamics, effective logical noise is decohered by the decoding process for stabilizer codes, tending to Pauli noise~\cite{beale2018quantum}.
Thus, we deem it reasonable to model imperfect two-qubit gates on clean registers as the ideal gate, followed by a Pauli noise channel of some reduced error rate $\varepsilon_c < \varepsilon_d$.

Two-qubit gates acting on the clean-noisy boundary i.e.~on a noisy and a clean register (red in Fig.~\hyperref[{fig:gate_types}]{2(ii)}) are implemented according to Theorem~\ref{thm:cnotboundary}.
The implementation of each $\widetilde{CNOT}$ gate takes $w$ physical gates according to Theorem~\ref{thm:cnotboundary}, each of which involves the noisy qubit. 
Therefore, calculating the effective logical error rate of the $\widetilde{CNOT}$ due to noisy physical $CNOT$ gates results in complications in this picture. This is due to non-trivial commutations between the actions of the physical $CNOT$ gates and the physical noise channels on the noisy register.
For this reason, we adopt a heuristic noise model for our analytic results by modeling a two-qubit gate acting on a boundary by the ideal gate followed by Pauli noise channels with fixed but different error rates on the logical clean and noisy registers.
We note that in Results Part C we directly simulate the effect of Pauli noise on partially error corrected circuits and our findings qualitatively agree with the analytic results that we derive using the heuristic noise model laid out in this section. 

In the heuristic model, effects of the noise on the clean and noisy registers are approximated by Pauli noise channel with new error rates $\varepsilon_b,\varepsilon_{b'}$, as shown schematically in Fig.~\hyperref[{fig:gate_types}]{2(ii)}. 
On the noisy register, since we have $w$ physical $CNOT$ gates acting on it, a first order approximation for the error rate $\varepsilon_b$ on the noisy register is that the effective Pauli noise channel is $w$ times more likely to produce an error than a single physical Pauli noise channel, and therefore we can think of $\varepsilon_b = w\varepsilon_d$. 
However, our following derivations need not make use of any specific expression for $\varepsilon_b$ in terms of $\varepsilon_d$.

We also set an effective error rate $\varepsilon_{b'}$ on the clean register after a clean-noisy coupling.
It is evident that $\varepsilon_{b'} < \varepsilon_b$ should hold, as the same gates are involved in both registers, but the clean register undergoes a QEC cycle.
We also expect that $\varepsilon_{b'} > \varepsilon_c$ because in a clean-clean coupling each underlying physical qubit comprising the clean registers is acted upon by a physical $CNOT$ exactly once during the implementation of a $\overline{CNOT}$. 
This is due to the transversal property of the gate.
On the other hand, in a clean-noisy coupling, the noisy register is used multiple times during the implementation of a $\widetilde{CNOT}$.

We further expect the stronger condition that $\varepsilon_{b'} > \varepsilon_{d}$, as the implementation of a $\widetilde{CNOT}$ requires $w \ge d$ gates, where $d$ denotes the code distance, and the number of correctable errors is $t = \lfloor (d-1)/2 \rfloor < w/2$.
Therefore, an error occurring on the noisy register due to one of the first $w/2$ physical $CNOT$ gates, will propagate to more than $t$ physical registers comprising the clean qubit, hence it is more likely that a logical error occurs compared to the noisy-noisy case.
For example, consider the implementation of $\widetilde{CNOT}$ with the repetition code in Fig.~\ref{fig:bit-flip}.
If an error occurs on the first (or second) physical $CNOT$, then the noisy qubit as well as the first (or second) physical register of the clean qubit will be flipped.
As the noisy qubit has flipped, this will likely also lead to a flip on the subsequent second and third (or third) physical registers of the clean qubit, resulting in a logical error of weight 2 (or 3).
On the contrary, if the error occurred on the third physical $CNOT$, then the logical error on the clean qubit is of weight 1 and thus detectable by the repetition code. 

We employ the above noise model in the analytic derivations which we present in the remainder of Results Part B.
In Results Part C, we then test our noise assumptions in our numerical simulations, corroborating our analytic results for a more realistic setting.
To this end, we investigate the performance of partial error correction with realistic quantum error correction using the Steane code. 
In particular, a logical gate is simulated as a sequence of noisy physical gates with physical Pauli noise. Furthermore, we take idling noise into account and study its effects on the fraction of clean qubits required for an advantage over an all-noisy device.

\subsection*{Random circuit model}
\label{sec:rcmodel}

Here, we study random brick-layered circuits using the heuristic noise model discussed thus far in order to understand how the number of clean qubits affects the distinguishability of circuit outputs with increasing gate depth. 

In this model, we consider a brick-layered circuit $C$ composed of alternating two-qubit blocks (see Fig.~\ref{fig:error_scramble_setup}). We model the explicit constructions of Sec. ``Clifford gates on clean-noisy qubit pairs'' by working at the level of logical operations, and adopting the model of effective noise as laid out in Sec. ``Noise model for clean-noisy circuits''. Specifically, as discussed, we consider an instance of local depolarizing noise to occur after each brick with different error rates assigned to each qubit type. Namely, we assign a low depolarizing probability $\varepsilon_{c}$ to the error-corrected (``clean'') qubits; we assign a high depolarizing probability $\varepsilon_{d}$ to the bulk of the fully noisy qubits; and finally, we assign higher depolarizing probabilities $\varepsilon_{b},\varepsilon_{b'}$ to the qubits at the boundary between clean and noisy registers. 

In this analysis, we focus on the trace distance of the output state of a circuit $\rho(C)$ to the maximally mixed state, which we denote as $T\big(\rho(C), {\id}/2^n\big)$. This is an important quantity to study due to its operational meaning as the measure of maximal distinguishability between the output state and the maximally mixed state, for any observable \cite{fuchs1999cryptographic}. The case of a fully noisy circuit with local depolarizing noise as in Eq.~(\ref{eq:depol_noise}), where $\varepsilon_{c} = \varepsilon_{b} = \varepsilon_{b'} = \varepsilon_{d}$, has been studied extensively in the past \cite{aharonov1996limitations, ben2013quantum, muller2016relative}. For instance, it is known that in this setting  
\begin{equation}\label{eq:upper-bound}
    T\big(\rho({C}), \frac{{\id}}{2^n}\big) \leq \sqrt{\ln 4 \cdot n}\;(1-\varepsilon_{d})^{L} \,,
\end{equation}
for all noisy circuits ${C}$ of $L$ layers (using the result of Ref.~\cite{muller2016relative} along with the quantum Pinsker's inequality~\cite{ohya2004quantum}), where a layer is defined as a set of unitary gates after which an instance of tensor-product local noise occurs. This implies that, in the fully noisy case, the output state is operationally exponentially hard to distinguish from white noise with increasing circuit depth. How much can this improve with a partially error-corrected setup?

We prove a lower bound on the trace distance of the output state $\rho(C)$ of the brick-layered circuit $C$ to the maximally mixed state, denoted $T\big(\rho(C), {\id}/2^n\big)$, averaged over the ensemble of circuits $C\sim\YC$ such that each local two-qubit unitary brick forms a 2-design. 
We recall that in practical terms, such arbitrary  gates can be obtained with at most 3 $CNOT$ gates~\cite{shende2004minimal}.
Since the Clifford group is a 2-design~\cite{divincenzo2002quantum}, the averaging can also be performed by sampling only over two-qubit Clifford gates.
We establish this bound by proving a stronger bound on the total variation distance $\delta\big(\rho(C), {\id}/{2^n}\big)$, which is proportional to the $\ell_1$ (taxicab) distance between the output probability distribution of $\rho(C)$ from the uniform distribution.

\begin{figure}[t]
\begin{center}
\includegraphics[width=0.95\linewidth]{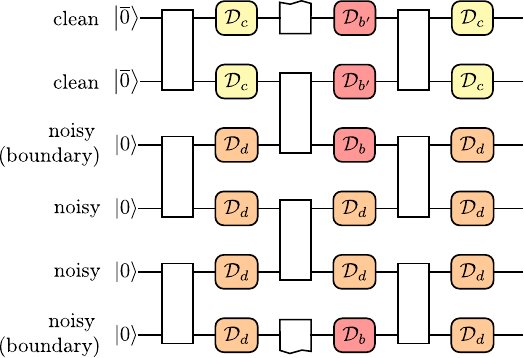}
\caption{\textbf{Random circuit model.} Our analytic result considers the average performance across an ensemble of logical brick-layered circuits composed of alternating layers of two-qubit gates on a one-dimensional  ring of qubits. The circuits consist of a collection of noisy logical registers and a collection of error-corrected (``clean'') logical registers, and we consider a single-qubit effective noise model with the four different error rates introduced in Fig.~\ref{fig:gate_types}. 
We note that in the example circuit depicted here, there is on average one clean-noisy coupling per layer, at each boundary between clean and noisy registers.} 
\label{fig:error_scramble_setup}
\end{center}
\end{figure}

\begin{theorem}\label{thm:lower-bound}
    Consider the brick-layered circuit with periodic boundary conditions in Fig.~\ref{fig:error_scramble_setup} on an even number of qubits $n$ and with an even number of layers $L$, under the local depolarizing noise model of Eq.~(\ref{eq:paulinoise}), characterized by three error rates $\varepsilon_c, \varepsilon_d, \varepsilon_b$, where we set $\varepsilon_{b'}=\varepsilon_{b}$ for simplicity.
    We consider the average behavior of circuits $C$ drawn over an ensemble of circuits $\YC$ such that each local two-qubit unitary forms a 2-design, which we denote as $\mathbb{E}_{C\sim \YC}[\cdot]$.
    The output state $\rho(C)$ satisfies 
    \begin{align}\label{eq:lower-bound}
        \underset{C\sim \YC}{\mathbb{E}} &\Big[T\big(\rho(C), \frac{\id}{2^n}\big)\Big] \geq \underset{C\sim \YC}{\mathbb{E}}\Big[\delta\big(\rho(C), \frac{\id}{2^n}\big)\Big] \geq \nonumber \\
        &\geq \frac{1}{12} \left(\frac{2}{5}\right)^L (1-\varepsilon_{c})^{ 2Lf_c} (1-\varepsilon_{d})^{ 2Lf_d} (1-\varepsilon_{b})^{2Lf_b}\,,
    \end{align}
    where $\YC$ is a distribution over brick-layered circuits composed of local 2-designs, and $f_c, f_d, f_b$ are fractions defined as
    \begin{align}
        f_c=\frac{n_c}{n}- \frac{n_b}{2n},\; f_d=\frac{n_d}{n}-\frac{n_b}{2n},\; f_b=\frac{n_b}{n}\,,
    \end{align}
    such that $f_c + f_d + f_b = 1$, where $n_c, n_d$ are the respective numbers of error-corrected qubits and noisy qubits satisfying $n_c+n_d=n$, and $n_b$ is the number of boundaries between the noisy and clean qubits.
\end{theorem}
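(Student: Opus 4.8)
The plan is to reduce the claim to a single controllable second moment, exploiting that the local gates are 2-designs. Throughout I write $p(x)=\langle x|\rho(C)|x\rangle$ for the output distribution, $Z_z:=\prod_i Z_i^{z_i}$ for a diagonal Pauli, and $\langle Z_z\rangle:=\Tr[Z_z\rho(C)]$. The first inequality $T\ge\delta$ is immediate, since the trace distance is the maximum over measurements of the total-variation distance of outcome distributions and $\delta$ is precisely this distance for the computational-basis measurement. For the substantive bound I fix one nonzero $z$ supported on a single clean qubit and use the Hadamard expansion $p(x)-2^{-n}=2^{-n}\sum_{z\ne0}(-1)^{x\cdot z}\langle Z_z\rangle$. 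This gives $|\langle Z_z\rangle|=\big|\sum_x(-1)^{x\cdot z}(p(x)-2^{-n})\big|\le\sum_x|p(x)-2^{-n}|=2\delta$, hence $\delta\ge\tfrac12|\langle Z_z\rangle|$ and, averaging, $\mathbb{E}_{C\sim\YC}[\delta]\ge\tfrac12\,\mathbb{E}[|\langle Z_z\rangle|]$.

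The next step handles the absolute value. Since any single 2-design gate on the support of a nonidentity Pauli averages it to zero, $\mathbb{E}[\langle Z_z\rangle]=0$ and Jensen is useless. Instead I use that $Z_z$ has eigenvalues $\pm1$, so $|\langle Z_z\rangle|\le1$ and therefore $\langle Z_z\rangle^2\le|\langle Z_z\rangle|$, giving $\mathbb{E}[|\langle Z_z\rangle|]\ge\mathbb{E}[\langle Z_z\rangle^2]=:M_2$. This is the key move that keeps the argument entirely at the level of second moments, matching the hypothesis that the bricks are only 2-designs (a Paley--Zygmund route through fourth moments would instead require 4-designs). It then remains to show $M_2\ge\tfrac16\,(2/5)^L(1-\varepsilon_c)^{2Lf_c}(1-\varepsilon_d)^{2Lf_d}(1-\varepsilon_b)^{2Lf_b}$.

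To estimate $M_2=\mathbb{E}\,\Tr[(Z_z\otimes Z_z)\,\rho(C)^{\otimes2}]$ I average the gates, so that each 2-design gate contributes the twofold twirl, i.e.\ the orthogonal projection onto the commutant $\mathrm{span}\{\id,\mathrm{SWAP}\}$ of $U^{\otimes2}$. Encoding each qubit's replica state as $|0)\propto I\otimes I$ or $|1)\propto\sum_{R\ne I}R\otimes R$, this projection becomes a nonnegative transfer matrix, and a short Gram-matrix computation shows that a single occupied qubit entering a two-qubit gate maps to $\tfrac15|01)+\tfrac15|10)+\tfrac{\sqrt3}5|11)$, while depolarizing noise acts diagonally as $|1)\mapsto(1-\varepsilon_q)^2|1)$. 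Because all weights are nonnegative, I lower-bound $M_2$ by retaining only the localized worldlines in which the Pauli never spreads: per brick-layer the occupied qubit either stays or hops, contributing $\tfrac15+\tfrac15=\tfrac25$, multiplied by the noise factor $(1-\varepsilon_{q_\ell})^2$ of the visited qubit. Summing these paths produces $(2/5)^L$ times a nearest-neighbour hopping-walk partition function in the noise factors, with the input/output overlaps against $\rho_0^{\otimes2}$ supplying an $O(1)$ constant.

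Factoring out $(2/5)^L$ leaves a probability average over hopping paths on the ring of $\prod_\ell(1-\varepsilon_{q_\ell})^2$. The noise model assigns rate $\varepsilon_b$ to a boundary noisy qubit only while it is coupled to a clean register, which in the brick pattern happens in alternating layers; across all (layer, site) slots the fractions carrying $\varepsilon_c,\varepsilon_d,\varepsilon_b$ are then exactly $f_c,f_d,f_b$, which is the arithmetic origin of $f_d=n_d/n-n_b/2n$ and $f_b=n_b/2n$. Applying Jensen's inequality (bounding the arithmetic mean of the products below by the exponential of the mean of the logarithms) converts this average into the geometric mean $(1-\varepsilon_c)^{2Lf_c}(1-\varepsilon_d)^{2Lf_d}(1-\varepsilon_b)^{2Lf_b}$, and collecting constants yields the stated prefactor $\tfrac1{12}=\tfrac12\cdot\tfrac16$. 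I expect this final step to be the main obstacle: rigorously lower-bounding the position-dependent hopping-walk partition function by its mean-field geometric-mean value for finite $L$ on the ring, tracking the boundary qubits' alternating coupling so the visiting frequencies come out exactly as $f_c,f_d,f_b$, and keeping both the discarded spreading contributions and the $\rho_0$ overlaps under control to secure the clean absolute constant.
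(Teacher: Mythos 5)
Your overall strategy is the same second-moment Pauli-path argument the paper uses (following Aharonov et al.): reduce $T$ to $\delta$, reduce $\delta$ to a single-qubit $Z$ expectation, pass to the second moment via 2-design twirls, keep only the weight-one (non-spreading) trajectories to get the $(2/5)^L$ factor, and finish with AM--GM. Your replica/transfer-matrix bookkeeping with states $|0)\propto I\otimes I$, $|1)\propto\sum_{R\neq I}R\otimes R$ is just a repackaging of the paper's explicit sum over Pauli trajectories $g(C,\vec s,0_i)^2$ with the $1/15$ Weingarten weights, and the constants come out the same. So the first three quarters of your argument is sound and essentially identical in content.

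The genuine gap is exactly where you flag it, and it is not a technicality. You pin the observable to \emph{one} fixed clean qubit, so your lower bound is a partition function over hopping walks of length $L$ anchored at that site. The visiting frequencies of such an anchored walk are \emph{not} $f_c, f_d, f_b$: the walk can only reach sites within distance $O(L)$ of the anchor, it oversamples the anchor's neighbourhood, and the resulting geometric mean depends on where the anchor sits relative to the clean/noisy boundary. Jensen alone cannot convert this position-dependent average into the stated exponents. The paper's proof avoids this by applying the marginal bound (its Lemma 2) to \emph{every} qubit $i$ and averaging $\frac{1}{n}\sum_{i=1}^{n}(p_0^{(i)}-\tfrac12)^2$; only after summing over all $n$ anchors does the (period-2) translation symmetry of the ring guarantee that every (site, layer) noise slot is hit the same number of times, $2^L 3^{L-1}$, across the full collection of weight-$(L+1)$ trajectories, which is precisely what makes the AM--GM step produce $\prod_{i,\ell}(1-\varepsilon_{i\ell})^{2/n}$ and hence the fractions $f_c=n_c/n$, $f_d=n_d/n-n_b/2n$, $f_b=n_b/2n$. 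To repair your proof, replace the single fixed $z$ by the uniform average over all $n$ single-site $Z$ observables (using $\delta\ge\frac1{2n}\sum_i|\langle Z_i\rangle|\ge\frac1{2n}\sum_i\langle Z_i\rangle^2$) and then run your transfer-matrix bound on the aggregate; as written, the claimed exponents do not follow.
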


Theorem \ref{thm:lower-bound} demonstrates that the concentration of circuit outputs on average is somewhat mitigated by the number of clean (``error-corrected'') qubits. However, this comes at a cost for each boundary between noisy and clean qubits, which each contribute a strong damping factor $(1-\varepsilon_b)^{2L}$ to the bound. We emphasize that this result holds for any configuration of noisy and clean qubits on the considered one-dimensional architecture.
In fact, we are able to prove a stronger result, where this lower bound on the total variation distance is obtained for any space and time distribution of local \textit{Pauli} noise in the circuit. The local Pauli noise model we consider includes the slight generalization with $\varepsilon_{b'}\neq\varepsilon_{b}$, as well as standard noise models such as single-qubit bit-flip noise, and correlated 2-qubit Pauli errors. This comes by extending the techniques of Ref.~\cite{aharonov2022polynomial} to circuit architectures that include local noise channels of different error rates throughout the circuit, and we present the details in SI Sec.~I.

We remark that noisy random circuits have been studied in many other contexts \cite{dalzell2022randomquantum, deshpande2022tight, aharonov2022polynomial} under local noise models with fixed error rates across the circuit. Theorem \ref{thm:lower-bound} demonstrates that the qualitative intuitions from these previous results hold also with variable error rates, which is that the decay of the total variation distance can be thought of as being characterized by the total number of error instances in the circuit. 

\subsection*{Threshold on clean qubits for scaling advantage}
\label{sec:logicalthreshold}

We remark that our lower bound in Eq.~(\ref{eq:lower-bound}) can surpass the all-noisy upper bound in Eq.~(\ref{eq:upper-bound}) for certain problem parameters. This indicates that our model of a partially error-corrected system can display a concrete advantage compared to a fully noisy system in distinguishing observables with increasing depth $L$ when the number of clean qubits $n_c$ lies above some threshold.

Concrete analysis of exactly what parameters the lower bound of Eq.~(\ref{eq:lower-bound}) surpasses the upper bound of Eq.~(\ref{eq:upper-bound}) may not be particularly insightful, as these bounds may not be tight. For instance, we anticipate that the factor of $(2/5)^L$ in our lower bound in Eq.~(\ref{eq:lower-bound}) should not be present in the true scaling, as when the circuit forms a global 2-design (achievable in depth exponential in $n$) the noise-free concentration is at worst exponential in $n$ \cite{mcclean2018barren}.

However, we recall Theorem \ref{thm:lower-bound} generalizes the result of Ref.~\cite{aharonov2022polynomial}, which is, to our knowledge, the tightest known lower bound for fully noisy circuits, and exactly reduces down to this result in the fully noisy regime ($n_c,n_b=0$). Assuming that these bounds contain key features of the scaling behavior with increasing $L$, we can observe when the lower bound Eq.~(\ref{eq:lower-bound}) in the clean-noisy setting surpasses the same lower bound in the fully noisy regime. 

Perhaps surprisingly, this transition does not come trivially with one clean qubit -- this is due to the coupling of the error-corrected qubit to a noisy qubit having a high effective error rate. 
Concretely, we require
\begin{equation}
    \frac{(1-\varepsilon_{c})^{2Lf_c}(1-\varepsilon_{d})^{2Lf_d} (1-\varepsilon_{b})^{2Lf_b}}{(1-\varepsilon_{d})^{2L}} \geq 1 \,,
\end{equation}
which is achieved when
\begin{equation}
    n_c \geq n_b\, \frac{\tfrac{1}{2}\log (1-\varepsilon_{c}) + \tfrac{1}{2}\log (1-\varepsilon_{d}) - \log (1-\varepsilon_{b})}{\log (1-\varepsilon_{c}) - \log (1-\varepsilon_{d})} \,. \label{eq:threshold}
\end{equation}

We have thus obtained a lower bound which shows that the partial error-corrected model strictly outperforms the fully-noisy model when the number of clean qubits lies above a threshold independent of the qubit number $n$ --- instead, it depends on the number of clean-noisy boundaries, and the different error parameters in the circuit.
Our numerical results in Results Part C are consistent with the exponential decay of output state fidelity with circuit depth $L$ as stated in Theorem~\ref{thm:lower-bound}, leading to a threshold number of clean qubits which is also independent of $n$ in the absence of idling noise. 

\subsection*{Part C: Numerical investigation}\label{sec:numerics}
From here on, we numerically study the noise properties of the  $\widetilde{CNOT}$ gate and observe the existence of a clean qubit threshold for brick-layered Clifford circuits. We use a depolarizing noise model which is implemented at the circuit level and motivated by trapped-ion quantum computers.
We first demonstrate that the $\widetilde{CNOT}$ gate coupling the clean and noisy registers has a higher associated error rate than a $CNOT$ gate between two noisy registers.
We then present clear evidence of a clean logical qubit threshold at which advantage is obtained over fully noisy circuits, as alluded to by our analytic results. Furthermore, a detailed study of the impact of idling noise on the value of this threshold and an implementation under realistic noise rates are studied.

\subsection*{Randomized benchmarking} \label{sec:lrb}

Here, we numerically study fidelity decay rates for sequences of random two-qubit Clifford gates acting on an error-corrected and a noisy qubit, comparing them with analogous sequences for two error-corrected and two noisy qubits. Such sequences are employed by randomized benchmarking~\cite{knill2008randomized, magesan2011scalable, combes2017logical} to estimate logical error rates of gates. We encode error-corrected qubits using the Steane code.  

Randomized benchmarking (RB) considers random sequences of Clifford gates that are compilations of identity, acting on an initial state. The protocol estimates the average error rate per element of the Clifford group by calculating the fidelity between the final and the initial states.
Here, we consider the two-qubit Clifford group. 
As we are interested in the regime of pre-fault-tolerant quantum computation, we use post-selection on syndrome outcomes to mitigate the effects of errors. We note that post-selection on syndrome outcomes is utilized by the current hardware implementations of the quantum error correction and detection codes~\cite{bluvstein2023logical,mayer2024benchmarking}. Since the randomized benchmarking does not account for post-selection, we use this setup to investigate fidelity decay in the presence of noisy and error-corrected qubits rather than to generalize a notion of the error rate per Clifford to our case.

We consider sequences of $m$ random elements of the 2-qubit Clifford group followed by a 2-qubit Clifford unitary chosen that the whole sequence compiles to the identity. Each sequence is run 4000 times, and eight sequences are used for each $m$. 
In the case of two noisy qubits, the Clifford unitaries are compiled to a gate set $G \coloneqq \{H, S, S^{\dag}, X, Y, Z, CNOT\}$ and applied to an initial state $\ket{00}$. Analogously, for two error-corrected qubits we use gate set $\overline{G} \coloneqq \{ \overline{g}\, |\, g \in G \}$ and state $\ket{\overline{00}}$, and for a pair of error-corrected and noisy qubits we use gate set $G \cup \overline{G} \cup \{\widetilde{CNOT}\}$ and state $\ket{\overline{0}0}$. 
Therefore, in all three cases, we compile a Clifford group element to the same sequence of logical Clifford gates that differ only by their hardware implementation, depending on whether they act on error-corrected or noisy qubits.

We assume that physical single-qubit gates are affected by single-qubit depolarizing noise with an error rate of $\varepsilon_1=10^{-5}$ and physical $CNOT$s by a tensor-product of single-qubit depolarizing channels, each with an error rate of
$\varepsilon_{CNOT}=10^{-4}$.

We  choose $\varepsilon_{CNOT}$ to obtain a clear advantage of the fully error-corrected setup over the fully noisy one, and assume all-to-all device connectivity, as is the case in trapped-ion quantum computers. The two-qubit gate noise captures the dominant errors occurring in the entangling gates of Quantinuum's H2 quantum computers~\cite{haghshenas2026digital}. Below, in a quantum error detection setup for random mirrored circuits, we additionally show an advantage  for our framework with realistic $\varepsilon_{CNOT}$ error rates and $CNOT$ noise dominated by two-qubit Pauli errors. Furthermore, we model state preparation (measurement) noise by the single-qubit depolarizing noise with the error rate $\varepsilon_{CNOT}$ acting at all the prepared (measured) qubits after the state preparation (before the measurement). The $\varepsilon_{CNOT}/\varepsilon_1$ ratio is chosen to mimic the error rate ratios in current quantum devices~\cite{cincio2021machine}. We neglect idling noise in this section for the sake of clarity of exposition, but we consider it in our large-scale simulations of the subsequent sections.

The logical gates involving error-corrected qubits are compiled to the physical gates as discussed in SI Sec.~II. After an action of a logical gate on error-corrected qubits, we apply QEC rounds to those qubits. We implement those rounds by adapting a Quantinuum implementation of the Steane code~\cite{ryan2021realization} to our setup.  In particular, we post-select based on the syndrome outcomes to reject runs likely affected by irrecoverable errors.  Furthermore, for simplicity, we implement syndrome decoding targeting only single-qubit $CNOT$ errors, which are more frequent than two-qubit gate errors in the case of our noise model and Quantinuums's  H2 devices~\cite{haghshenas2026digital}.   We detail our error correction algorithm in SI Sec.~II.

 \begin{figure}[t]
    \centering 
    \includegraphics[width=0.99\columnwidth]{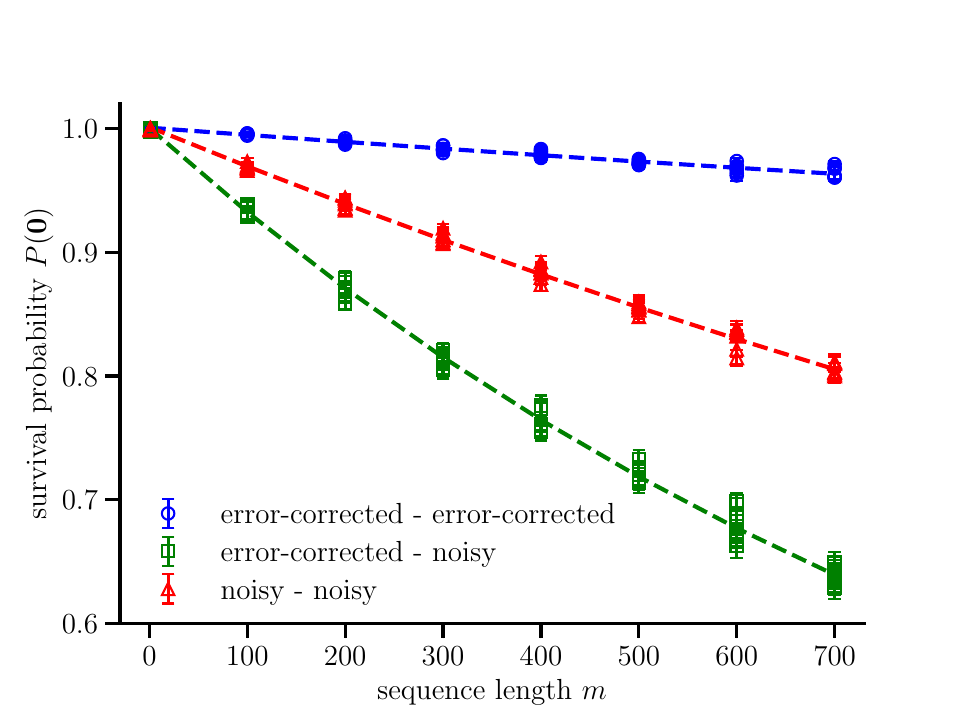}
    \caption{ {\bf Randomized benchmarking of two-qubit gates for the partial error correction framework.} Survival probability $P(\textbf{0})$ of a sequence of random two-qubit logical Cliffords for two error-corrected qubits (circles), two noisy qubits (triangles), and an error-corrected and a noisy qubit (squares). Each sequence is chosen to compile to the identity, so in the noiseless case, we have $P(\textbf{0})=1$. We simulate QEC using the Steane code. Logical Clifford gates are compiled to a native physical noisy gate set affected by depolarizing noise with the error rates $10^{-5}$ and $10^{-4}$ for single- and two-qubit gates, respectively. For each sequence length, we show results for eight different sequences run $4000$ times each. For each sequence, we plot the error bar computed as standard deviation of the mean.  Dashed lines are the best fits for the exponential decay described by Eq.~(\ref{eq:decay}).}
    \label{fig:RB}
\end{figure}

For the fully error-corrected,  partially error-corrected, and fully noisy setups, we use maximal sequence lengths of $m =  700$, 700, and 1400, respectively. This choice of $m$ results in a probability of rejecting a run during post-selection which is smaller than $0.26$, as shown in Fig.~7 of SI Sec.~III. For each of the three setups, we fit the decay of the survival probability  $P(\vec{0})$, i.e.~ the expected value of the $\dya{\psi_i}$ projector (with $\psi_i=\ket{00}$, $\ket{\overline{0}\overline{0}}$, or $\ket{\overline{0}0}$ depending on the qubit types) after applying the sequence of Cliffords, with sequence length $m$ to 
\begin{equation}
P(\vec{0}) = A c^{m} + B\,.
\label{eq:decay}
\end{equation}
Here  $A, B, c$ are the fit parameters, and we compute $P(\vec{0})$ as a ratio of the accepted runs with an all-zero logical measurement outcome to all the accepted runs. In the case of two noisy qubits, for which no post-selection was employed, Eq.~(\ref{eq:decay}) is exact in the limit of sequences of random Cliffords with gate and time-independent noise~\cite{magesan2011scalable}. In this case, $c$ is related to the average error per $n$-qubit Clifford $r$ by $$r = \frac{2^n - 1}{2^n} (1 - c)$$ and $A, B$ account for state-preparation and measurement errors. Nevertheless, the fitted formula approximates the results well in all cases, as shown in Fig.~\ref{fig:RB}. For plot clarity, in Fig.~\ref{fig:RB}, we only show $m\le700$ data. We present our complete RB data in SI Sec.~III. For two error-corrected qubits, we see a slow decay in $P(\textbf{0})$ that indicates the physical noise strength is low enough for our implementation of QEC to correct the vast majority of errors. This behavior contrasts with a much faster decay for an error-corrected qubit coupled to a noisy qubit whose decay rate is even larger than for two noisy qubits. As the Clifford compilation to logical Clifford basis gate set is the same for the partially error-corrected  and fully noisy settings, and $CNOT$ errors are dominant, this strongly indicates that $\widetilde{CNOT}$ has a higher error rate than a $CNOT$ on all-noisy qubits, confirming the heuristic analysis of Sec. ``Noise model for clean-noisy circuits''. 

We compute the parameter $c$ for noisy-noisy and partially error-corrected couplings obtaining $1-c_{d} =  4.3(3)\times10^{-4}$ and $1-c_{b} =  9.8 (5)\times 10^{-4}$, respectively. The ratio of these estimates suggests that the partially error-corrected Clifford gates are $2.3$ times more prone to error than the noisy Cliffords.

\subsection*{Random Clifford circuits: setup}
\label{sec:Cliff_setup}

Thus far, we have focused on the effective error rates of the three different types of two-qubit gates considered in our partial error correction framework, observing that coupling clean and noisy qubits using $\widetilde {CNOT}$ introduces more noise than coupling two noisy qubits. We now increase the number of logical qubits and observe an advantage of partial error correction over all-noisy computation when a clean qubit threshold is passed. 
We perform benchmarking of random Clifford circuits of increasing depth and size, explore whether trading logical qubits for a larger number of physical qubits can be advantageous. Furthermore, we investigate the performance of our framework at realistic two-qubit gate error rates using quantum error detection.

We simulate random mirrored Clifford circuits which compile to identity
\begin{equation}
U = VV^{\dag}\,.
\label{eq:mirr_Cliff}
\end{equation}
We arrange logical qubits into a line and build $V$ from $L$-layers of alternating nearest-neighbor $CNOT$s decorated by randomly chosen single-qubit Clifford gates, as exemplified in Fig.~\ref{fig:random_Cliff}. We note that this corresponds to $4L$ layers of logical gates in $V$. The single-qubit gates are chosen from a gate set $\{H, X, Y, Z, S, S^{\dag} \}$. Furthermore, we choose the first $n_c$ registers to be clean while the remaining $n_d$ are noisy. Thus, a block of clean qubits is connected to the noisy ones by $2L\,\,CNOT$ gates,  as depicted in Fig.~\ref{fig:random_Cliff}. The initial state is $\psi_i = \ket{\overline{0}}^{\otimes n_c} \ket{0}^{\otimes n_d}$. 

\begin{figure}[t]
    \includegraphics[width=0.99\columnwidth]{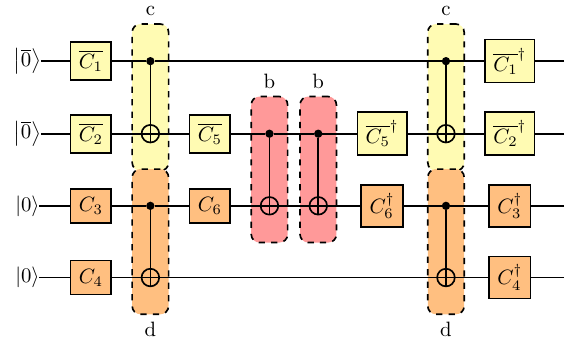}
    \caption{\textbf{Mirrored random Clifford circuits.} 
    A random mirrored Clifford circuit of the form of Eq.~(\ref{eq:mirr_Cliff}). Here $V$ is a layer of brick-like arranged nearest-neighbor $CNOT$, $\widetilde{CNOT}$, and  $\overline{CNOT}$  preceded by single-qubit Clifford gates $\overline{C_1},\overline{C_2},C_3,C_4,\overline{C_5},C_6$. Hence, the layer is built of $4$ layers of logical gates. We assume a block of clean registers (the first two) connected to a block of noisy ones by two $\widetilde{CNOT}$s. Single-qubit unitaries $C$, $\overline{C}$ are chosen randomly from a set $\{H,X,Y,Z,S,S^{\dag}\}$. }
    \label{fig:random_Cliff}
\end{figure}

As earlier, we simulate the circuits using an implementation of the Steane code, detailed in SI Sec.~II. Similar to the randomized benchmarking setup, we estimate the fidelity of the noisy and the noiseless state using the survival probability $P(\vec{0})$, which is defined here as an expected value of the $\dya{\psi_i}$ projector for the runs accepted by the post-selection on syndrome outcomes.  Also similar to  our randomized benchmarking simulations, we assume a physical gate set $\{H, S, S^{\dag}, X, Y, Z,$ $CNOT\}$ and a single-qubit depolarizing noise model with an error rate $\varepsilon_1=1.5\cdot10^{-5}$ for the single-qubit gates. We also use the single-qubit depolarizing noise with error rates $\varepsilon_{2}=\varepsilon_{CNOT}=\varepsilon_{\rm SPAM}=1.5\cdot10^{-4}$ for  $CNOT$ gate, $\ket{0}$ preparation, and a qubit measurement.
As earlier, the ratio $\varepsilon_2/\varepsilon_1$ is chosen to resemble this ratio for the current devices while the value of $\varepsilon_2$ and and the error Pauli weights are taken to ensure good error suppression by our QEC implementation. 
More precisely, we assume that the $CNOT$ is followed by single-qubit depolarizing channels, with an error rate $\varepsilon_{CNOT}$, acting on the same qubits as the gate.
Additionally, we follow (precede) a $\ket{0}$ preparation (qubit measurement) with the same noise channel applied to this qubit. For the sake of notational clarity, we denote the  state preparation and measurement noise error rate by $\varepsilon_{\rm SPAM}$.  Moreover, we also take into account idling of physical qubits during a physical gate execution, the state preparation, and qubit measurements. We model an idling qubit by an identity gate followed by the single-qubit depolarizing noise with an error rate range $\varepsilon_I=0-3\cdot10^{-6}$, in order to investigate the effect of increasing idling noise, as discussed in detail below.  

As mentioned above, we note that real-world $CNOT$ gates involve two-qubit correlated Pauli errors. 
Nevertheless, as argued  above, suppression of single-qubit errors is sufficient for seeking  an advantage with partially-fault tolerant QEC implementations targeting current state-of-the-art trapped-ion quantum devices~\cite{chertkov2025error}. 
Furthermore, to investigate the framework's performance scaling with $n$, we assume $\varepsilon_{CNOT}$ which is an order of magnitude smaller than in current trapped-ion devices~\cite{moses2023race}, \cite{ransford2025Helios}. 
Similarly, the single-qubit gates and idling error ratios are smaller than in the current devices, though $\varepsilon_{I}$ analyzed here have been obtained with a trapped-ion qubit~\cite{sepiol2019probing}. 
We demonstrate an advantage of our approach over the all-noisy setup for higher, realistic $CNOT$ error rates  below, in the context of quantum error detection.
Furthermore, for that use case, we model $CNOT$ noise by two-qubit depolarizing noise which consists primarily of two-qubit Pauli errors.
In this work, for simplicity, we assume throughout that the idling error is the same for each idling qubit and for each circuit layer. 
We compile circuits to layers for all-to-all device connectivity, as shown in SI Sec.~II. 
In practice, the idling error will vary layer to layer depending on the gate types, the presence of measurements, and the qubit transport time required for all-to-all connectivity. 

\subsection*{Random Clifford circuits: results for quantum error correction  }

We estimate the fidelity of the noisy and the noiseless state for $n=n_c+n_d=12-28$, $L=5-25$, and multiple choices of $0\le n_c\le n$. Numerical simulations are performed with a stabilizer simulator~\cite{gottesman1998heisenberg,aaronson2004improved} which samples the physical gate, measurement, and state preparation errors according to the error rates ($\varepsilon_1, \varepsilon_{CNOT}, \varepsilon_I$). In our case of the single-qubit depolarizing noise the errors are single-qubit Pauli gates $X,Y,Z$.  For each $n$, $n_c$, and $L$, we compute the average fidelity as the mean of the survival probabilities of $2.5\cdot10^4-2\cdot10^5$ runs of the random circuits with a single run per circuit.

\begin{figure}[t]
    \centering 
    \includegraphics[width=0.99\columnwidth]{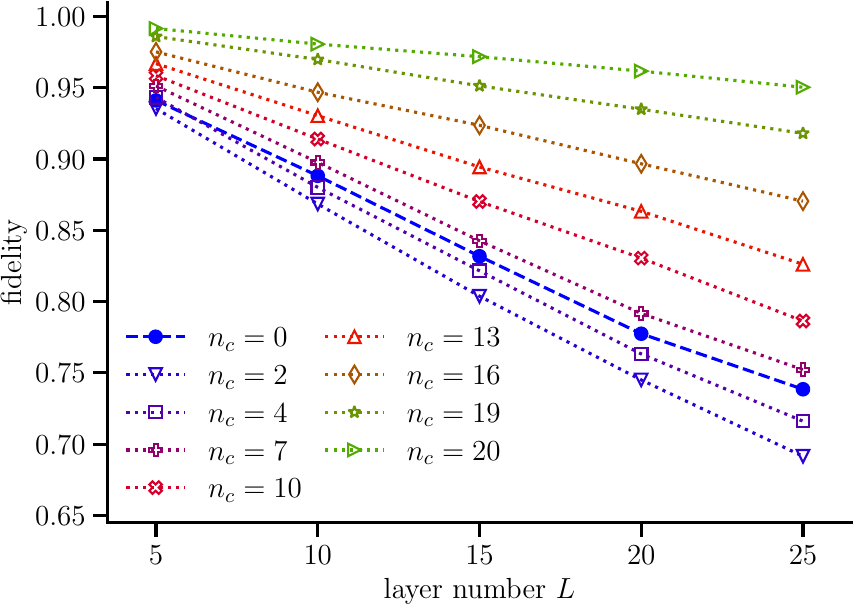}
    \caption{ {\bf Random Clifford circuits with a depolarizing noise.}   Average fidelities of the noisy and noiseless states prepared by random mirrored Clifford circuits with $n_c$ clean qubits, and $n=20$ qubits in total, are plotted versus the number of layers $L$. We use the Steane code to encode the clean qubits.  The circuit structure is shown schematically in Fig.~\ref{fig:random_Cliff}. We simulate the setup with a single-qubit depolarizing noise with the error rates $\varepsilon_1=1.5\cdot10^{-5}$,  $\varepsilon_{CNOT}=\varepsilon_{\rm SPAM}=1.5\cdot10^{-4}$, and $\varepsilon_I=0.75\cdot10^{-6}$,  for the physical single-qubit gates, $CNOT$, state preparation and measurement, and qubit idling, respectively. The fidelity is estimated as a mean of the survival probabilities of post-selected circuit runs with $2.5\cdot10^4-5\cdot10^5$ total runs  per data point, and a single run per random circuit. The statistical uncertainties of the data points given by a standard  deviation of the mean are smaller than the marker sizes. Note that the $y$ scale is logarithmic, indicating exponential decay of the fidelities.}
    \label{fig:mirr_Lplt}
\end{figure} 

 In Fig.~\ref{fig:mirr_Lplt} we show the results for $n=20$ and $\varepsilon_I=0.75\cdot10^{-6}$ versus $L$ for several $n_c$. Additionally, the post-selection probabilities for this setup are shown in Fig.~9 of SI Sec.~III, and are larger than $0.7$ in all cases.  For all ansatz depths $L$, we observe that the fidelity for the smallest non-zero clean qubit number $n_c=2,4$ is lower than that at $n_c=0$. At the same time, it systematically improves  with an increasing $n_c$. For $n_c=7$ it is already better than for the all-the-noisy ($n_c=0$) setup. Therefore, the results clearly demonstrate a threshold behavior  that reflects our analytic result in Theorem \ref{thm:lower-bound}. Hence, we introduce a threshold clean qubit number $n_{\rm threshold}$, such that for $n_c > n_{\rm threshold}$ the fidelity is better than for $n_c=0$. In Fig.~8 of SI Sec.~III, we show that the fidelity behaves qualitatively the same for $n=20$ and no idling noise. In that case we find that $2 < n_{\rm threshold} < 4$, which is smaller than for a setup of Fig.~\ref{fig:mirr_Lplt} with the idling noise. This comparison indicates that the idling noise affects $n_{\rm threshold}$ in a detrimental way. 

 \begin{figure}[t]
    \centering 
    \includegraphics[width=0.99\columnwidth]{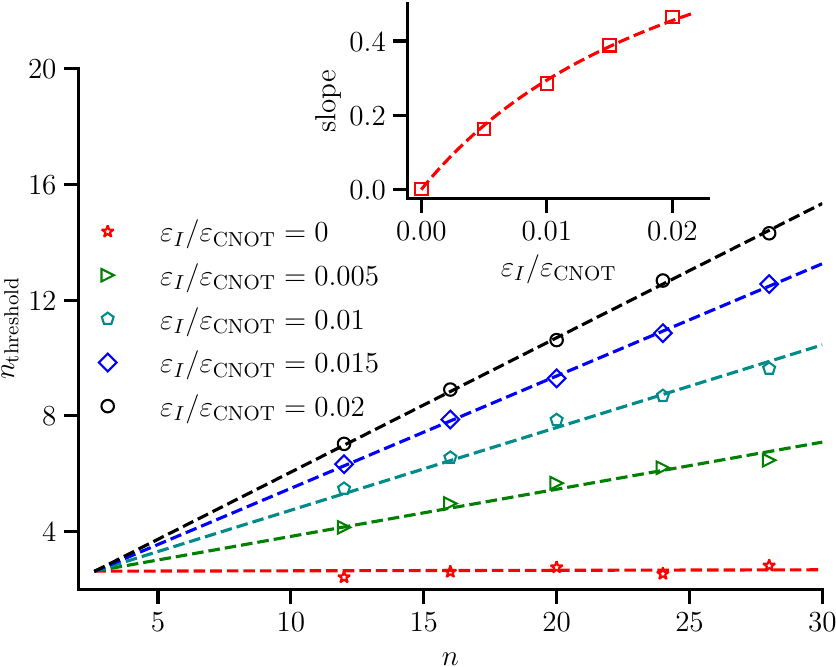}
    \caption{{\bf The clean qubit number required for an advantage over an all-noisy device.} Here we plot $n_{\rm threshold}$ for the random mirrored Clifford circuits with $n=12-28$ and several ratios of the noisy qubit idling and $CNOT$ gate error rates $\varepsilon_I/\varepsilon_{CNOT} =0,0.005,0.01,0.015,0.02$. The results are fitted by a linear ansatz  of Eq.~(\ref{eq:nt_pheno}) (dashed lines) with the slope dependent on the idling noise strength, unlike the intercept.  In the inset, we show the dependence of the slope on $\varepsilon_I/\varepsilon_{CNOT}$ fitted by a heuristic phenomenological ansatz of Eq.~(\ref{eq:a_pheno}) (dashed line).}
    \label{fig:mirr_threshold}
\end{figure}

For all $n$ and $\varepsilon_I$ we find that there exist $n_{\rm threshold}>2$, which does not depend on $L$ for fixed total qubit number and the idling noise error rate. We analyze its scaling with $n$ and $\varepsilon_I$ in Fig.~\ref{fig:mirr_threshold}.  We find that without idling noise $n_{\rm threshold}\approx3$ and does not depend on $n$. Otherwise, it grows approximately linearly with $n$, and the growth slope increases with the idling noise strength. This dependence is expected as the logical gates acting on the clean registers have higher depth when compiled to the native physical gate set than gates acting only on noisy registers. Furthermore, they are followed by QEC rounds that further enlarge the depth of the circuits applied to the clean qubits. Therefore, a random Clifford circuit compiled to the native gates in the presence of the clean qubits has an increased depth resulting in more idling at the noisy registers than in the case of an all-noisy device. 
Still, even in the presence of idling, we find that replacing a fraction of noisy qubits with error-corrected qubits leads to an advantage over the all-noisy device. For example, for the largest simulated $n=28$, we find $n_{\rm threshold}=7$ for $\varepsilon_I/\varepsilon_{CNOT}=0.005$, and $n_{\rm threshold}=15$ for $\varepsilon_I/\varepsilon_{CNOT}=0.02$.

To intuitively explain the  behavior of $n_{\rm threshold}$, we propose a heuristic model. We assume that the fidelity is proportional to the average number of errors occurring during a circuit execution. As in the case of the heuristic noise model of Sec. ``Noise model for clean-noisy circuits'', we assume that the errors occur primarily at the noisy qubits such that we can neglect the contribution from the clean registers. Furthermore, we assume that the dominant sources of errors are type-$d$ and $b$ two-qubit gates (similar to Sec. ``Noise model for clean-noisy circuits'' and the idling noise of the noisy registers. 
Finally, we assume that for fixed $n$, the number of type-$d$ gates including the idling gate is proportional to $n_d$, while the number of type-$b$ gates does not depend on $n_d$. Such scaling is expected for large $n$ with brick-layered circuits and a single block of clean registers, as is simulated here.

It follows that for fixed circuit depth, the average number of errors due to $d$-gates and $b$-gates are $n_d \varepsilon^{\rm eff}_d$ and $\varepsilon^{\rm eff}_b$, respectively, where $\varepsilon^{\rm eff}_d, \varepsilon^{\rm eff}_b$ are the effective type-$d$ and type-$b$ error rates, and do not depend on $n$. In our case, idling occurs at the noisy registers mostly when a layer of logical gates is executed in the presence of clean qubits, as the depth of the error correction rounds compiled to the physical gates (see SI Sec.~II) is an order of magnitude higher than the depth of a layer $d$-type gates.  Since in such a case the number of idling gates is approximately proportional to $n_d$, we assume that for the fixed circuit depth, the number of errors occurring during idling is $\varepsilon^{\rm eff}_I n_d$ with $\varepsilon^{\rm eff}_I$ independent of $n_d$. Setting the number of errors for $n_c=n_{\rm threshold}$ to be the same as for the noisy device, we obtain
\begin{equation}
  n \varepsilon^{\rm eff}_d = \varepsilon^{\rm eff}_b + (n-n_{\rm threshold}) (\varepsilon^{\rm eff}_d+ \varepsilon^{\rm eff}_I)\,, 
  \end{equation}
and 
\begin{align}
n_{\rm threshold} = b+&a(n-b)\,, \nonumber \\ a =\frac{\varepsilon^{\rm eff}_I/\varepsilon^{\rm eff}_d}{\varepsilon^{\rm eff}_I/\varepsilon^{\rm eff}_d+1}& ,\quad b = \frac{\varepsilon^{\rm eff}_b}{\varepsilon^{\rm eff}_d}\,.
\label{eq:nt_pheno}
\end{align}
We expect that $\varepsilon^{\rm eff}_d, \varepsilon^{\rm eff}_I$ are approximately proportional to the noisy $CNOT$ and idling qubit error rates $\varepsilon_{CNOT} , \varepsilon_I$, respectively. Therefore, we expect
\begin{equation}
    a=\frac{c \varepsilon_{I}/\varepsilon_ {CNOT} }{c \varepsilon_{I}/\varepsilon_ {CNOT}  +1}\,, 
\label{eq:a_pheno}
\end{equation}
where  $c$ is some constant. Additionally, for zero idling noise, we expect
\begin{equation}
    n_{\rm threshold} = b\,.
    \label{eq:nt_Idle0}
\end{equation}

We apply these heuristic predictions to our numerical results as shown in   Fig.~\ref{fig:mirr_threshold}. First, for each $\varepsilon_I/\varepsilon_ {CNOT} $ we fit  $n_{\rm threshold}(n)$  with the linear ansatz in Eq.~(\ref{eq:nt_pheno}) taking $a$ as the fit parameter and estimating $b$ by fitting the no idling noise case with the constant ansatz~(\ref{eq:nt_Idle0}). We obtain good agreement with the numerical results. Secondly, we analyze the behavior of $a$ versus $\varepsilon_{I}/\varepsilon_ {CNOT} $ in more detail, by
fitting it with Eq.~(\ref{eq:a_pheno}). Again, we obtain good agreement with the numerical results (see Fig.~\ref{fig:mirr_threshold}). 
The fitted $a$ values indicate that in the limit of a large system, $\varepsilon_I/\varepsilon_ {CNOT} =0.005$ ($\varepsilon_I/\varepsilon_{CNOT} =0.02$)  and $n_c/n =0.17$ ($n_c/n =0.47$) results in an advantage over an all-noisy machine.  These results demonstrate that replacing a fraction of the noisy registers with the error-corrected ones enhances the power of a quantum device.

\begin{figure}[t]
    \centering \includegraphics[width=0.999\columnwidth]{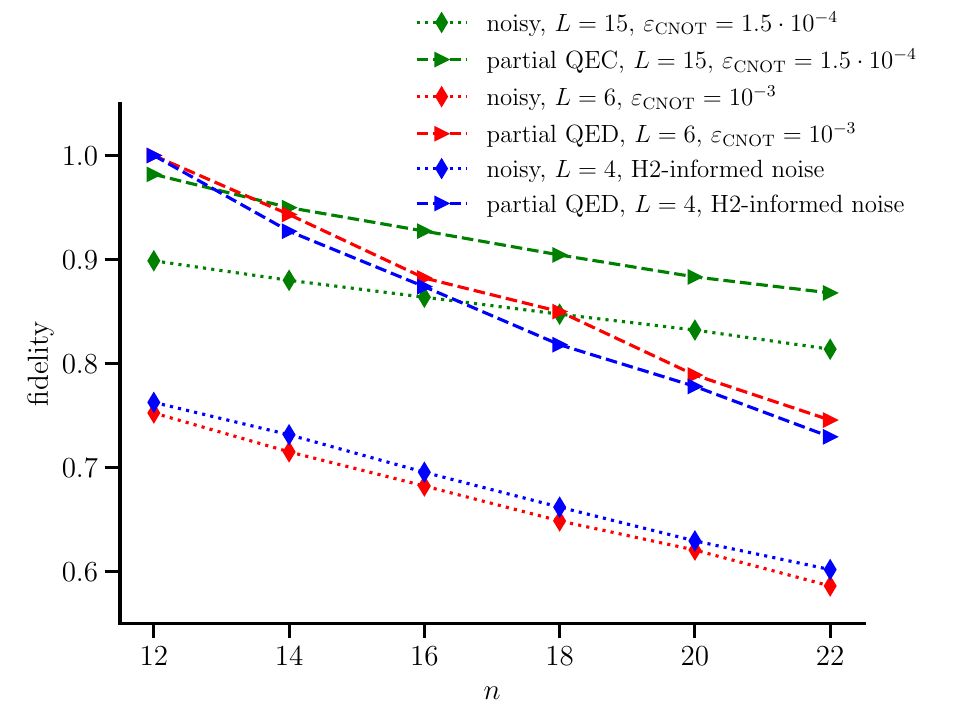}
    \caption{ {\bf Fidelity of partially error-corrected, partially error-detected, and noisy random-mirrored Clifford circuits}. The green dashed line indicates the fidelity for  partially error-corrected  $L=15$ circuits with $n_c=12$ clean qubits and $n\in\{12,14,16,18,20,22\}$. For reference, the the green dotted line indicates the fidelity for the noisy case ($n_c=0$) and the same circuits. The blue (red) dashed lines show the results of random-mirrored $L=6$ ($L=4$) Clifford circuits obtained by partial quantum error detection (QED) with $n_c=12$ clean qubits encoded using the Steane code. In this case, syndromes are extracted during the final measurement and the results are post-selected based on the syndromes (see details in the main text).  For  reference, the blue (red) dotted lines represent noisy results for the $L=6$ ($L=4$) circuits. For the $L=15$ circuits, the noise model and the clean qubit implementation, are the same as in Fig.~\ref{fig:mirr_Lplt}. For $L=6$, we use a single-qubit depolarizing noise with $\varepsilon_{CNOT}=10^{-3}$ and $\varepsilon_{I}/\varepsilon_{CNOT}=0.1$, and for $L=4$ a noise model informed by Quantinuum's H2 device (see SI Sec.~IV). The fidelity is estimated as in Fig.~\ref{fig:mirr_Lplt}, with $3.1\cdot 10^4-2\cdot10^5$ ($2.6\cdot10^4$) circuit executions per data point for $L=15$ ($L=6$). The post-selection probabilities are given in Tabs.~III,~IV of SI Sec.~III, and are $8-11\%$ in the error detection case.
    The statistical uncertainties of the data points measured by a standard  deviation of the mean are smaller than the marker sizes. We note that  partial QEC  (QED) used here requires $130$ ($94$) physical qubits, which are not sufficient for $n>13$ when only encoded qubits are used.  }
    \label{fig:fidelity}
\end{figure}

In practice, the observed advantage can be exploited to improve upon the effective error rates of an all-noisy qubit device, for systems which would be too big to simulate using only error-corrected qubits. 
To exemplify this, we consider as a thought experiment a quantum device with $130$ physical qubits. As we need 7 data qubits and 3 ancilla qubits for a single error-corrected register (see SI Sec.~II), we can have at most $n_c=13$ clean qubits. 
At the same time, we can extend the accessible $n$ by designating a subset of physical qubits to be noisy qubits. 
For example, with $n_d=10$ we can have $n_c=12$ and $n=22$. To make this extension desirable, the partial QEC framework needs to improve upon a noisy device with the same $n$. 
In Fig.~\ref{fig:fidelity}, we illustrate this improvement for  $\varepsilon_I/\varepsilon_ {CNOT} =0.005$,  $L=15$, and $n\in\{12,14,16,18,20,22\}$. 
We  compare the fidelity of the random mirrored circuits with $n_c=12$ qubits to the case of $n_c=0$ for a given $n$. 
We find that, while in both cases the fidelity decreases with increasing $n$, the fidelities of the partially error-corrected system improves substantially over the noisy case. 
In particular, for $n=14$ the infidelity (defined as $1-{\rm fidelity}$) is reduced by a factor of $2.4$, and for $n=22$ by a factor of $1.4$. 
We note that a reduction of circuit infidelity typically  enables better quality quantum error mitigation~\cite{cai2022quantum}, which has been proposed as a necessary ingredient for early QEC implementations~\cite{suzuki2022quantum}. 
Therefore, while the quality of the results can be further improved in both cases using error mitigation methods, the improved accuracy of the partially-corrected results remains advantageous, even when allowing for error mitigation.

Finally, in SI Sec.~V we provide additional results on the $n_{\rm threshold}$ scaling with $n$ for a more general non-Pauli noise model which is handled at scale by simulation with tensor network methods with idealized QEC round treatment, see details in SI Sec.~V.  Those results show a systematic improvement of the partial QEC performance with an increasing $n_c$ that results in an advantage over the noisy device for $n_c>n_{\rm threshold}$. Additionally, the threshold clean qubit number is found to scale  in agreement with the heuristic model introduced here.

\subsection*{Random Clifford circuits: results for quantum error detection (QED)} \label{sec:QED}

In the simulations described above, we assume $\varepsilon_{CNOT} \in \{10^{-4},1.5\cdot10^{-4}\}$, which is an order of magnitude smaller than the current state of the art~\cite{ransford2025Helios}. Such a small error rate is required to ensure an  error reduction by  Steane code error correction.  For higher error rates, partially fault tolerant QEC implementations and quantum error detection codes have been proposed as feasible paths to quantum advantage.   In particular, QED has been demonstrated to suit well capabilities of current  hardware~\cite{bluvstein2023logical, chertkov2025error}. To illustrate the usefulness of our scheme in the near term, i.e. for $\varepsilon_{CNOT} \approx 10^{-3} $  we consider  a Steane code implementation as a QED code.

We simulate the mirrored random Clifford circuits, described in Eq.~(\ref{eq:mirr_Cliff}), and encode the clean qubits and the clean-clean and clean-noisy gates with the Steane code, the same as for the QEC simulations. We do not follow the gates with rounds of the syndrome extractions, as for the QEC use case. Instead, we perform the syndrome extraction only during the final measurement, as described in SI Sec.~II  for the QEC scenario. Unlike in the QEC case, we reject all outcomes with errors. First, we use the single-qubit depolarizing noise model used in the QEC simulations, with $\varepsilon_{CNOT}=10^{-3}$. This error rate is similar to the two-qubit gate error rates of Quantinuum's quantum computers~\cite{moses2023race, ransford2025Helios}. Furthermore, we set $\varepsilon_{I}/\varepsilon_{CNOT}=0.1$, to explore a regime with higher idling errors.   Here, we expect good performance of the clean-noisy setup despite a much higher $\varepsilon_{I}/\varepsilon_{CNOT}$, since the idling time is reduced by the lack of QEC rounds after each gate. As before, we have $\varepsilon_{CNOT}=\varepsilon_{\rm SPAM}$ and  $\varepsilon_{1}/\varepsilon_{CNOT}=0.1$. Additionally, we perform simulations using a noise model constructed with error rates from publicly available experimental data from Quantinuum's H2 trapped-ion device. Furthermore, to study the robustness of our approach to two-qubit gate errors, in this noise model we represent the $CNOT$ noise by two-qubit depolarizing noise, which is dominated by  weight-2 Pauli errors. We describe this noise model in more detail in SI Sec.~IV. 

We investigate $n\in\{12,14,16,18,20,22\}$ with $n_c=12$, and the noisy $n_c=0$ setup with the same $n$. Since errors accumulate during the circuit execution when not corrected, we choose shallower circuits than in the QEC case. For the  single-qubit depolarizing noise we use $L=6$, and for the H2-informed noise we choose smaller value of $L=4$ to account for higher error rates.The results are shown in Fig.~\ref{fig:fidelity}. Similarly to the QEC case,  we observe that for a fixed $n$ the combined scheme results in better fidelity than the noisy device. More precisely,  we find that the infidelity is reduced for the single-qubit depolarizing   
noise  by a factor of $5.0$  ($3.7$) for $n=14$, and $1.6$ ($1.5$) for $n=22$. As we reject erroneous outcomes, the post-selection rates, which are $0.095-0.11$ for the depolarizing noise and $0.086-0.11$ for the H2-informed noise, are lower than in the QEC case, as detailed in Tab.~IV from SI Sec. III. Hence, the price for the error reduction is an increase of the shot burden by a factor of approximately $10$, or alternatively an increase of the  shot uncertainty.     Looking beyond QED, good performance of our setup here indicates that in the case of high idling noise replacing resource-intensive QEC with less frequent or partial syndrome extraction is a way to exploit the potential of our framework. We note that such approaches to QEC have been already proposed in the case of QEC performed on all registers~\cite{akahoshi2024partially,dangwal2025variational,akahoshi2024compilation,toshio2025practical}.

\section*{\MakeUppercase{Discussion}}
\label{sec:discussion}

Recent experimental demonstrations of QEC suggest that quantum devices will soon be capable of implementing a limited number of error-corrected qubits which, in isolation, will have a significantly lower error rate than their noisy counterparts. Motivated by the fact that physical qubit counts will likely severely restrict the ability to perform classically intractable computations solely using error-corrected  qubits for the foreseeable future, we have proposed implementations of logical gates connecting error-corrected (``clean'') and noisy qubits. We have demonstrated the advantage of combining clean and noisy qubits in terms of mitigating the effects of Pauli noise and real-device-inspired noise models. 

The core of our framework comprises a simple recipe for constructing general partially error-corrected circuits. The first step involves identifying a QEC code that an experimentalist can construct in the lab. Then, one needs to construct an appropriate clean-noisy interaction, such as the $\widetilde{CNOT}$ gate we give for codes that admit transversal $X$ and $Z$ logical gates, that demonstrates an advantage over noisy set-ups. We hope that our framework motivates the construction of other partial QEC set-ups.

Flexibility with the choice of code may also be necessary in order to ensure desired code properties or a certain level of logical error rate.
To this end, one future direction is to explore codes where it is desirable to implement $CNOT$ gates non-transversally.
As an example, lattice surgery~\cite{horsman2012surface} admits fault-tolerant implementations for topological codes~\cite{poulsen2017fault} on devices with nearest-neighbor constraints.  
Sacrificing fault-tolerance, one could implement a clean-noisy 2-qubit gate via lattice surgery by treating the noisy qubit as a distance-1 memory with trivial stabilizers and logical operators of weight 1.
Even though our present approach appears more naturally suited to devices with all-to-all connectivity, lattice surgery may be potentially preferable for certain noise models or for future refinements of our framework with more restricted qubit connectivity.
Crucially, one needs not restrict to transversal implementations of the appropriate clean-noisy interaction used in our recipe.

An important open question stemming from our framework is how to rigorously analyze the error propagation and effective error rate for a clean-noisy gate.
A step towards this direction could be generalizing existing quantum benchmarking protocols to provide information about the noise characteristics of different partial error correction implementations. In particular, a generalization akin to interleaved randomized benchmarking \cite{magesan2012efficient} could allow one to estimate the error rate of the gate(s) used at the clean-noisy boundary, potentially allowing for focused optimization of this crucial circuit component.

We also expect that more intricate correction strategies on the clean registers may allow detection of specific types of errors on noisy registers.
For example, assuming that the noisy qubit is susceptible to bitflip errors only, one could devise a strategy allowing a clean qubit to detect this error under certain conditions, leading to its correction on the noisy register.

To utilize the partial error correction framework in practice, it is crucial to account for the noise in real-world devices and the limitations of early implementations of quantum error correction. As performance is negatively impacted by idling, as found in Sec. ``Random Clifford circuits: setup'', it is important to target practical strategies for minimizing idling time. These strategies should be leveraged to identify the most promising applications, where the gains due to the partial error correction are maximized, while  the additional errors introduced by the noisy-clean gates and the longer idling times are minimized. Rates of those excess errors will depend strongly on  device-specific  noise properties, like ratio of gate times to qubit coherence times, device connectivity, and dynamical decoupling schemes~\cite{viola1999dynamical,rahman2024learning} used to suppress idling errors. Furthermore, they will depend on the circuit structure, the choice of the code, and its implementation.  For example, devices with restricted connectivity may require many layers of  SWAP gates to realize circuits with high connectivity. During those SWAP layers most of the qubits may idle. In turn, the circuit connectivity and depth depend on the choice of the code and its implementation, as usually  code syndromes can be extracted in multiple ways.  Therefore, state-of-the-art device-specific optimization methods~\cite{ransford2025Helios,benito2025comparative} for quantum algorithms need to be applied to specific algorithms to fully utilize  the framework's potential.

Another closely related issue is how to apply quantum error mitigation to implementations of the framework. Error mitigation techniques have already been recognized as crucial for large-scale implementations of partially-fault-tolerant  QEC~\cite{suzuki2022quantum}. Hence, it is natural to envision their  application to the scheme proposed here. However, the presence of both error-corrected and noisy qubits in the scheme poses novel technical challenges to existing mitigation methods.
Therefore, modifications of existing error mitigation methods may be required for their successful application within the partial error correction framework.

One of the main limitations of early QEC  is large overhead of fault-tolerant implementations of non-transversal gates, which are required to execute non-Clifford circuits for many popular codes, like the Steane code and the surface code.  Our partial error correction framework can be used to circumvent this limitation, as  non-Clifford noisy gates are  usually as cheap to execute as Clifford gates in the absence of error correction. If non-Clifford gates occur only on a subset of qubits, in our framework they can be performed as noisy gates by implementing these qubits using noisy qubits. Furthermore,  even if the non-Clifford gates occur on all circuit qubits, it may be useful to execute all non-Clifford gates on noisy qubits as a compilation strategy by swapping qubits as appropriate. 
Another route to reduce the resources required for error-corrected non-Clifford gates is to use the noisy qubits as ancilla qubits that provide  resource states consumed when the non-Clifford gates are implemented through injection~\cite{akahoshi2024partially,toshio2025practical}. These strategies need to be investigated for specific applications, while taking into account real-device noise, to determine their  practical usefulness.        

Finally, it is natural to inquire which applications are particularly suited for our framework. As the noisy qubits introduce errors that are not corrected, an obvious target is to extend the power of early quantum error correction and quantum error detection implementations. Naturally, this suits low-depth algorithms such as 
variational quantum algorithms~\cite{cerezo2020variationalreview,puig2024variational}, Trotterized dynamics simulations~\cite{kim2023evidence}, and subspace-based quantum algorithms~\cite{robledo2025chemistry,kanno2023quantum}. Accounting for the simplicity of implementing non-Clifford gates on noisy qubits, an interesting question is to ask: to what extent can circuit structure be optimized for a given problem to minimize the number of such gates on clean registers? Similarly, we leave it as an open question as to what applications are the most resilient to the non-uniform error profile of our framework.
Beyond the early-QEC era, an important question is to explore if and when the presence of errors on noisy registers may even be advantageous for a quantum algorithm.
An avenue to explore in this regard is noise-assisted algorithms for simulating open quantum systems~\cite{dambal2025harnessing}.

\section*{\MakeUppercase{Methods}}
\label{sec:methods}

\subsection*{Analytic methods}
For our analytic results, we extended common techniques from random circuit analysis as described in Supplementary Information I.

\subsection*{Numerical methods}
For our numerical results, we developed techniques based on the combination of techniques used in simulations of noisy, error-corrected and error-detected quantum circuits as laid out in Supplementary Information II--VI.

\section*{\MakeUppercase{Acknowledgements}}
We thank Max Hunter Gordon, Marco Cerezo, Zo\"e Holmes and Patrick Coles for discussions. The research for this publication has been supported by a grant from the Priority Research Area DigiWorld under the Strategic Programme Excellence Initiative at Jagiellonian University. 
This research has been partially supported by U.S. DOE, Office of Science, National Quantum Information Science Research Centers, Quantum Science Center.

NK and TO have been supported by the U.S. Department of Energy (DOE) through a quantum computing program sponsored by the Los Alamos National Laboratory (LANL)
Information Science and Technology Institute. 
NK acknowledges support by the EPSRC Centre for Doctoral Training in Controlled Quantum Dynamics. 
SW acknowledges support by the Samsung GRP grant. 
TO acknowledges support by the EPSRC through an EPSRC iCASE studentship award in collaboration with IBM Research. DB acknowledges support from the European Union’s Horizon 2020 research and innovation programme under the Marie Sk\l{}odowska-Curie grant agreement No. 955479.
LC acknowledges support from LANL LDRD program under project number 20230049DR.
PC acknowledges support  by  the National Science Centre (NCN), Poland under project 2022/47/D/ST2/03393.

\section*{\MakeUppercase{Code Availability}}

Descriptions of the code used to produce the data in this manuscript are available by the authors upon reasonable request.

\section*{\MakeUppercase{Data Availability}}

All data used to generate the Figures are contained as a SI file.

\section*{\MakeUppercase{Competing Interests}}

The authors declare no competing financial or non-financial interests.

\section*{\MakeUppercase{Author Contributions}}

NK \& SW led the analytic investigation. TO, DB, LC \& PC led the numerical investigation. All authors contributed to discussions and writing of the manuscript.

\bibliography{quantum.bib}
\bibliographystyle{naturemag}

\onecolumngrid

\newpage

\begin{center}
{\Large \textbf{Supplemetary Information -- A framework of partial error correction for intermediate-scale quantum computers}}
\end{center}

\section{Analytic scaling results}\label{appdx:scaling-results}

\subsection{Definitions}

Throughout this section we will use boldface to denote vectors. 
The notation $\|\cdot\|_1$ will denote the Schatten 1-norm (trace norm) when applied to a matrix $A$, i.e.~ $\|A\|_1 \coloneqq \Tr|A|$, and the $\ell_1$ vector norm when applied to a vector $\bmv = (v_1, \dots, v_N) \in \mathbb{R}^N$, i.e.~$\|\bmv\|_1 \coloneqq \sum_i^N |v_i|$.

We also denote the set on $n$-qubit Pauli operators as follows.
\begin{definition}[Pauli operators]\label{def:Pauli}
We denote the set of Pauli operators on $n$ qubits as 
\begin{align}
    \mathrm{P}_n \coloneqq \{ \id, X, Y, Z \}^{\otimes n}\,.
\end{align}
For a given Pauli operator $s\in \mathrm{P}_n$, we denote $|s|$ as the Hamming weight of non-identity single-qubit Pauli operators.
\end{definition}

We now introduce the noise channels that appear in our noise models, starting with the depolarizing channel. 
\begin{definition}[Local uniform depolarizing noise]\label{def:DC}
    The single-qubit depolarizing noise channel $\D$ maps any operator $A \in \mathbb{C}^{2\times 2}$ to  
    \begin{align}
        \D(A) = (1-\varepsilon)A + \varepsilon \Tr[A] \frac{\id}{2}\,,
    \end{align}
    where we will refer to $\varepsilon$ as the error rate. 
    In particular, $\D$ maps single-qubit quantum states to
    \begin{align}
        \D(\rho) = (1-\varepsilon)\rho + \varepsilon \frac{\id}{2}.
    \end{align}
\end{definition}

Our results in fact apply when we consider general Pauli channels.

\begin{definition}[Local Pauli noise]\label{def:PauliC}
    The single-qubit Pauli noise channel $\PC$ maps any operator $A \in \mathbb{C}^{2\times 2}$ to  
    \begin{align}
        \PC(A) = \left( 1 \hspace{5pt}-\hspace{-10pt} \sum_{Q \in \{X,Y,Z\}} \hspace{-5pt}p_Q \right) A \hspace{5pt}+\hspace{-10pt} \sum_{Q \in \{X,Y,Z\}} \hspace{-5pt}p_Q Q A Q\,.
    \end{align}
    We denote the \emph{error rate} $\varepsilon \coloneqq 1- \min_{Q\in {\{X,Y,Z\}}}|1-\varepsilon_Q|$, where $\varepsilon_Q = 2 p_{Q'} + 2 p_{Q''}$ for $\{Q,Q',Q''\} = \{X,Y,Z\}$.
\end{definition}
The channel $\PC$ is the unital map that suppresses non-identity Pauli operators as $Q \rightarrow (1 - \varepsilon_Q) Q$ for any $Q \in \{X,Y,Z\}$, justifying naming $\varepsilon$ as the error rate. Indeed, we can equally write $\varepsilon = 1- \frac{1}{2}\min_{Q\in {\{X,Y,Z\}}} |\Tr[Q \PC(Q)]|$. We will use this picture in our results.
When $p_X = p_Y = p_Z = \varepsilon/4$, the action of channel $\PC$ on a Pauli is equivalent to the action of the depolarizing channel $\D$.

We can generalize Definition~\ref{def:PauliC} to an arbitrary (correlated) Pauli noise channel.
\begin{definition}[$m$-qubit Pauli noise]\label{def:PauliCorr}
    The $m$-qubit Pauli noise channel $\PC$ maps any operator $A \in \mathbb{C}^{2^m\times 2^m}$ to  
    \begin{align}
        \PC(A) = \left( 1 \hspace{5pt}-\hspace{-10pt} \sum_{Q \in \mathrm{P}_m\setminus\{\id^{\otimes m}\}} \hspace{-5pt}p_Q \right) A \hspace{5pt}+\hspace{-10pt} \sum_{Q \in \mathrm{P}_m\setminus\{\id^{\otimes m}\}} \hspace{-5pt}p_Q Q A Q\,.
    \end{align}
    For this channel, we denote the \emph{error rate} $\varepsilon \coloneqq 1- \frac{1}{2^m}\min_{Q \in \mathrm{P}_m\setminus\{\id^{\otimes m}\}} |\Tr[Q \PC(Q)]|$.
\end{definition}

\begin{figure}[t]
\begin{center}
\begin{quantikz}
\lstick{}{\ket{0}} & \gate[2]{ } & \gate[style={rounded corners}]{\PC_{11}} & \gate[]{} & \gate[2, style={rounded corners}]{\PC_{12}} & \gate[2]{ } & \gate[style={rounded corners}]{\PC_{13}} & \gate[]{} & \gate[style={rounded corners}]{\PC_{14}} & \qw  \\
\lstick{}{\ket{0}} & \qw & \gate[3,style={rounded corners}]{\PC_{21}}  & \gate[2]{ } & \qw & \qw & \gate[style={rounded corners}]{\PC_{23}} & \gate[2]{ } & \gate[2, style={rounded corners}]{\PC_{24}} & \qw  \\
\lstick{}{\ket{0}}  & \gate[2]{ } & \qw & \qw  & \gate[style={rounded corners}]{\PC_{22}} & \gate[2]{ } & \gate[style={rounded corners}]{\PC_{33}} & \qw & \qw & \qw \\
\lstick{}{\ket{0}} & \qw & \qw & \gate[2]{ } & \gate[2,style={rounded corners}]{\PC_{32}} & \qw & \gate[style={rounded corners}]{\PC_{43}} & \gate[2]{ } & \gate[3, style={rounded corners}]{\PC_{34}} & \qw  \\
\lstick{}{\ket{0}} & \gate[2]{} & \gate[2, style={rounded corners}]{\PC_{31}} & \qw & \qw & \gate[2]{} & \gate[style={rounded corners}]{\PC_{53}} & \qw & \qw & \qw  \\
\lstick{}{\ket{0}} & \qw & \qw & \gate[]{} & \gate[style={rounded corners}]{\PC_{22}} & \qw & \gate[style={rounded corners}]{\PC_{63}} & \gate[]{} &  \qw & \qw 
\end{quantikz}
\caption{\textbf{Setting for lower bound analysis.} We consider an alternating (brick-)layered ansatz on a 1D ring architecture. After each layer of two-qubit gates, there is a layer of Pauli noise channels with different error rates depending on their location in the circuit. The Pauli noise may be correlated (for instance, as in the first layer); correlated across non-adjacent qubits (as in the second layer); or completely uncorrelated (as in the third layer). Thus, this setting includes as a special case the uncorrelated depolarizing noise model of Fig.~3 and Theorem~2 in the main text, as well as more generalized models of Pauli noise and correlated Pauli noise.
We consider the average behavior of this circuit over an ensemble of gates where each local two-qubit gate is drawn independently from a 2-design.}
\label{fig:error_scramble_setup_app}
\end{center}
\end{figure}

We then explicitly define the circuit architectures on which our results apply.
\begin{definition}[Circuit architectures]\label{def:CA}
    We draw random quantum circuits from the following architecture.
    Each circuit $C \coloneqq \{\N_1 \circ \U_1, \N_2 \circ \U_2, \dots, \N_L \circ \U_L\}$ is defined on $n$ registers as a collection of $L$ layers, where $n$ and $L$ are even. 
    Each layer consists of a layer of unitary gates followed by an instance of single-qubit Pauli noise on all registers.
    
    Each unitary layer, $\U_{\ell}$ for $\ell=1,\dots,L$, consists of a tensor product of two-qubit unitaries as in Fig.~\ref{fig:error_scramble_setup_app}. The two-qubit unitaries are drawn from a distribution $\YC$ over 2-designs.

    Each noise layer, $\N_{\ell}$ for $\ell=1,\dots,L$, consists of a tensor product of local Pauli noise channels $\{\PC_{i \ell}\}_i$ with different error rates $\varepsilon_{i\ell}$ on each register $i$.
\end{definition}

\begin{definition}[Total variation distance]\label{def:TVD}
    Let ${\vec{p}}(C)\in \mathbb{R}^{2^n}$ be the output probability distribution of circuit $C$ in the computational basis. 
    We define the total variation distance of the output probability distribution from the uniform distribution as
    \begin{align}
        \delta(C) \coloneqq \frac{1}{2} \left\| {\vec{p}}(C) - \vec{u} \right\|_1\,,
    \end{align}
    where $\|\cdot\|_1$ denotes the $\ell_1$ vector norm, and $\vec{u}:=\left(\frac{1}{2^n},..., \frac{1}{2^n}\right)$ is the uniform distribution on $n$ bits.
\end{definition}

We note that the total variation distance lower bounds the trace distance $T(\rho(C), \id/2^n)$ from the maximally mixed state, that is 
\begin{equation}
        T(\rho(C), \id/2^n) \coloneqq \frac{1}{2} \left\| \rho(C) - \frac{1}{2^n}\id \right\|_1 \geq \frac{1}{2} \left\| {\vec{p}}(C) - \vec{u} \right\|_1 \eqqcolon \delta(C),
\end{equation}
where $\|\cdot\|_1$ denotes the trace norm in the first instance, as proven in Lemma \ref{lem:td-tvd}. 
Therefore, lower bounds on the total variation distance also serve as lower bounds on the trace distance.

\subsection{Preliminary lemmas}

Firstly, we present a simple result connecting the total variation distance to the trace distance.

\begin{lemma}\label{lem:td-tvd}
The probability distribution $\vec{p}\in \mathbb{R}^{2^n}$ in the computational basis corresponding to any quantum state $\rho$ satisfies 
\begin{align}
    \Big\| \rho - \frac{\id}{2^n} \Big\|_1 \geq \big\| {\vec{p}} - \vec{u} \big\|_1
\end{align}
where $\vec{u}$ is the uniform distribution on $n$ bits.
\end{lemma}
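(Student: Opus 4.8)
The plan is to exploit the fact that measuring in the computational basis is a quantum channel and that the trace norm is contractive under it. Concretely, I would invoke the variational (dual-norm) characterization of the trace norm: for any Hermitian operator $M$, $\|M\|_1 = \max_{O} \Tr[MO]$, where the maximum ranges over Hermitian $O$ with operator norm $\|O\|_\infty \leq 1$ and is attained at $O = \operatorname{sgn}(M)$. Setting $M = \rho - \id/2^n$, the idea is to lower-bound $\|M\|_1$ by plugging in a convenient \emph{feasible} $O$ rather than the optimal one.

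The natural choice is the diagonal operator $O = \sum_{i} \operatorname{sgn}(p_i - u_i)\, \ketbra{i}{i}$, where the sum runs over computational-basis states $\ket{i}$. This $O$ is Hermitian with eigenvalues in $\{-1, 0, +1\}$, so $\|O\|_\infty \leq 1$ and it is feasible in the variational problem. The next step is then a direct computation: since $O$ is diagonal in the computational basis, $\Tr[MO] = \sum_i \operatorname{sgn}(p_i - u_i)\, \langle i | M | i \rangle$, and using $\langle i | \rho | i \rangle = p_i$ together with $\langle i | (\id/2^n) | i \rangle = u_i$ gives $\Tr[MO] = \sum_i |p_i - u_i| = \|\vec{p} - \vec{u}\|_1$. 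The variational bound then yields $\|M\|_1 \geq \Tr[MO] = \|\vec{p} - \vec{u}\|_1$, which is exactly the claim.

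The argument is essentially routine, so there is no genuine obstacle; the only point to pin down carefully is the dual-norm characterization of the trace norm, equivalently the statement that the completely dephasing channel is trace-norm contractive. I would therefore either (i) present the explicit feasible observable above, which sidesteps any appeal to general data-processing, or (ii) invoke monotonicity of trace distance under the completely dephasing channel $\Phi(\sigma) = \sum_i \ketbra{i}{i}\sigma\ketbra{i}{i}$, noting that $\Phi$ maps $\rho \mapsto \operatorname{diag}(\vec{p})$ while fixing $\id/2^n \mapsto \id/2^n$, so that $\|\rho - \id/2^n\|_1 \geq \|\Phi(\rho) - \id/2^n\|_1 = \|\vec{p} - \vec{u}\|_1$. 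Either route delivers the inequality immediately, and I would favor (i) for self-containedness.
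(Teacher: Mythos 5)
Your proposal is correct and is essentially the same argument as the paper's: the paper writes $\|\vec{p}-\vec{u}\|_1 = \sum_i |\Tr[\ketbra{i}{i}(\rho-\id/2^n)]|$ and bounds each term by $\Tr[\ketbra{i}{i}\,|\rho-\id/2^n|]$ via the spectral decomposition, which is just your feasible-observable/H\"older step packaged termwise. Both routes amount to the contractivity of the trace norm under the completely dephasing channel, so there is nothing substantive to distinguish them.
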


\begin{proof}
Denote $\vec{p}=(p_1,...,p_{2^n})$. We can write 
\begin{align}
    \big\| {\vec{p}} - \vec{u} \big\|_1 &= \sum_k \Big| p_k - \frac{1}{2^n} \Big| \\
    &= \sum_{i \in \{0,1\}^n} \left| \Tr\Big[\ketbra{i}{i}\Big(\rho-\frac{\id}{2^n}\Big)\Big] \right|\\
    &\leq \sum_{i \in \{0,1\}^n}\Tr\Big[\ketbra{i}{i}\Big|\rho-\frac{\id}{2^n}\Big|\Big] \\
    &= \Tr\Big[\Big|\rho-\frac{\id}{2^n}\Big|\Big] = \Big\| \rho - \frac{\id}{2^n} \Big\|_1\,.
\end{align}
where in the inequality we have considered the spectral decomposition of the Hermitian operator $(\rho - \id/2^n)$.
\end{proof}

We also present a basic lemma with relevance to output probabilities of quantum circuits.

\begin{lemma}[Concentration of marginal bounds]\label{lem:marginal}
    Denote $p^{(i)}_0(C)$ as the probability of obtaining the computational basis measurement outcome $``0"$ on qubit $i$. 
    Then, for any $i=1,\dots,n$, we have
    \begin{align}
        \delta(C) \geq \bigg(p^{(i)}_0(C) -\frac{1}{2} \bigg)^2\,.
    \end{align}
\end{lemma}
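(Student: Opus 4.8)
The plan is to lower bound the full total variation distance by the deviation of the single-qubit marginal on register $i$, and then to observe that the squared bound stated in the lemma is in fact a weakening of this more natural linear bound. The substantive step is a coarse-graining (data-processing) argument: passing from the full distribution $\vec{p}(C)$ to the marginal on qubit $i$ is an application of a stochastic map, under which the $\ell_1$ distance cannot increase.

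Concretely, first I would partition the computational-basis strings $x\in\{0,1\}^n$ into $S_0=\{x:x_i=0\}$ and $S_1=\{x:x_i=1\}$ and group the terms of the $\ell_1$ sum within each set. The triangle inequality then gives the key estimate
\begin{align}
\big\|\vec{p}(C)-\vec{u}\big\|_1=\sum_{x}\Big|p_x-\tfrac{1}{2^n}\Big|\geq\Big|\sum_{x\in S_0}\big(p_x-\tfrac{1}{2^n}\big)\Big|+\Big|\sum_{x\in S_1}\big(p_x-\tfrac{1}{2^n}\big)\Big|.
\end{align}
Since $|S_0|=|S_1|=2^{n-1}$ and $\sum_{x\in S_0}p_x=p^{(i)}_0(C)$, each inner sum equals $\pm\big(p^{(i)}_0(C)-\tfrac12\big)$, so the right-hand side is $2\,\big|p^{(i)}_0(C)-\tfrac12\big|$. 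Dividing by two yields $\delta(C)\geq\big|p^{(i)}_0(C)-\tfrac12\big|$, which is exactly the contractivity of the total variation distance under the map retaining only register $i$ (the uniform distribution coarse-grains to $(\tfrac12,\tfrac12)$).

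Finally I would note that $\big|p^{(i)}_0(C)-\tfrac12\big|\in[0,\tfrac12]$, so in particular it is at most $1$, whence $\big|p^{(i)}_0(C)-\tfrac12\big|\geq\big(p^{(i)}_0(C)-\tfrac12\big)^2$; chaining this with the previous inequality gives the claim. There is no genuine obstacle here: the only point worth flagging is that the squared form in the statement is strictly weaker than the linear bound one obtains directly, so the real content lies entirely in the coarse-graining step, while the passage to the square is immediate from the bounded range of the marginal deviation. The squared version is presumably stated because it is the form convenient for the subsequent second-moment arguments that invoke this lemma.
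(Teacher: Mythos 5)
Your proof is correct and follows essentially the same route as the paper: group the $\ell_1$ sum by the value of bit $i$, apply the triangle inequality to pass to the marginal (yielding $\delta(C)\geq|p^{(i)}_0(C)-\tfrac12|$), and then use $|x|\geq x^2$ for $|x|\leq 1$ to obtain the squared form. The coarse-graining/data-processing framing is just a clean way of describing the same computation the paper performs.
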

\begin{proof}
    We denote 
    $p^{(i)}_{a}(C) = \sum_{\vec{x}:\, x_i=a} {p_{\vec{x}}}(C)$
    as the marginal probability of obtaining measurement outcome $a\in\{0,1\}$ on qubit $i$, where the sum over ${\vec x} \coloneqq (x_1, \dots, x_{n}) \in \{0,1\}^n$ runs over all $2^n/2$ possible bit strings which have $``a"$ in qubit $i$.
    The statement then follows from general properties of marginal probabilities.
    \begin{align}
        \delta(C) = \frac{1}{2} \sum_{j=1}^n \sum_{x_j=0,1} \left| {p_{\vec{x}}}(C) - \frac{1}{2^n} \right| &\geq \frac{1}{2} \sum_{x_i=0,1} \left| \sum_{j \neq i} \sum_{x_j=0,1} \left( {p_{\vec{x}}}(C) - \frac{1}{2^n} \right) \right| \\
        &= \frac{1}{2} \sum_{x_i=0,1} \left| p^{(i)}_{x_i}(C) - \frac{1}{2} \right| \\ 
        &= \left| p^{(i)}_{0}(C) - \frac{1}{2} \right|\\ &\geq \left( p^{(i)}_{0}(C) - \frac{1}{2} \right)^2\,,
    \end{align}
    where the first inequality is an application of the triangle inequality, the first equality from the definition of the marginal probability $p^{(i)}_{x_i}(C)$, and the remaining equalities from basic properties of marginal probabilities.
\end{proof}

It will also be useful to state a standard result for averaging over unitary 2-designs.

\begin{lemma}\label{lem:two-qubit-unitary}
For $t,q,r,s \in \{\id, X, Y, Z \}^{\otimes 2}$ we have
\begin{align}
    \underset{U\sim\mathbb{U}(4)}{\mathbb{E}} \Tr[tUqU^{\dag}] \Tr[rUsU^{\dag}] &= 0 \quad \text{if}\;\; t\neq r\;\; \text{or}\;\; q\neq s\,, \\
    \underset{U\sim\mathbb{U}(4)}{\mathbb{E}} \Tr[tUqU^{\dag}]^2 &= \begin{cases}
        1 \quad \text{if}\;\; t=q=\id^{\otimes 2}/4\,, \\
        0 \quad \text{if}\;\; q=\id^{\otimes 2}/4\,,\; t\neq q\,,\\
        0 \quad \text{if}\;\; t=\id^{\otimes 2}/4\,,\; q\neq t\,,\\
        1/15 \quad \text{if}\;\; t,q \neq \id^{\otimes 2}/4\,.
    \end{cases}
\end{align}
\end{lemma}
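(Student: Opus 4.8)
The plan is to read the left-hand side as a second moment of $U$ and evaluate it via the unitary $2$-design (Haar) twirl, since by hypothesis each two-qubit gate reproduces Haar averages of monomials of degree two in both $U$ and $U^\dagger$. First I would fold the product of two traces into a single trace on the doubled space $(\mathbb{C}^4)^{\otimes 2}$,
\[
\Tr[t U q U^\dagger]\,\Tr[r U s U^\dagger] = \Tr\!\big[(t\otimes r)\,(U\otimes U)(q\otimes s)(U^\dagger\otimes U^\dagger)\big],
\]
so that the average over $U$ acts only on the operator $M := \mathbb{E}_{U}\big[(U\otimes U)(q\otimes s)(U\otimes U)^\dagger\big]$.

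Next I would invoke Schur--Weyl duality: $M$ commutes with every $U\otimes U$, hence lies in the commutant of the diagonal action, which for two copies is spanned by the identity $\id^{\otimes 2}$ and the swap $\Sbb$. Thus $M = \alpha\,\id^{\otimes 2} + \beta\,\Sbb$ for scalars $\alpha,\beta$ depending on $q\otimes s$. I would fix $\alpha,\beta$ by applying the two functionals $\Tr[\,\cdot\,]$ and $\Tr[\Sbb\,\cdot\,]$ to both sides and solving the resulting $2\times2$ system, using $\Tr[\id^{\otimes 2}]=d^2$, $\Tr[\Sbb]=d$ (with $d=4$) together with the swap identities $\Tr[q\otimes s]=\Tr[q]\Tr[s]$ and $\Tr[\Sbb\,(q\otimes s)]=\Tr[qs]$; the Gram matrix $\left(\begin{smallmatrix} d^2 & d \\ d & d^2\end{smallmatrix}\right)$ is invertible, giving closed forms for $\alpha,\beta$.

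Contracting against $t\otimes r$ and using the swap identity once more yields
\[
\mathbb{E}_U \Tr[t U q U^\dagger]\,\Tr[r U s U^\dagger] = \alpha\,\Tr[t]\Tr[r] + \beta\,\Tr[tr],
\]
with $\alpha=\tfrac{d^2\Tr[q]\Tr[s]-d\Tr[qs]}{d^2(d^2-1)}$ and $\beta=\tfrac{-d\Tr[q]\Tr[s]+d^2\Tr[qs]}{d^2(d^2-1)}$. Finally I would substitute the Pauli orthogonality relations $\Tr[P]=d\,\delta_{P,\id^{\otimes 2}}$ and $\Tr[PQ]=d\,\delta_{P,Q}$ (valid since each two-qubit Pauli is Hermitian and squares to the identity). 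The off-diagonal vanishing is then immediate: $\alpha,\beta$ both vanish unless $q=s$, and the contraction $\alpha\Tr[t]\Tr[r]+\beta\Tr[tr]$ vanishes unless $t=r$. For the diagonal entries, setting $q=s$ and $t=r$ gives $\Tr[t]^2$ when $q=\id^{\otimes 2}$ (which is $1$ for the identity component and $0$ for a traceless $t$), and $\tfrac{d^2}{d^2-1}$ when $t=q$ are both traceless. Here the stray $1/15$ is reassuring, as $d^2-1=15$.

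The main obstacle will be bookkeeping rather than conceptual: I must track the normalization of the identity element (the $\id^{\otimes 2}/4$ in the statement versus the Pauli strings) so that the boundary cases land on exactly $1$, $0$, and $1/15$ — concretely, a uniform rescaling of each two-qubit Pauli by $1/d$ turns the raw value $\tfrac{d^2}{d^2-1}$ into $\tfrac{1}{d^2-1}=1/15$ and the identity value $d^2$ into $1$. Beyond that, the only checks are the swap trace identities and the invertibility of the $2\times2$ Gram system, both of which are routine once the commutant structure is in place; an equivalent route through the $S_2$ Weingarten coefficients $\mathrm{Wg}(e,d)=\tfrac1{d^2-1}$, $\mathrm{Wg}((12),d)=\tfrac{-1}{d(d^2-1)}$ would give the same constants.
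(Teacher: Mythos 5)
The paper never actually proves this lemma: it states only that the proof ``involves standard identities of 2-designs'' and defers to Refs.~\cite{dankert2009exact,harrow2009random}. Your Schur--Weyl/Weingarten computation is precisely the standard derivation those references contain, and it is correct in substance: folding the two traces onto the doubled space, writing the twirl of $q\otimes s$ as $\alpha\,\id^{\otimes 2}+\beta\,\Sbb$, solving the invertible $2\times 2$ Gram system with $\Tr[\id^{\otimes2}]=d^2$, $\Tr[\Sbb]=d$, and then applying Pauli orthogonality yields exactly the stated vanishing conditions together with the raw diagonal values $d^2$ and $d^2/(d^2-1)=16/15$. The only point to fix is your closing normalization remark: rescaling \emph{each} two-qubit Pauli by $1/d$ multiplies each trace by $1/d^2$ and the product of two traces by $1/d^4$, sending $16/15$ to $1/240$ rather than $1/15$; to land on the stated $1$ and $1/15$ you need one factor of $1/d$ per \emph{trace}, i.e.\ Hilbert--Schmidt normalization $P/\sqrt{d}$ for each Pauli (or, equivalently, normalizing only one argument of each trace). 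That wrinkle is inherited from the lemma statement itself, which writes $\id^{\otimes 2}/4$ for the identity while leaving the non-identity Paulis unnormalized; note that the convention actually invoked later, in the proof of Proposition~\ref{prop:nonuniform} where the factor $(2^{2n}/15)^L$ arises, is the unnormalized one with per-block values $16$ and $16/15$.
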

Proving Lemma~\ref{lem:two-qubit-unitary} involves standard identities of 2-designs~\cite{dankert2009exact} and we refer to~\cite{harrow2009random} for an explicit proof.

\subsection{Proof of Theorem~2}

For circuit architectures falling under Definition~\ref{def:CA}, a lower bound on the average total variation distance was obtained by Deshpande et al.~in Ref.~\cite{deshpande2022tight} in the case of uniform local Pauli noise throughout all registers, i.e.~$\PC_c = \PC_{b} = \PC_d$. 
The bound was subsequently improved by Aharonov et al.~in Ref.~\cite{aharonov2022polynomial} explicitly for local depolarizing noise (though we show here that it generalizes simply for Pauli noise) and we quote the result of Ref.~\cite{aharonov2022polynomial} here.

\begin{theorem}[Convergence of noisy random circuits to uniform (Ref.~\cite{aharonov2022polynomial}, Theorem 6)]
    For circuits drawn from the architecture of Definition~\ref{def:CA}, where the local depolarizing error rate  is uniform at $\varepsilon_d$, i.e.~$\DC_c = \DC_d = \DC_{b}$, we have
    \begin{align}
        \underset{C\sim \YC}{\mathbb{E}}[{\delta(C)}] \geq \frac{1}{12} \cdot\left(\frac{2}{5}\right)^L \cdot(1-\varepsilon_d)^{2L}.
    \end{align}
\end{theorem}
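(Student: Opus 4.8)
The plan is to bound the expected total variation distance through a single-qubit marginal and then evaluate the resulting second moment by a Pauli-path expansion. First I would fix an arbitrary qubit $i$ and invoke Lemma~\ref{lem:marginal}, which gives $\delta(C) \geq \big(p^{(i)}_0(C) - \tfrac12\big)^2$. Writing the marginal as $p^{(i)}_0(C) = \tfrac12\big(1 + \Tr[Z_i \rho(C)]\big)$ (since $\ketbra{0}{0} = (\id + Z)/2$), this becomes $\delta(C) \geq \tfrac14 \Tr[Z_i\rho(C)]^2$, so after averaging over $\YC$ it suffices to show $\mathbb{E}_{C\sim\YC}\big[\Tr[Z_i\rho(C)]^2\big] \geq \tfrac13 (2/5)^L (1-\varepsilon_d)^{2L}$.

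Next I would expand $\Tr[Z_i\rho(C)]$ in the Heisenberg picture as a sum over Pauli paths: inserting a resolution in the $n$-qubit Pauli basis between every layer turns the quantity into $\sum_{s_0 \in \{\id,Z\}^{\otimes n}}(M_C)_{Z_i,s_0}$, where $M_C$ is the product of per-layer Pauli transfer matrices and the only inputs with nonzero overlap with $\rho_0 = \ketbra{0}{0}^{\otimes n}$ are the Paulis $s_0 \in \{\id, Z\}^{\otimes n}$. Squaring and averaging over the $2$-design ensemble, Lemma~\ref{lem:two-qubit-unitary} forces the two copies to agree on every gate, collapsing the double sum to a single sum over \emph{matched} trajectories $(s_L = Z_i, s_{L-1}, \dots, s_0)$. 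Each diagonal depolarizing layer then contributes $(1-\varepsilon_d)^{2|s_\ell|}$, while each two-qubit gate contributes $\mathbb{E}_U[(M_U)^2]$, which by Lemma~\ref{lem:two-qubit-unitary} equals $1$ when the block Pauli is trivial on both sides, $1/15$ when it is non-identity on both sides, and $0$ otherwise. Crucially, every surviving matched weight is non-negative.

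Because all contributions are non-negative, I would extract the bound by restricting the sum to the cheapest non-trivial trajectories: those in which a single weight-one Pauli propagates from $Z_i$ and hops through exactly one gate per layer. The brick structure guarantees that the lone excited qubit lies in precisely one gate per layer, so every other gate sees identity on both legs and contributes $1$, while each noise layer contributes $(1-\varepsilon_d)^{2}$. Summing the active gate's weight-one-to-weight-one transitions over the two partner qubits and three Pauli types gives $2\cdot 3 \cdot \tfrac{1}{15} = \tfrac25$ per layer; the final (input) layer is constrained to terminate on the specific Pauli $Z$ so that $s_0 \in \{\id,Z\}^{\otimes n}$, costing one extra factor $\tfrac13$. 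This yields $\mathbb{E}_{C\sim\YC}\big[\Tr[Z_i\rho(C)]^2\big] \geq \tfrac13 (2/5)^L(1-\varepsilon_d)^{2L}$, and combining with the $\tfrac14$ from the marginal step gives the claimed $\tfrac{1}{12}(2/5)^L(1-\varepsilon_d)^{2L}$.

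The main obstacle is the bookkeeping in the collapse step: I must verify that the $2$-design average genuinely annihilates all cross terms between distinct trajectories (the $t\neq r$ or $q\neq s$ case of Lemma~\ref{lem:two-qubit-unitary}) and that every retained matched weight is non-negative, so that discarding the heavier trajectories is a legitimate lower bound rather than an uncontrolled truncation. A secondary subtlety is pinning down the boundary factors---the fixed output $Z_i$ and the input constraint $s_0\in\{\id,Z\}^{\otimes n}$---precisely enough to recover the constant $\tfrac13$, and hence the prefactor $\tfrac1{12}$, rather than merely the scaling $(2/5)^L(1-\varepsilon_d)^{2L}$.
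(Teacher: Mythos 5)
Your proposal is correct and follows essentially the same route as the paper (which itself reproduces Ref.~\cite{aharonov2022polynomial}): reduce to a single-qubit marginal via Lemma~\ref{lem:marginal}, expand in Pauli trajectories, use the 2-design orthogonality of Lemma~\ref{lem:two-qubit-unitary} to kill cross terms and justify truncating to the weight-one trajectories, and count $2^L3^{L-1}$ surviving paths each damped by $(1-\varepsilon_d)^{2L}/15^L$, recovering the prefactor $\tfrac14\cdot\tfrac13=\tfrac1{12}$. The only cosmetic difference is that you fix one qubit $i$ and work with $\Tr[Z_i\rho(C)]$ while the paper averages the marginal bound over all $n$ qubits (needed only for its non-uniform-noise generalization), which changes nothing in the uniform case.
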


We generalize this result for non-uniform local Pauli noise in the following statement.
\begin{proposition}\label{prop:nonuniform}
    For circuits $C$ drawn from the architecture of Definition~\ref{def:CA}, with non-uniform Pauli noise instances $\PC_{i \ell}$ on qubit $i$ at layer $\ell$, we have 
    \begin{align}\label{eq:prop-eq}
        \underset{C\sim \YC}{\mathbb{E}} [{\delta(C)}] \geq \frac{1}{12} \cdot \left(\frac{2}{5}\right)^L \cdot \prod_{i=1}^n \prod_{\ell=1}^L (1-\varepsilon_{i \ell})^{\frac{2}{n}}\,,
    \end{align}
    where $\YC$ is the ensemble of two-qubit brick-layered circuits such that each brick forms a local 2-design, and $\varepsilon_{i\ell}$ is the error rate of $\PC_{i\ell}$.
\end{proposition}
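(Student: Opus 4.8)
The plan is to follow the Pauli-path (second-moment) argument of Ref.~\cite{aharonov2022polynomial} that establishes the quoted theorem, and to isolate the single place where the noise rates enter, so that the non-uniform generalization reduces to a bookkeeping of per-site damping factors together with one convexity step. The constant $\tfrac{1}{12}(2/5)^L$ is produced by the $2$-design averaging and the combinatorics of the brickwork, both of which are insensitive to the actual noise rates; the novelty lies entirely in how the product $\prod_{i,\ell}(1-\varepsilon_{i\ell})^{2/n}$ comes to replace $(1-\varepsilon_d)^{2L}$.

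First I would apply Lemma~\ref{lem:marginal} to a single output qubit $i$, giving $\delta(C)\geq (p_0^{(i)}(C)-\tfrac12)^2=\tfrac14\Tr[Z_i\rho(C)]^2$, and average over the ensemble to obtain $\mathbb{E}_{C\sim\YC}[\delta(C)]\geq\tfrac14\,\mathbb{E}_{C\sim\YC}[\Tr[Z_i\rho(C)]^2]$. Next I would expand $\Tr[Z_i\rho(C)]=\langle 0|^{\otimes n}\mathcal{C}^\dagger(Z_i)|0\rangle^{\otimes n}$ as a sum over Pauli paths, each a sequence $s_L=Z_i,s_{L-1},\dots,s_0$ linked by the backward action of the unitary layers and attenuated by the self-adjoint Pauli-noise layers. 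Passing to the second moment and averaging each two-qubit gate independently, the first identity of Lemma~\ref{lem:two-qubit-unitary} forces the two path copies to agree on every brick, collapsing $\mathbb{E}[\Tr[Z_i\rho]^2]$ to a sum of manifestly nonnegative single-path contributions; each contribution is a product of gate factors (governed by the $1/15$ of the second identity of Lemma~\ref{lem:two-qubit-unitary}) times a squared damping $\prod_{(j,\ell)\in\mathrm{supp}(\gamma)}(1-\varepsilon_{Q_{j\ell}})^2$.

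Because every term in this collapsed sum is nonnegative, I may lower-bound it by restricting to the dominant family of low-weight (``world-line'') paths supported on qubit $i$, for which the counting of Ref.~\cite{aharonov2022polynomial} reproduces the prefactor $\tfrac{1}{12}(2/5)^L$. Two ingredients then finish the argument. Using Definition~\ref{def:PauliC}, every single-qubit Pauli damping obeys $|1-\varepsilon_Q|\geq 1-\varepsilon_{i\ell}$, so the squared dampings are bounded below by $(1-\varepsilon_{i\ell})^2$; this is the only point at which we pass from general Pauli noise to the stated rate, and it yields for each $i$ a bound $\mathbb{E}[\delta(C)]\geq\tfrac{1}{12}(2/5)^L\prod_{\ell=1}^L(1-\varepsilon_{i\ell})^2$. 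Since this holds for every $i$, I would take the geometric mean over the $n$ qubits, using $\max_i a_i\geq(\prod_i a_i)^{1/n}$, which converts the per-qubit product into the symmetric one and gives exactly $\mathbb{E}[\delta(C)]\geq\tfrac{1}{12}(2/5)^L\prod_{i=1}^n\prod_{\ell=1}^L(1-\varepsilon_{i\ell})^{2/n}$, reducing to the quoted theorem when all rates equal $\varepsilon_d$.

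The main obstacle is making step three quantitatively tight: one must check that the sub-sum of paths whose noise dependence factorizes through qubit $i$ alone still reproduces the full base $2/5$ per layer, and not a smaller constant. If the strictly stationary world-line underestimates this base, the remedy is to keep a translation-symmetric family of low-weight paths and apply the arithmetic-mean--geometric-mean inequality across that family rather than across qubits; by the even-$n$, even-$L$ periodic structure of Definition~\ref{def:CA}, each site $(i,\ell)$ is then covered equally, so the geometric mean again produces the exponent $2/n$ while the arithmetic mean retains the $\tfrac{1}{12}(2/5)^L$ prefactor inherited from Ref.~\cite{aharonov2022polynomial}. Reconciling these two requirements---exact constant and uniform site-coverage---is the crux of the generalization.
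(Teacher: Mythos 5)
Your proposal follows the same Pauli-path second-moment strategy as the paper, and the obstacle you flag at the end is exactly the crux: the strictly stationary world-line on qubit $i$ contributes only $3^{L-1}$ trajectories, giving a base of $\tfrac{1}{12}(1/5)^L$ rather than $\tfrac{1}{12}(2/5)^L$, and more fundamentally the full family of $2^L 3^{L-1}$ weight-$(L+1)$ trajectories ending in $Z_i$ wanders spatially through the brickwork, so its damping involves $\prod_{\ell}(1-\varepsilon_{j_\ell,\ell})$ with $j_\ell$ varying and cannot be factored through qubit $i$'s column alone; consequently the clean per-qubit bound $\tfrac{1}{12}(2/5)^L\prod_\ell(1-\varepsilon_{i\ell})^2$ you would feed into the geometric mean over $i$ is not available. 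Your proposed remedy is precisely the paper's argument: it keeps the $1/n$-weighted sum over all endpoints $i$ and all $2^L3^{L-1}$ weight-$(L+1)$ trajectories per endpoint, applies the arithmetic--geometric mean inequality across all $n\cdot 2^L 3^{L-1}$ nonnegative terms (retaining the $\tfrac{1}{12}(2/5)^L$ prefactor), and then invokes the periodic boundary conditions to argue that every site $(j,\ell)$ is covered with equal multiplicity in the resulting product, which is what produces the exponent $2/n$. So the proof goes through, but only via the fallback you describe, not via the per-qubit route you present as the main line.
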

\begin{proof}
    For a random circuit $C$ drawn from the distribution described in Definition~\ref{def:CA}, denote the unitary layers by $\{\UC_\ell\}$ and the local noise layers as $\{\N_\ell=\bigotimes_i \PC_{i \ell}\}$ composed of  Pauli noise channels $\PC_{i \ell}$ acting on qubit $i$ for layer $\ell = 1, \dots, L$. 
    Further, let $\vec{s} \coloneqq (s_0, s_1, \dots, s_L)$ denote a list of Pauli tensor-product strings corresponding to the trajectory of Pauli strings through the circuit, where $s_0$ denotes the input Pauli string and $s_{\ell} \in \mathrm{P}_n$ for $\ell \geq 0$ denotes the Pauli string output of layer $\ell$.
    We follow the proof of Ref.~\cite{aharonov2022polynomial} and write the probability of measuring the bit string $\vec{x}\in \{0,1\}^n$ with the circuit $C$ as a sum over all possible $4^{n(L+1)}$ trajectories of Pauli strings,
    \begin{align}
        p_{\vec{x}}(C) &= \Tr\big[\ketbra{\vec{x}}{\vec{x}} \NC_L\circ\UC_L \circ \cdots   \circ \NC_1\circ\UC_1(  \ketbra{0^n}{0^n})\big] \\
        &=\sum_{\vec{s}\in \mathrm{P}^{L+1}_n} g(C,\vec{s}, \vec{x})\,,\label{eq:fourier}
    \end{align}
    where $g(C,\vec{s}, \vec{x})$ is the probability of a trajectory $\vec{s}$ occurring and resulting in the output $\vec{x}$. Explicitly, we can write
    \begin{align}
        g(C,\vec{s}, \vec{x}) 
        &=  \frac{1}{2^{n(L+1)}} \Tr\big[\ketbra{\vec{x}}{\vec{x}} s_L\big]\Tr\big[s_L \NC_L\circ\UC_L(s_{L-1})\big] \cdots  \Tr\big[s_1 \NC_1\circ\UC_1(s_0)\big] \Tr\big[s_0 \ketbra{0^n}{0^n}\big]\\
        &=  \frac{1}{2^{n(L+1)}} \Tr\big[\ketbra{\vec{x}}{\vec{x}} s_L\big]\Tr\big[\NC_L(s_L)\UC_L(s_{L-1})\big] \cdots  \Tr\big[\NC_1(s_1) \UC_1(s_0)\big] \Tr\big[s_0 \ketbra{0^n}{0^n}\big]\\
        &=  \frac{1}{2^{n(L+1)}} \chi(\vec{s})\, \Tr\big[\ketbra{\vec{x}}{\vec{x}} s_L\big]\Tr\big[s_L \UC_d(s_{L-1})\big] \cdots  \Tr\big[s_1 \UC_1(s_0)\big] \Tr\big[s_0 \ketbra{0^n}{0^n}\big]\,, \label{eq:trajectory}
    \end{align}
    where in the second line we have used the fact that the noise channel is self-adjoint and in the third line that it maps Pauli strings to themselves, i.e. $\NC_{\ell}(s_{\ell}) \propto s_{\ell}$ for $\ell = 1, \dots, L$, with a signed damping factor according to Definition \ref{def:PauliC}, which we have factored out to obtain the total damping factor $\chi(\vec{s})$ of the trajectory $\vec{s}$ due to noise.
    In particular,
    \begin{align}\label{eq:chi}
        |\chi(\vec{s})| &= \prod_{\ell=1}^L \big|\Tr[s_{\ell} \NC_{\ell}(s_{\ell})] \big|  \\
        &\geq  \prod_{\ell=1}^L \prod_{i\in \mathrm{H}(s_{\ell})} (1-\varepsilon_{i \ell}),
    \end{align}
    where $\mathrm{H}(s_{\ell})$ is the qubit index set of all qubits with non-identity Pauli elements in the Pauli string $s_{\ell}$, and we recall that $\varepsilon_{i \ell}$ denotes largest error rate for Pauli channel $\PC_{i \ell}$. Thus, $\chi(\vec{s})$ contains at worst a damping factor of $(1-\varepsilon_{i \ell})$ for each non-identity element of each Pauli string $s_{\ell}$ along the trajectory.
    
    We now consider the marginal distribution on each qubit. Recalling that we denote the probability of obtaining computational basis measurement result $``0"$ on qubit $i$ as $p_0^{(i)}$, Lemma \ref{lem:marginal} implies that
    \begin{align}\label{eq:TVD-bound}
        \delta(C) \geq \frac{1}{n} \sum_{i=1}^n \bigg( p_0^{(i)}(C) - \frac{1}{2} \bigg)^2\,.
    \end{align}
    Each $p_0^{(i)}(C)$ is made up of trajectories of the form $g(C,\vec{s}, 0_i)$ given by Eq.~(\ref{eq:trajectory}) with measurement operator $\ketbra{0}{0}_i$ (zero state projector on qubit $i$). 
    Taking expectation values over the circuit ensemble $\YC$ of Definition~\ref{def:CA}, we get
    \begin{align}\label{eq:squared-marginal-bound}
        \underset{C\sim \YC}{\mathbb{E}}[\delta(C)] 
        &\geq \frac{1}{n} \sum_{i=1}^n \underset{C\sim \YC}{\mathbb{E}} \bigg( p_0^{(i)}(C) - \frac{1}{2} \bigg)^2 \\
        &= \frac{1}{n} \sum_{i=1}^n \underset{C\sim \YC}{\mathbb{E}} \bigg( \sum_{\substack{\vec{s}\in \mathrm{P}^{L+1}_n, \\ (s_L)_i = Z}} g(C,\vec{s}, 0_i) - \frac{1}{2} \bigg)^2\\
        &= \frac{1}{n} \sum_{i=1}^n \underset{C\sim \YC}{\mathbb{E}} \bigg( \sum_{\substack{\vec{s}:|\vec{s}|>0, \\ (s_L)_i = Z}} g(C,\vec{s}, 0_i) \bigg)^2\\
        &\geq \frac{1}{n} \sum_{i=1}^n \sum_{\substack{\vec{s}:|\vec{s}|=L+1, \\ (s_L)_i = Z}}  \underset{C\sim \YC}{\mathbb{E}}\, g(C,\vec{s}, 0_i)^2 \,,\label{eq:average_expectation}
    \end{align}
    where in the second line we have used Eq.~(\ref{eq:fourier}) with $|\vec{s}|\coloneqq |s_0| + |s_1| + ... + |s_L|$ denoting the sum of weights of trajectories ending in a Pauli $Z$ on qubit $i$. 
    Then, in the third line, we have evaluated the trajectory corresponding to the identity Pauli string $\id^{\otimes n}$ corresponding to $|\vec{s}|=0$, and in the fourth line we have discarded all terms except those that correspond to the squares of $g(C,\vec{s}, 0_i)$ with the smallest weight $|\vec{s}| = L + 1$, i.e.~$|s_0| = |s_1| = ... = |s_L|=1$, where the cross-terms can be dropped because $g(C,\vec{s}, 0_i)$ are probabilities and thus positive.
    
    Lemma \ref{lem:two-qubit-unitary} and Eq.~(\ref{eq:trajectory}) further imply that 
    \begin{align}
        \underset{C\sim \YC}{\mathbb{E}} \, g(C,\vec{s}, 0_i)^2 &= \frac{1}{2^{2n(L+1)}} \chi(\vec{s})^2 \Tr\big[\ketbra{0}{0}_i s_L\big]^2 \cdot \left(\frac{2^{2n}}{15}\right)^L \cdot \Tr\big[s_0 \ketbra{0^n}{0^n}\big]^2\\
        &= \frac{1}{2^{2n(L+1)}}  \chi(\vec{s})^2 \, 2^{2n-2}  \cdot\left(\frac{2^{2n}}{15}\right)^L \cdot 1\\ 
        &= \frac{\chi(\vec{s})^2}{4\cdot15^L}\,,
    \end{align}
    for all $|\vec{s}|=L+1, (s_L)_i = Z$. 
    With careful inspection of Lemma \ref{lem:two-qubit-unitary}, $|\vec{s}|=L+1$ implies that each $|s_i|=1$, i.e.~such trajectories can only contain Pauli strings of weight 1 mapped to Pauli strings of weight 1. For a fixed qubit register $i$, there are $2^L\cdot3^{L-1}$ such trajectories so that
    \begin{equation}\label{eq:sum-of-terms}
        \sum_{\substack{\vec{s}:|\vec{s}|=L+1, \\ (s_L)_i = Z}} 1 = 2^L\cdot3^{L-1},
    \end{equation}
    
    Recall each trajectory contributes a total damping factor of the form $\chi(\vec{s}) \geq \prod_{\ell=1}^L (1-\varepsilon_{i \ell})$ where now $i$ indicates the unique register of the one non-identity Pauli of trajectory $\vec{s}$ at layer $\ell$ according to Eq.~(\ref{eq:chi}). 
    Combining the above enables us to write
    \begin{align}
        \underset{C\sim \YC}{\mathbb{E}}[\delta(C)] &\geq \frac{1}{n} \sum_{i=1}^n \underset{C\sim \YC}{\mathbb{E}} \sum_{\substack{\vec{s}:|\vec{s}|=L+1, \\ (s_L)_i = Z}} g(C,\vec{s}, 0_i)^2 \\
        &= \frac{1}{n} \frac{1}{4\cdot15^L}\sum_{i=1}^n  \sum_{\substack{\vec{s}:|\vec{s}|=L+1, \\ (s_L)_i = Z}} \chi(\vec{s})^2 \\
        &\geq  \frac{2^L\cdot3^{L-1}}{4\cdot15^L}  \Bigg(\prod_{i=1}^n\prod_{\substack{\vec{s}:|\vec{s}|=L+1, \\ (s_L)_i = Z}} \chi(\vec{s})^2 \Bigg)^{1/\left( 2^L 3^{L-1} n \right)} \label{eq:am-gm}\\
        &\geq  \frac{2^L\cdot3^{L-1}}{4\cdot15^L}  \Bigg(\prod_{i=1}^n\prod_{\ell=1}^L (1-\varepsilon_{i \ell})^{2\cdot2^L 3^{L-1}} \Bigg)^{1/\left( 2^L 3^{L-1} n \right)} \label{eq:depVpauli}\\
        &= \frac{1}{12}\left( \frac{2}{5} \right)^L \prod_{i=1}^n \prod_{\ell=1}^L (1-\varepsilon_{i \ell})^{\frac{2}{n}}.
    \end{align}    
    The first inequality is a reproduction of Eq.~(\ref{eq:average_expectation}). 
    The second inequality is an application of the arithmetic-geometric means inequality, where we have $2^L3^{L-1}n$ terms due to Eq.~(\ref{eq:sum-of-terms}). We note that equality holds in the special case when the error rate $\varepsilon$ is uniform throughout all registers and layers. The third inequality comes from our bound on the total damping factor, and from observing that all Pauli trajectories of weight $L+1$ appear an equal number of times in the product due to the symmetry of the circuit (periodic boundary conditions). 
\end{proof}

In the main text we are mostly interested in the particular case of circuits with registers that are clean (low error rate), if they correspond to a logical qubit that undergoes a QEC cycle; or noisy (high error rate) if they correspond to a physical qubit without error correction. In addition, we also consider an even higher error rate each time a noisy qubit is coupled to a clean qubit. 
We let $n_c$ and $n_d$ be the number of clean and noisy registers respectively. We presume all registers experience local depolarizing noise, where each layer of depolarizing noise consists of channels with different error rates depending on the register: $\varepsilon_c$ on the clean registers, $\varepsilon_{b}$ on the noisy register following a clean-noisy coupling, $\varepsilon_{b'}$ on the clean register following a clean-noisy coupling and $\varepsilon_d$ on the remaining noisy registers. 
Proposition~\ref{prop:nonuniform} immediately implies our main result (Theorem~2) in the context of the clean-noisy setup.

\begin{corollary}[Generalized form of Theorem 2 in the main text]
    Consider the $L$-layered brick-layered circuit with periodic boundary conditions in Fig.~3 (main text) on an even number of qubits $n$, under the local Pauli noise model of Eq.~(\ref{def:PauliC}), characterized by four error rates $\varepsilon_c, \varepsilon_d, \varepsilon_b , \varepsilon_{b'}$.  
    For any even $L$ the output state $\rho(C)$ satisfies 
    \begin{align}\label{eq:lower-bound-appdx}
        \underset{C\sim \YC}{\mathbb{E}} &\Big[T\big(\rho(C), \frac{\id}{2^n}\big)\Big] \geq \underset{C\sim \YC}{\mathbb{E}}\Big[\delta\big(\rho(C), \frac{\id}{2^n}\big)\Big] \geq \nonumber \\
        &\geq \frac{1}{12} \left(\frac{2}{5}\right)^L (1-\varepsilon_{c})^{ 2Lf_c} (1-\varepsilon_{d})^{ 2Lf_d} (1-\varepsilon_{b})^{2Lf_b} (1-\varepsilon_{b'})^{2Lf_{b'}}\,,
    \end{align}
    where $\YC$ is the ensemble of brick-layered circuits composed of local 2-designs, and fractions $f_c, f_d, f_b, f_{b'}$ are given by
    \begin{align}
        f_c=\frac{n_c}{n}-\frac{n_b}{2n},\; f_d=\frac{n_d}{n}-\frac{n_b}{2n},\; f_b= f_{b'}=\frac{n_b}{2n} \,,
    \end{align}
    such that $f_c + f_d + f_b + f_{b'}= 1$, where $n_c, n_d$ are the respective numbers of error-corrected qubits and noisy qubits satisfying $n_c+n_d=n$, and $n_b$ is the number of boundaries between the noisy and error-corrected qubits.
\end{corollary}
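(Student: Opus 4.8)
The plan is to obtain this statement as an immediate specialization of Proposition~\ref{prop:nonuniform}, which already supplies the general lower bound
\[
\underset{C\sim \YC}{\mathbb{E}} [{\delta(C)}] \geq \frac{1}{12} \left(\frac{2}{5}\right)^L \prod_{i=1}^n \prod_{\ell=1}^L (1-\varepsilon_{i \ell})^{2/n}
\]
for an arbitrary assignment of single-qubit Pauli error rates $\varepsilon_{i\ell}$ to register $i$ at layer $\ell$. The trace-distance half of the claim then follows for free: Lemma~\ref{lem:td-tvd} gives $T(\rho(C),\id/2^n)\geq\delta(\rho(C),\id/2^n)$ for every fixed $C$, so the inequality survives taking $\mathbb{E}_{C\sim\YC}$. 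Hence all that remains is to evaluate the double product for the specific three-rate noise pattern of the clean-noisy architecture of Figure~\ref{fig:error_scramble_setup}.

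First I would partition the $nL$ pairs $(i,\ell)$ into three disjoint classes by their error rate. Each clean register carries $\varepsilon_c$ in every layer, contributing exactly $n_c L$ pairs. A noisy register carries the boosted rate $\varepsilon_b$ only in those layers where it is coupled across a clean-noisy boundary, and the bulk rate $\varepsilon_d$ in all its remaining layers.

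Next I would count the boundary couplings. The structural input is that in the brick-layered ring each nearest-neighbor pair is acted on by a two-qubit gate in exactly half of the $L$ layers (this is where evenness of $L$ enters), so each of the $n_b$ clean-noisy boundaries produces $L/2$ boundary couplings over the whole circuit. Thus there are $n_b L/2$ pairs with rate $\varepsilon_b$, and the leftover $n_d L - n_b L/2$ noisy pairs carry $\varepsilon_d$; these three counts sum to $n_c L + n_b L/2 + (n_d L - n_b L/2) = nL$, confirming the partition is exhaustive.

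Finally I would substitute the counts into the exponent $2/n$ of Proposition~\ref{prop:nonuniform}, giving
\[
\prod_{i,\ell} (1-\varepsilon_{i\ell})^{\frac{2}{n}} = (1-\varepsilon_c)^{\frac{2}{n} n_c L}\,(1-\varepsilon_b)^{\frac{2}{n}\frac{n_b L}{2}}\,(1-\varepsilon_d)^{\frac{2}{n}\left(n_d L-\frac{n_b L}{2}\right)},
\]
and read off the exponents as $2Lf_c$, $2Lf_b$, $2Lf_d$ with $f_c,f_d,f_b$ exactly as defined, which sum to $1$ by construction. The only real obstacle is the combinatorial bookkeeping of the third step --- justifying the $L/2$ factor from the alternating brick pattern and verifying that the boundary, bulk, and clean classes genuinely partition the full set of $nL$ damping factors so that none is double-counted or dropped; everything else is routine algebra applied to Proposition~\ref{prop:nonuniform}.
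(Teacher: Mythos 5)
Your proposal is correct and follows essentially the same route as the paper: invoke Lemma~\ref{lem:td-tvd} for the trace-distance half, then specialize Proposition~\ref{prop:nonuniform} by counting $2n_c$, $n_b$, and $2n_d-n_b$ noise instances of rates $\varepsilon_c$, $\varepsilon_b$, $\varepsilon_d$ per pair of layers, which is exactly your partition of the $nL$ damping factors. The bookkeeping you flag as the only real obstacle is precisely what the paper's proof does by inspecting Figure~\ref{fig:error_scramble_setup}, so there is no gap.
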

\begin{proof}
    Eq.~(\ref{eq:prop-eq}) of Proposition \ref{prop:nonuniform}, states that there is factor of $(1-\varepsilon_{i\ell})^{\frac{2}{n}}$ for each instance of noise with error rate $\varepsilon_{i\ell}$. Inspecting Fig.~3 in the main text, we see that in every $2$ layers, there are $2n_d-n_b$ instances of noise with error rate $\varepsilon_d$, $2n_c-n_b$ instances of noise with rate $\varepsilon_c$, $n_b$ instances of noise with rate $\varepsilon_b$, and $n_b$ instances of noise with rate $\varepsilon_{b'}$. The result then comes directly by substituting in these damping factors. 
\end{proof}

\subsection{Generalization to correlated Pauli noise}

We can further generalize Proposition~\ref{prop:nonuniform} for arbitrary (correlated) Pauli noise as defined in Definition~\ref{def:PauliCorr}.
\begin{proposition}\label{prop:nonuniform2}
    Consider circuits $C$ drawn from the architecture of Definition~\ref{def:CA}, with layers of correlated Pauli noise instances $\bigotimes_{i=1}^{k(\ell)} \PC_{i \ell}$ at each layer $\ell$ (see Definition \ref{def:PauliCorr}). Denote the number of qubits that $\PC_{i\ell}$ acts on as $m(i, \ell)$. We have 
    \begin{align}\label{eq:prop-eq-corr}
        \underset{C\sim \YC}{\mathbb{E}} [{\delta(C)}] \geq \frac{1}{12} \cdot \left(\frac{2}{5}\right)^L \cdot \prod_{i=1}^{k(\ell)} \prod_{\ell=1}^L (1-\varepsilon_{i \ell})^{\frac{2}{n}m(i,\ell)}\,,
    \end{align}
    where $\YC$ is the ensemble of two-qubit brick-layered circuits such that each brick forms a local 2-design, and $\varepsilon_{i\ell}$ is the error rate of noise channel $\PC_{m(i,\ell),\ell}$.
\end{proposition}
\begin{proof}
    Closely following the proof of Proposition \ref{prop:nonuniform}, Pauli trajectories have damping factor that satisfy
    \begin{align}\label{eq:chi-m}
        |\chi(\vec{s})| &= \prod_{\ell=1}^L \big|\Tr[s_{\ell} \NC_{\ell}(s_{\ell})] \big|   \geq  \prod_{\ell=1}^L \prod_{i\in \mathrm{H}(s_{\ell})} (1-\varepsilon_{j(i,\ell) \ell}) ,
    \end{align}
    where, differently to Eq.~(\ref{eq:chi}), here $j(i,\ell)$ is the function that returns the index the noise channel that affects qubit $i$ in layer $\ell$. As in the proof of Proposition \ref{prop:nonuniform}, we only need to consider the set of Pauli trajectories with one non-identity Pauli -- for those trajectories, the sum over $i$ only contains one element. Considering only those trajectories, the product of damping factors over $i$ amounts to
    \begin{align}
        \prod_{i=1}^n |\chi(\vec{s})| & \geq  \prod_{\ell=1}^L \prod_{i=1}^n (1-\varepsilon_{j(i,\ell) \ell})\,.
    \end{align}
    We can equivalently write $\prod_{i=1}^n |\chi(\vec{s})| \geq  \prod_{i=1}^{k(\ell)}\prod_{\ell=1}^L (1-\varepsilon_{i \ell})^{\frac{2}{n}m(i,\ell)}$ by relabeling $i$ as the index that counts over noise channels. With this new expression, the proof follows as the proof of Proposition \ref{prop:nonuniform}.
\end{proof}

The noise channel considered in Proposition \ref{prop:nonuniform2} and accompanying result in Eq.(\ref{eq:prop-eq-corr}) is more general than the one considered in  Proposition \ref{prop:nonuniform} -- that is, it includes single-qubit Pauli noise as the special case when $m(i,\ell) = 1$ for all $i$ and $\ell$. Importantly, it also includes models of correlated noise such as 2-qubit Pauli noise interactions.

\section{Quantum error correction implementation for the Steane code}
\label{app:QEC_impl}

The Steane code encodes logical $\ket{\overline{0}}, \ket{\overline{1}}$ states using seven physical qubits
\begin{align}
\ket{\overline{0}} &=  \frac{1}{\sqrt{8}}\big( \ket{0000000} + \ket{1010101} + \ket{0110011} + \ket{1100110} + \ket{0001111} + \ket{1011010} +\ket{0111100} + \ket{1101001}  \big)\,,
\end{align}
\begin{align}
\ket{\overline{1}} &=  \frac{1}{\sqrt{8}}\big( \ket{1111111} + \ket{0101010} + \ket{1001100}+\ket{0011001} + \ket{1110000} + \ket{0100101} +\ket{1000011} + \ket{0010110}  \big)\,.
\end{align}
In this section, we denote these physical qubits as $q_0$, $q_1$, $q_2$, $q_3$, $q_4$, $q_5$, $q_6$.

We prepare $\ket{\overline{0}}$ using a non-fault-tolerant circuit from Fig.~\ref{fig:circ_0log}. This circuit involves an ancilla, whose measurement is used to detect errors that happen during the circuit execution --- when the measurement outcome is $1$, the prepared state is discarded.  This $\ket{\overline{0}}$ state preparation method was demonstrated with real-world hardware~\cite{ryan2021realization}. 

\begin{figure}[t]
    \centering 
    \includegraphics[width=0.4\columnwidth]{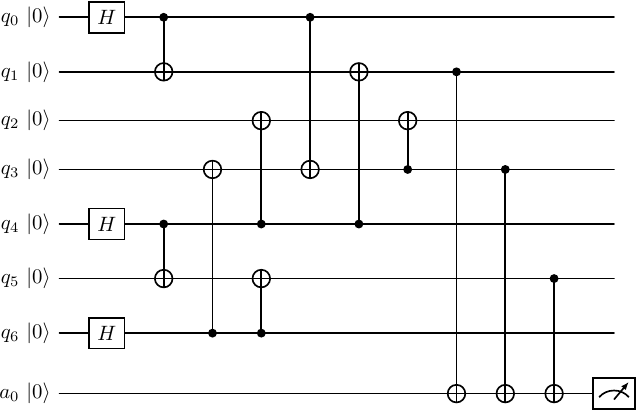}
    \caption{ {\bf A circuit for $\ket{\overline{0}}$  preparation. } The circuit prepares $\ket{\overline{0}}$ state using 7 physical qubits ($q_0$, $q_1$, $q_2$, $q_3$, $q_4$, $q_5$, $q_6$) and an ancilla $a_0$. If the ancilla measurement gives a $0$ outcome, the state is accepted. Otherwise, it is discarded.    }
    \label{fig:circ_0log}
\end{figure}

Logical single-qubit Clifford gates that are used here are implemented 
with physical Clifford gates as
 \begin{equation}
\overline{X} =  \prod_{i=0}^6 X_{i} \,, \quad
\overline{Y} = \prod_{i=0}^6 Y_{i} \,, \quad
\overline{Z} = \prod_{i=0}^6 Z_i,   , \quad
\overline{S} = \prod_{i=0}^6 S_i^{\dag} \,, \quad 
\overline{S}^{\dag} = \prod_{i=0}^6 S_i  \,, \quad 
\overline{H} = \prod_{i=0}^6 H_{i} \,. \label{eq:Z_impl0}
\end{equation}
Here, we assume that a physical gate with an index  $i \in \{0,1,\dots,6 \}$ acts at a qubit $q_i$. We show implementations of logical $CNOT$ gates involving encoded qubits in Fig.~\ref{fig:circ_CNOTcc} and Fig.~\ref{fig:circ_CNOTcd}.

\begin{figure}[t]
    \centering 
    \includegraphics[width=0.33\columnwidth]{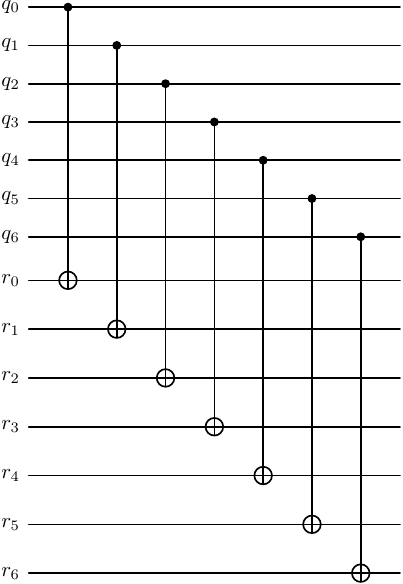}
     \caption{ {\bf A $\overline{CNOT}$ circuit. }  A circuit implementing a logical $CNOT$ gate between two encoded qubits that are built of physical qubits $q_0,q_1.q_2,q_3,q_4,q_5,q_6$ and  $r_0,r_1.r_2,r_3,r_4,r_5,r_6$. The $CNOT$ is controlled at a logical qubit encoded with  $q_i$ physical qubits. }
    \label{fig:circ_CNOTcc}
\end{figure}

\begin{figure}[t]
    \centering 
    \includegraphics[width=0.45\columnwidth]{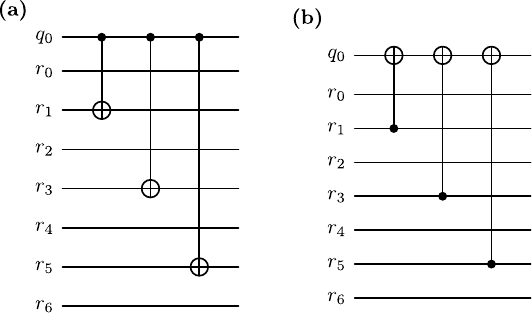}
    \caption{ {\bf $\widetilde{CNOT}$ circuits. }  Circuits realizing logical $CNOT$ gates between a noisy qubit $q_0$ and an encoded one with physical registers $r_0, r_1, r_2, r_3, r_4, r_5, r_6$. In {\bf (a)}, the logical gate is controlled at the noisy qubit,  while in {\bf (b)}, the control qubit is the encoded one. }
    \label{fig:circ_CNOTcd}
\end{figure}

We follow the logical gates acting at the error-corrected qubits by single QEC rounds applied to these qubits. 
We perform a QEC round by applying syndrome circuits to an encoded qubit. First, we execute a flagged syndrome circuit $S_1^f$ from Fig.~\ref{fig:circ_syndf}. The circuit measures three syndromes at the same time by measuring three ancilla qubits. If any of these measurements gives an outcome other than $0$, we proceed with six unflagged syndrome circuits $S_1, S_2, S_3, S_4, S_5, S_6$ from Fig.~\ref{fig:circ_synduf}. Each of them measures a single syndrome. We decode an error using the results of the flagged and the unflagged syndrome circuits, as detailed in Table~\ref{tab:dec_sf1}, and apply a recovery circuit specified in Table~\ref{tab:dec_sf1}. When the obtained syndromes are not consistent with a single-qubit error happening before or during the syndrome circuit execution, we discard the run since the Steane code is not capable of correction for two-qubit and higher-weight errors. Therefore, we post-select based on syndrome outcomes to reject runs that are likely irrecoverably corrupted by errors.

Next, we apply a flagged syndrome circuit $S_2^f$ from Fig.~\ref{fig:circ_syndf}. Again, if a measurement of the ancilla qubits gives an outcome other than all-zeros, we follow that with the unflagged circuits $S_1, S_2, S_3, S_4, S_5, S_6$. We decode an error and a recovery operation according to Table~\ref{tab:dec_sf2}. If the syndrome outcome is inconsistent with a single-qubit error that occurred before or during $S_2^f, S_1^u, S_2^u, S_3^u, S_4^u, S_5^u, S_6^u$ execution, we discard a run. 
This implementation of a QEC round is partially fault-tolerant, and adapts a QEC real-world implementation~\cite{ryan2021realization} to our partial error correction setup.

 \begin{figure}[t]
    \centering 
    \includegraphics[width=\columnwidth]{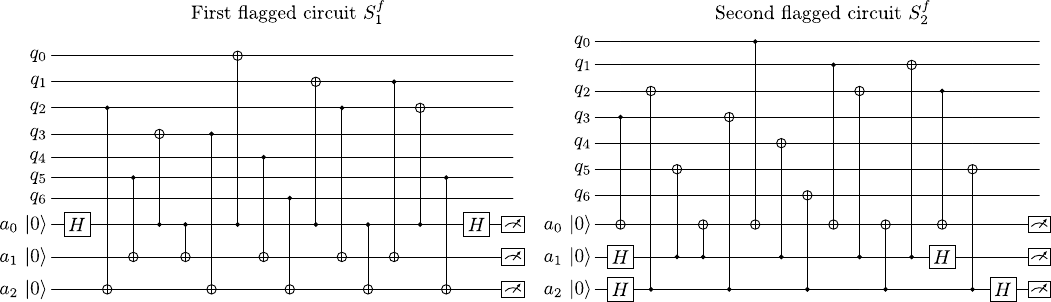}
    \caption{  {\bf Flagged syndrome circuits.} Circuits used for syndrome detection in a flagged manner during a QEC round. See details in the text.  }
    \label{fig:circ_syndf}
\end{figure}

 \begin{figure}[t]
    \centering 
    \includegraphics[width=\columnwidth]{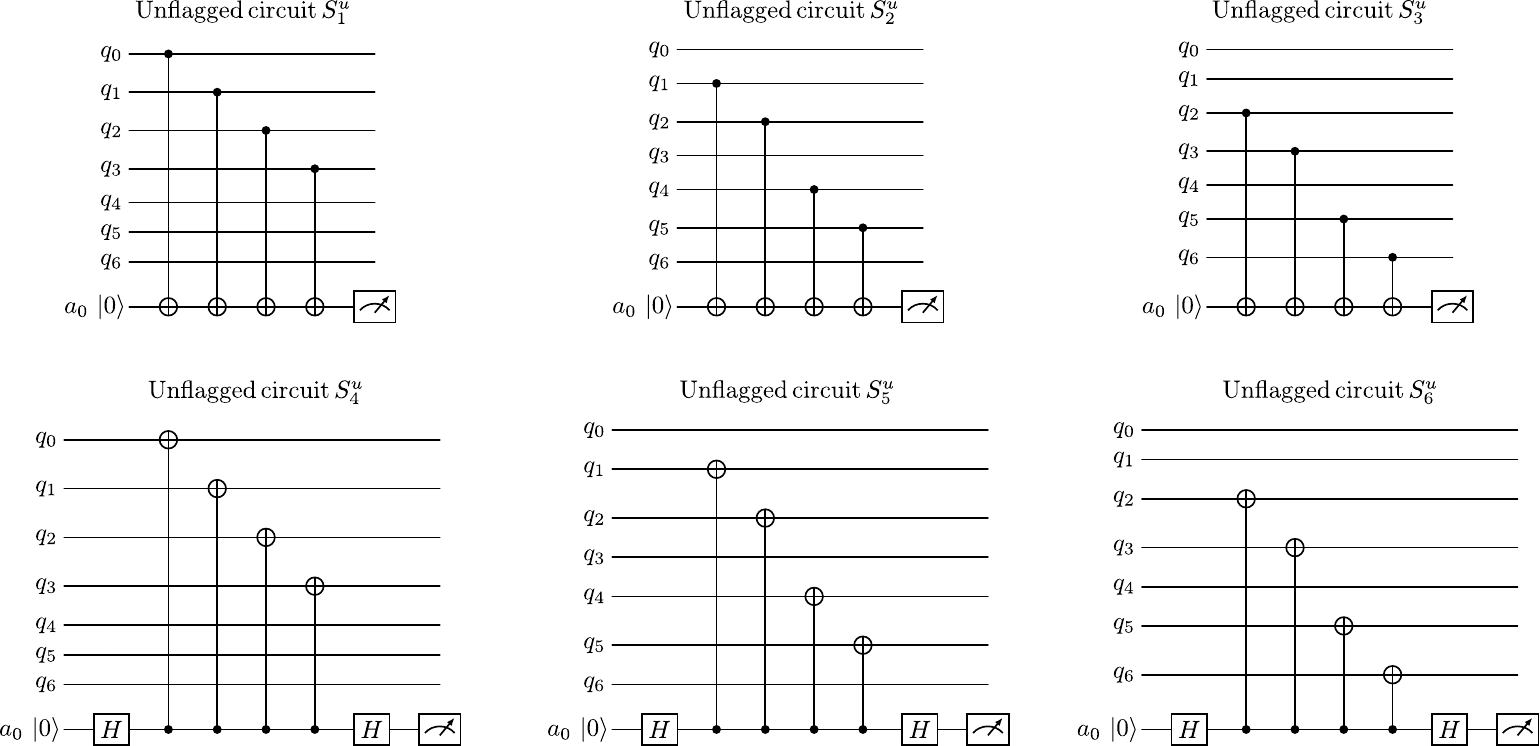}
    \caption{ {\bf Unflagged syndrome circuits.} Circuits used for syndrome detection without flag qubits during a QEC round. See details in the text. 
}
    \label{fig:circ_synduf}
\end{figure}

\begin{table}[p]
\begin{tabular}{|c|c|c|}
\hline
$(a_0,a_1,a_2)$ & $(s_1,s_2,s_3,s_4,s_5,s_6)$   & a recovery operation  \\
\hline
$(1, 0, 0)$ & $(1, 0, 0, 1, 0, 0)$ & $Y_0$ \\
$(1, 0, 0)$ & $(0, 0, 0, 1, 0, 0)$ & $Z_0$ \\
$(0, 1, 0)$ & $(1, 1, 0, 0, 0, 0)$ & $X_1$ \\
$(1, 1, 0)$ & $(1, 1, 0, 1, 1, 0)$ & $Y_1$ \\
$(1, 0, 0)$ & $(0, 0, 0, 1, 1, 0)$ & $Z_1$ \\
$(0, 1, 1)$ & $(1, 1, 1, 0, 0, 0)$ & $X_2$ \\
$(1, 1, 1)$ & $(1, 1, 1, 1, 1, 1)$ & $Y_2$ \\
$(1, 0, 0)$ & $(0, 0, 0, 1, 1, 1)$ & $Z_2$ \\
$(0, 0, 1)$ & $(1, 0, 1, 0, 0, 0)$ & $X_3$ \\
$(1, 0, 1)$ & $(1, 0, 1, 1, 0, 1)$ & $Y_3$ \\
$(1, 0, 0)$ & $(0, 0, 0, 1, 0, 1)$ & $Z_3$ \\
$(0, 1, 0)$ & $(0, 1, 0, 0, 0, 0)$ & $X_4$ \\
$(0, 1, 0)$ & $(0, 1, 0, 0, 1, 0)$ & $Y_4$ \\
$(0, 1, 1)$ & $(0, 1, 1, 0, 0, 0)$ & $X_5$ \\
$(0, 1, 1)$ & $(0, 1, 1, 0, 1, 1)$ & $Y_5$ \\
$(0, 0, 1)$ & $(0, 0, 1, 0, 0, 0)$ & $X_6$ \\
$(0, 0, 1)$ & $(0, 0, 1, 0, 0, 1)$ & $Y_6$ \\
$(1, 0, 0)$ & $(0, 0, 0, 0, 0, 0)$ & error not detected \\
$(0, 1, 0)$ & $(0, 0, 0, 0, 0, 0)$ & error not detected \\
$(0, 0, 1)$ & $(0, 0, 0, 0, 0, 0)$ & error not detected \\
$(1, 0, 1)$ & $(1, 0, 1, 0, 0, 0)$ & $X_3$ \\
$(0, 0, 1)$ & $(1, 0, 1, 1, 0, 1)$ & $Y_3$ \\
$(0, 1, 0)$ & $(1, 1, 1, 0, 0, 0)$ & $X_2$ \\
$(1, 1, 0)$ & $(1, 1, 1, 1, 1, 1)$ & $Y_2$ \\
$(1, 0, 1)$ & $(0, 0, 0, 1, 1, 1)$ & $Z_2$ \\
$(0, 0, 1)$ & $(0, 1, 1, 0, 0, 0)$ & $X_5$ \\
$(0, 0, 1)$ & $(0, 1, 1, 0, 1, 1)$ & $Y_5$ \\
$(0, 1, 0)$ & $(0, 0, 0, 0, 1, 1)$ & $Z_5$ \\
$(0, 1, 1)$ & $(1, 0, 1, 0, 0, 0)$ & $X_3$ \\
$(1, 1, 1)$ & $(1, 0, 1, 0, 0, 0)$ & $X_3$ \\
$(1, 1, 0)$ & $(0, 0, 0, 0, 1, 1)$ & $Z_5$ \\
$(1, 0, 0)$ & $(0, 0, 0, 0, 1, 1)$ & $Z_5$ \\
$(0, 1, 1)$ & $(0, 0, 1, 0, 0, 0)$ & $X_0 X_3$ \\
$(1, 1, 1)$ & $(0, 0, 1, 0, 0, 0)$ & $X_0 X_3$ \\
$(1, 0, 1)$ & $(0, 0, 0, 0, 1, 0)$ & $Z_0 Z_1$ \\
$(1, 0, 0)$ & $(0, 0, 0, 0, 1, 0)$ & $Z_0 Z_1$ \\
$(1, 1, 0)$ & $(0, 0, 0, 0, 0, 1)$ & $Z_0 Z_3$ \\
$(1, 0, 0)$ & $(0, 0, 0, 0, 0, 1)$ & $Z_0 Z_3$ \\
$(0, 0, 1)$ & $(1, 1, 1, 0, 0, 0)$ & $X_2$ \\
$(1, 0, 1)$ & $(1, 1, 1, 0, 0, 0)$ & $X_2$ \\
$(0, 1, 0)$ & $(1, 1, 0, 1, 1, 0)$ & $Y_1$ \\
$(1, 0, 1)$ & $(0, 0, 0, 0, 1, 1)$ & $Z_5$ \\
$(1, 0, 0)$ & $(1, 1, 1, 1, 1, 1)$ & $Y_2$ \\
$(0, 1, 0)$ & $(0, 0, 0, 1, 1, 0)$ & $Z_1$ \\
$(1, 0, 0)$ & $(1, 1, 1, 0, 0, 0)$ & $X_2$ \\
$(0, 0, 1)$ & $(0, 0, 0, 0, 1, 1)$ & $Z_5$ \\
\hline
\end{tabular}
\caption{{\bf A decoding for erroneous $S_1^f$.}  A recovery operation to be applied after an erroneous outcome of $S_1^f$ measurement. Here, $(a_0,a_1,a_2)$ are outcomes of the ancilla qubits $a_0,a_1,a_2$ measurements from a circuit $S_1^f$ shown in Fig.~\ref{fig:circ_syndf}. 
$(s_1,s_2,s_3,s_4,s_5,s_6)$  are outcomes of the following circuits $S_1^u,S_2^u,S_3^u,S_4^u,S_5^u,S_6^u$ measurements from Fig.~\ref{fig:circ_synduf}. In the case of any combination of $(a_0,a_1,a_2) \neq (0,0,0)$ and $(_1,s_2,s_3,s_4,s_5,s_6)$  not listed in the table, a run is discarded as likely irrecoverably corrupted. } 
\label{tab:dec_sf1}
\end{table}

\begin{table}[p]
\begin{tabular}{|c|c|c|}
\hline
$(a_0,a_1,a_2)$ & $(_1,s_2,s_3,s_4,s_5,s_6)$   & a recovery operation  \\
\hline
$(1, 0, 0)$ & $(1, 0, 0, 0, 0, 0)$ & $X_0$ \\
$(1, 0, 0)$ & $(1, 0, 0, 1, 0, 0)$ & $Y_0$ \\
$(1, 0, 0)$ & $(1, 1, 0, 0, 0, 0)$ & $X_1$ \\
$(1, 1, 0)$ & $(1, 1, 0, 1, 1, 0)$ & $Y_1$ \\
$(0, 1, 0)$ & $(0, 0, 0, 1, 1, 0)$ & $Z_1$ \\
$(1, 0, 0)$ & $(1, 1, 1, 0, 0, 0)$ & $X_2$ \\
$(1, 1, 1)$ & $(1, 1, 1, 1, 1, 1)$ & $Y_2$ \\
$(0, 1, 1)$ & $(0, 0, 0, 1, 1, 1)$ & $Z_2$ \\
$(1, 0, 0)$ & $(1, 0, 1, 0, 0, 0)$ & $X_3$ \\
$(1, 0, 1)$ & $(1, 0, 1, 1, 0, 1)$ & $Y_3$ \\
$(0, 0, 1)$ & $(0, 0, 0, 1, 0, 1)$ & $Z_3$ \\
$(0, 1, 0)$ & $(0, 1, 0, 0, 1, 0)$ & $Y_4$ \\
$(0, 1, 0)$ & $(0, 0, 0, 0, 1, 0)$ & $Z_4$ \\
$(0, 1, 1)$ & $(0, 1, 1, 0, 1, 1)$ & $Y_5$ \\
$(0, 1, 1)$ & $(0, 0, 0, 0, 1, 1)$ & $Z_5$ \\
$(0, 0, 1)$ & $(0, 0, 1, 0, 0, 1)$ & $Y_6$ \\
$(0, 0, 1)$ & $(0, 0, 0, 0, 0, 1)$ & $Z_6$ \\
$(1, 0, 0)$ & $(0, 0, 0, 0, 0, 0)$ & error not detected \\
$(0, 1, 0)$ & $(0, 0, 0, 0, 0, 0)$ & error not detected \\
$(0, 0, 1)$ & $(0, 0, 0, 0, 0, 0)$ & error not detected \\
$(0, 0, 1)$ & $(1, 0, 1, 1, 0, 1)$ & $Y_3$ \\
$(1, 0, 1)$ & $(0, 0, 0, 1, 0, 1)$ & $Z_3$ \\
$(1, 0, 1)$ & $(1, 1, 1, 0, 0, 0)$ & $X_2$ \\
$(1, 1, 0)$ & $(1, 1, 1, 1, 1, 1)$ & $Y_2$ \\
$(0, 1, 0)$ & $(0, 0, 0, 1, 1, 1)$ & $Z_2$ \\
$(0, 1, 0)$ & $(0, 1, 1, 0, 0, 0)$ & $X_5$ \\
$(0, 0, 1)$ & $(0, 1, 1, 0, 1, 1)$ & $Y_5$ \\
$(0, 0, 1)$ & $(0, 0, 0, 0, 1, 1)$ & $Z_5$ \\
$(1, 0, 0)$ & $(0, 1, 1, 0, 0, 0)$ & $X_5$ \\
$(1, 1, 0)$ & $(0, 1, 1, 0, 0, 0)$ & $X_5$ \\
$(1, 1, 1)$ & $(0, 0, 0, 1, 0, 1)$ & $Z_3$ \\
$(0, 1, 1)$ & $(0, 0, 0, 1, 0, 1)$ & $Z_3$ \\
$(1, 1, 1)$ & $(0, 0, 0, 0, 0, 1)$ & $Z_0 Z_3$ \\
$(0, 1, 1)$ & $(0, 0, 0, 0, 0, 1)$ & $Z_0 Z_3$ \\
$(1, 0, 0)$ & $(0, 1, 0, 0, 0, 0)$ & $X_0 X_1$ \\
$(1, 0, 1)$ & $(0, 1, 0, 0, 0, 0)$ & $X_0 X_1$ \\
$(1, 0, 0)$ & $(0, 0, 1, 0, 0, 0)$ & $X_0 X_3$ \\
$(1, 1, 0)$ & $(0, 0, 1, 0, 0, 0)$ & $X_0 X_3$ \\
$(0, 1, 0)$ & $(1, 1, 0, 1, 1, 0)$ & $Y_1$ \\
$(1, 0, 1)$ & $(0, 0, 0, 1, 1, 1)$ & $Z_2$ \\
$(0, 0, 1)$ & $(0, 0, 0, 1, 1, 1)$ & $Z_2$ \\
$(1, 0, 1)$ & $(0, 1, 1, 0, 0, 0)$ & $X_5$ \\
$(0, 1, 0)$ & $(1, 1, 0, 0, 0, 0)$ & $X_1$ \\
$(1, 0, 0)$ & $(1, 1, 1, 1, 1, 1)$ & $Y_2$ \\
$(0, 0, 1)$ & $(0, 1, 1, 0, 0, 0)$ & $X_5$ \\
$(1, 0, 0)$ & $(0, 0, 0, 1, 1, 1)$ & $Z_2$ \\
\hline
\end{tabular}
\caption{{\bf A decoding for erroneous $S_2^f$.}  A recovery operation to be applied after an erroneous outcome of $S_2^f$ measurement. Here, $(a_0,a_1,a_2)$ are outcomes of the ancilla qubits $a_0,a_1,a_2$ measurements from a circuit $S_2^f$ shown in Fig.~\ref{fig:circ_syndf}. 
$(_1,s_2,s_3,s_4,s_5,s_6)$  are outcomes of the following circuits $S_1^u,S_2^u,S_3^u,S_4^u,S_5^u,S_6^u$ measurements from Fig.~\ref{fig:circ_synduf}. In the case of any combination of $(a_0,a_1,a_2) \neq (0,0,0)$ and $(_1,s_2,s_3,s_4,s_5,s_6)$  not listed in the table, a run is discarded as likely irrecoverably corrupted. } 

\label{tab:dec_sf2}
\end{table}

Finally, we measure the physical registers of all logical qubits in a computational basis. To decode a logical outcome of the measurement and correct errors that were not corrected during the QEC rounds, we compare the outcomes with the results of $\ket{\overline{0}}$ and $\ket{\overline{1}}$ measurements. For each encoded qubit, we determine an outcome of $\ket{\overline{0}}$ or $\ket{\overline{1}}$ measurement in the computational basis that is closest according to the Hamming distance. If the closest outcome comes from $\ket{\overline{0}}$ ($\ket{\overline{1}}$), we decode the logical outcome of the qubit measurement to be $0$ ($1$), respectively. This decoding was implemented with a real-world quantum computer in~\cite{ryan2021realization}.

\section{Additional numerical results for depolarizing noise}
\label{app:RB}

In this section, first we present our full randomized benchmarking results in Fig.~\ref{fig:RB_full}{\bf (a)} for all sequence lengths $m=1-1400$, including $m\ge700$ values omitted in Fig.~4 (main text) for the sake of plot transparency. The complete data is fitted well by the exponential decay of Eq.~(22) in the main text, reinforcing the conclusion that a set of partially error-corrected two-qubit Clifford gates have higher errors than the corresponding noisy-noisy gates. 
Furthermore, in Fig.~\ref{fig:RB_full}{\bf (b)}, we show the probability that a run was post-selected based on the syndromes for the setups involving error-corrected qubits. We estimate this probability as a ratio of the post-selected runs to all runs.  We find that for $m$ values used here, the post-selection probability is larger than $0.74$ and can be well-fitted by an exponential decay of form $c^{-m}$ with $c=0.999591(2)$ for two error-corrected qubits and $c=0.999805(2)$ for the partial error-correction setup.

 \begin{figure}[t]
    \centering 
    \includegraphics[width=\columnwidth]{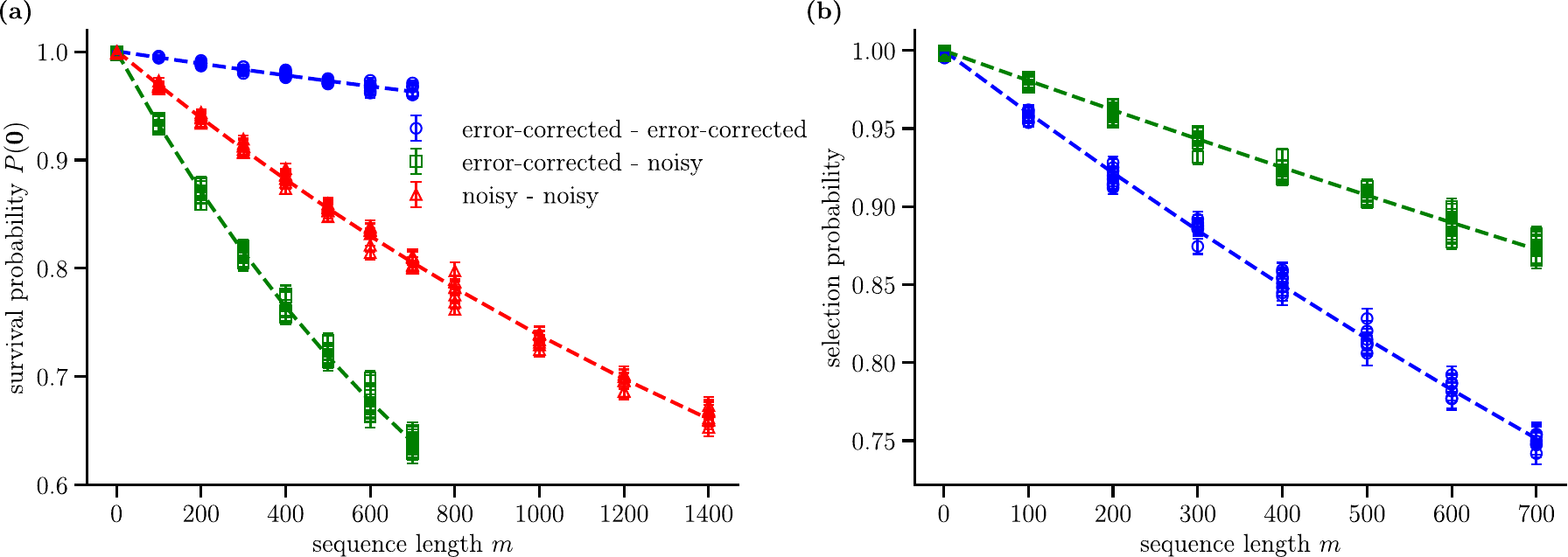}
    \caption{ {\bf Complete results for randomized benchmarking of two-qubit gates.} In {\bf (a)},  we plot the survival probability $P(\textbf{0})$ of a compilation of identity which is a sequence of $m$ random two-qubit logical Clifford gates for two error-corrected qubits (circles), two noisy qubits (triangles), and an error-corrected and a noisy qubit (squares). We include $m>700$ results that are not shown in Fig.~4 (main text) to improve the plot transparency.  
    The dashed lines show the best fits of the data by an exponential decay~(22). 
    In {\bf (b)}, we show the probability that a run is accepted by the post-selection based on syndromes. The dashed lines in {\bf (b)} are the best fits of an exponential decay  $c^{-m}$, with $c$ being a fitted constant.  In both plots, for each sequence, we plot the error bar computed as standard deviation of the mean.}
    \label{fig:RB_full}
\end{figure}

Second, in Fig.~\ref{fig:mirr_Lplt_id0} we show the fidelity versus the layer number $L$ for the random mirrored Clifford circuits described in main text Sec. ``Random Clifford circuits: setup'' in the main text and simulated without the idling noise ($\varepsilon_I=0$). We use a setup ($n=20$, $\varepsilon_1=1.5\cdot10^{-5}, \varepsilon_{2}=1.5\cdot10^{-4}$) that differs from the one used to obtain  Fig.~6 (main text) only by the lack of the idling noise. We find that the fidelity behaves qualitatively the same as in Fig.~6 (main text), which further demonstrates the threshold clean qubit number for the partial error correction. We obtain smaller $n_{\rm threshold}$ value than for $\varepsilon_I=0.75\cdot10^{-6}$ used in  Fig.~6 (main text). This result shows sensitivity of $n_{\rm threshold}$ to the idling strength. 
Moreover, for completeness, in Fig.~\ref{fig:mirr_Lplt_acc} we show the post-selection probability for the mirrored random Clifford circuits simulations from Fig.~6 (main text) and Fig.~\ref{fig:mirr_Lplt_id0}. Finally, the post-selection acceptance rates for simulations from Fig.~8 (main text), are in Tabs.~\ref{tab:psel_fidelity_QEC},~\ref{tab:psel_fidelity_QED}.

\begin{figure}[t]
    \centering 
    \includegraphics[width=0.5\columnwidth]{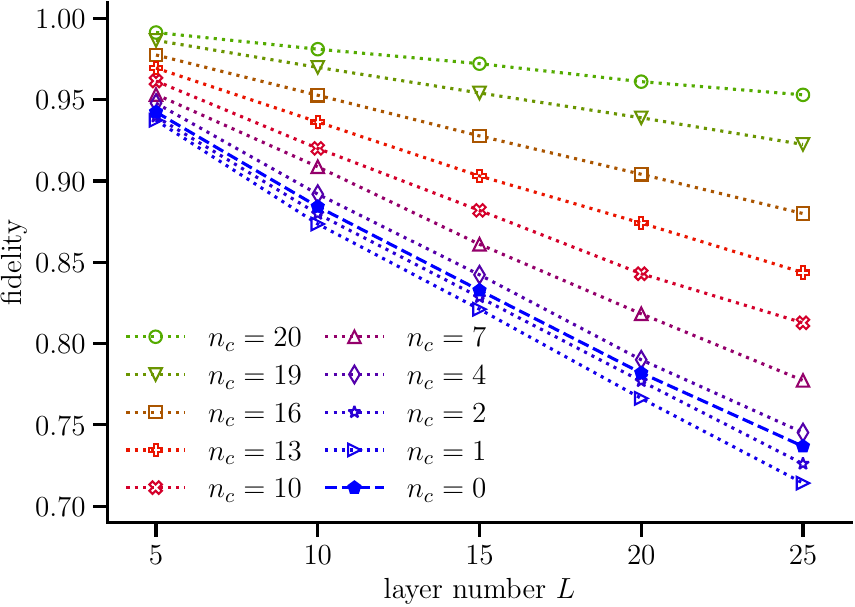}
    \caption{ {\bf Random Clifford circuits without idling noise.} Average fidelities of the noisy and noiseless states prepared by random mirrored Clifford circuits with $n_c$ clean qubits, $n=20$ qubits in total plotted versus the number of layers $L$. 
    The setup differs from Fig.~6 (main text) only by lack of the idling noise ($\varepsilon_I=0$). The same as in Fig.~6 (main text),  the fidelity is estimated as a mean of the survival probabilities of post-selected circuit runs with $2.5\cdot10^4-5\cdot10^5$ total runs  per a data point, and a single run per a random circuit. The statistical uncertainties of the data points given by a standard  deviation of the mean are smaller than the marker sizes. }
    \label{fig:mirr_Lplt_id0}
\end{figure} 

 \begin{figure}[t]
    \centering 
    \includegraphics[width=0.99\columnwidth]{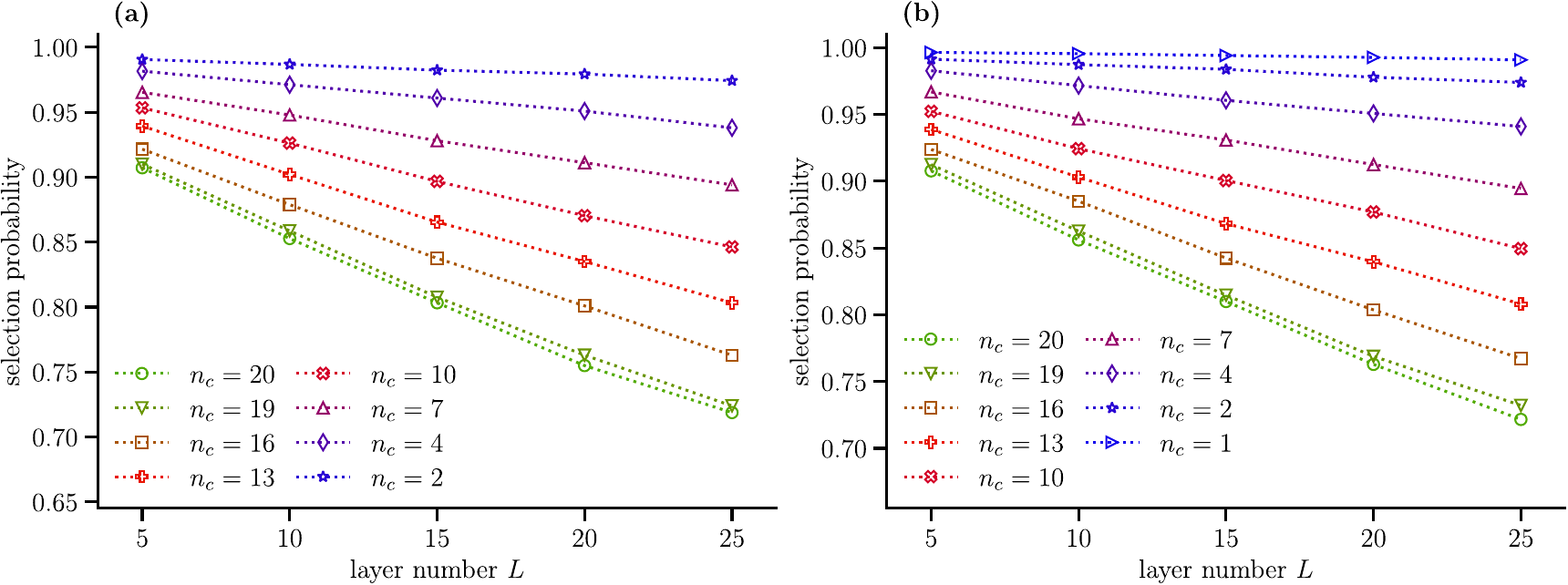}
    \caption{ {\bf Post-selection probabilities for the random Clifford circuits simulations.} In {\bf (a)} the probabilities for the simulations from Fig.~6 (main text -- $\varepsilon_I=0.75\cdot10^{-6}$), and in {\bf (b)} for the ones from Fig.~\ref{fig:mirr_Lplt_id0} ($\varepsilon_I=0$).  The probabilities are computed as the ratios of the accepted runs to all runs. The statistical uncertainties of the data points given by  standard  deviation of the mean are smaller than the marker sizes.}  
    \label{fig:mirr_Lplt_acc}
\end{figure}

\begin{table}
\begin{tabular}{|c|c|c|}
\hline
$n_d$ & $n_c$ & selection  probability \\
\hline
0 & 12 & 0.882(2) \\
2 & 12 & 0.879(1) \\
4 & 12 & 0.876(2) \\
6 & 12 & 0.875(2) \\
8 & 12 & 0.880(1) \\
10 & 12 & 0.878(1) \\
\hline
\end{tabular}
\caption{{\bf Post-selection probabilities for  random Clifford circuits simulations with $n=22$ and partial QEC from Fig.~8 (main text).}  The probabilities are computed as the ratios of the accepted runs to all runs. The statistical uncertainties are  given by standard  deviation of the mean. } 
\label{tab:psel_fidelity_QEC}
\end{table}

\begin{table}
\begin{tabular}{|c|c|c|c|}
\hline
$n_d$ & $n_c$ & selection  probability  (depolarizing noise) &  selection  probability (H2-informed noise)\\
\hline
0 & 12 & 0.112(2) & 0.113(2) \\
2 & 12 & 0.095(2) & 0.086(2) \\
4 & 12 & 0.101(2) & 0.083(2) \\
6 & 12 & 0.097(2) & 0.087(2) \\
8 & 12 & 0.095(2) & 0.086(1) \\
10 & 12 & 0.097(2) & 0.086(2) \\
\hline
\end{tabular}
\caption{{\bf Post-selection probabilities for the random Clifford circuits simulations with partial error detection from Fig.~8 (main text).}  The probabilities are computed as the ratios of the accepted runs to all runs. The statistical uncertainties are  given by standard  deviation of the mean. } 
\label{tab:psel_fidelity_QED}
\end{table}

\section{Quantinuum's H2-informed noise model. }

For the quantum error detection simulations in main text Sec. ``Random Clifford circuits: results for quantum error detection (QED)'' we use a depolarizing noise model with error rates  based on data from Ref.~\cite{moses2023race}.

We model idling errors, SPAM errors and single-qubit gate errors as single-qubit depolarizing noise with error rates
\begin{equation}
\varepsilon_{I}= 3.3 \cdot 10^{-4}, 
\end{equation}
\begin{equation}
\varepsilon_{SPAM}= 1.2 \cdot 10^{-3} + \varepsilon_I,
\end{equation}
\begin{equation}
\varepsilon_{1}=  3.75 \cdot10^{-5} + \varepsilon_I.
\end{equation}
We define single-qubit depolarizing noise as 
\begin{equation}
\mathcal{D}_{1}^{(\varepsilon)}(\rho) = (1-\varepsilon) \rho + \frac{\varepsilon}{3} \big[X \rho X + Y \rho Y +  Z \rho Z \big],
\end{equation}
and refer to $\varepsilon$ as the noise rate.
We model $CNOT$ errors by two-qubit and single-qubit  depolarizing noise
$$
\mathcal{D}_{2}^{(\varepsilon_{ CNOT})}\circ\big(\mathcal{D}_{1}^{(\varepsilon_{I})} \otimes \mathcal{D}_{1}^{(\varepsilon_{I})}\big)(\rho)
$$
with  two-qubit depolarizing noise defined as
\begin{equation}
\mathcal{D}_{2}^{(\varepsilon)}(\rho) = (1-\varepsilon) \rho + \frac{\varepsilon}{15} \sum_{P \in \mathrm{P}_2\setminus\{{\id}^{\otimes 2}\}} P \rho P ,
\end{equation}
where the sum runs over all non-identity single and two-qubit Paulis. We have 
\begin{equation}
\varepsilon_{ CNOT} = 2.2\cdot10^{-3}.
\end{equation}
The idling error rate is chosen as $\frac{3}{2} \times (\text{infidelity per qubit per layer})$ estimated from interleaved transport randomized benchmarking~\cite{moses2023race}. 
That experiment measured the infidelity of transport required to execute layers  built of random pairs of 2-qubit gates.
Therefore, the resulting transport time is likely significantly higher than for circuits with
linear connectivity and large number of single-qubit gates which are considered in our numerical simulations.
Consequently, the idling rate estimated from that experiment is likely significantly higher than for circuits that we consider in our numerical experiments. 
Nevertheless, in our case we use the experimental idling rate, even if it is overtly pessimistic in our case, 
since, according to our best knowledge, circuit-specific idling rates are not publicly available.
The proportionality factor relates the single-qubit depolarizing noise infidelity to the channel error rate~\cite{nielsen2000quantum, nielsen2002simple}.
Analogously, we choose $\varepsilon_{1}$ as $\frac{3}{2} \times (\text{average infidelity per Clifford})$ from single-qubit randomized benchmarking~\cite{moses2023race} plus a contribution from memory errors caused by gate transport during a layer execution.

More  generally,  gates in a layer usually cannot be executed simultaneously for trapped-ion quantum computers.  Frequently, all qubits  need to be transported to execute a layer. Therefore, we include a transport error contribution for each operation during a circuit execution including gates, state preparation and measurement. For simplicity, we assume that, for each layer, each qubit is affected by tarnsport errors modeled by single-qubit depolarzing noise with the error rate $\varepsilon_I$ irrespective of the operations in the layer. As in a real-world case, the transport schedules can be optimized to reduce transport errors, so our assumption likely overestimates the real-hardware rate of the memory errors. 

The state preparation  (measurement) errors are modeled as single-qubit depolarizing channels following the state preparation (preceding the measurement). The non-transport contribution to this error rate is taken as $\frac{3}{4} \times (\text{average SPAM error per qubit})$. The prefactor takes into account that the SPAM experiment does not distinguish preparation and measurement errors and assumes that their rate is the same in both cases~\cite{moses2023race}.
For $CNOT$ gates, transport memory errors during layer execution are modeled by  $\mathcal{D}_{1}^{(\varepsilon_{I})}$ acting at each qubit and the non-transport part is modeled as two-qubit depolarizing noise with a rate chosen as $\frac{5}{4} \times (\text{average infidelity per two-qubit gate})$ estimated from two-qubit randomized benchmarking~\cite{moses2023race}. The prefactor relates the noise error rate to the channel infidelity, and we assume that a single native two-qubit gate is used per $CNOT$ gate~\cite{trout2018simulating}.

\section{Random Clifford circuits with a non-Pauli noise model}
\label{app:rand_Cliff_Our}

\subsection{Setup}
\label{sec:Cliff_setup_Our}
We simulate random mirrored Clifford circuits which compile to identity of a form of $VV^\dag$ (Eq.~(23) in main text). 
Here $V$ consists of $L$ layers of an ansatz built from a layer of randomly chosen single-qubit Clifford gates followed by two layers of  alternating nearest-neighbor $CNOT$s arranged according to a brickwall pattern. This choice closely resembles random circuits from main text Sec. ``Random Clifford Circuits: setup'' in the main text. The single-qubit gates are chosen from the gate set $\{H, X, Y, Z, S, S^{\dag} \}$, and the circuits are compiled for the qubits arranged into a line, 
with the first $n_c$ registers being clean and the remaining $n_d$ registers being noisy. The initial state is $\ket{\overline{0}}^{\otimes n_c} \ket{0}^{\otimes n_d}$, again the same as in main text Sec. ``Random Clifford circuits: setup'' in the main text. 

We simulate the circuits using a noise model informed by the gate set tomography of IBM's Ourense quantum computer~\cite{cincio2020machine}. The model assumes a  physical gate set consisting of $R_Z(\theta) = e^{-i (\theta/2) Z}$, $\sqrt{X}= e^{-i (\pi/4) X}$, and the $CNOT$ gate. These   gates are  by process matrices obtained from IBM's Ourense process tomography  and decreasing their infidelity  by a factor of $1/q=100$. 
Furthermore, we add an identity gate $I$ to the gate set to represent an idling noisy qubit.
To investigate the effects of idling noise in detail, we consider the process matrix of $I$ with the idling infidelity decreased with respect to the Ourense quantum computer by multiple factors  $1/q_I=100,200,400,800$, and without idling noise ($q_I=0$).  We provide details on the construction of the process matrices in SI Sec.~\ref{app:noise_eff}. This noise model captures error processes in a real-world quantum computer which typically goes beyond Pauli noise due to a variety of reasons like non-perfect Pauli-twirling and leakage detection~\cite{varbanov2020leakage}. 

In the case of gates acting on noisy registers only, we find their noisy process matrices by directly compiling them to the physical gates. For logical gates involving clean qubits, which are implemented with the Steane code, we use logical process tomography~\cite{suzuki2022quantum} to obtain their effective process matrices representing $\PC_c \circ \U_{\rm logical}$ of these gates. Process tomography is performed directly by simulating the action of noisy gates and noiseless QEC rounds on underlying physical qubits, assuming perfect noiseless logical state preparation and perfect logical measurement. We describe the procedure in detail in SI Sec.~\ref{app:noise_eff}. While this approach leads to an unrealistic treatment of errors happening during the error correction procedures, it enables us to simulate at scale effects of a noise that can not be simulated efficiently by methods based on the stabilizer formalism. Furthermore, it models accurately errors occurring at the noisy qubits and their propagation to the error-corrected qubits through $b$-type CNOTs. Such errors are specific to the partial error correction and understanding of their impact on the framework performance is the main focus of this work.

\subsubsection{Results}

We estimate the fidelity of the noisy and the noiseless state for the mirrored Clifford circuits with $n=10-30$,     $L=5-50$, and multiple choices of $0\le n_c\le n$.  
Numerical simulations are performed with a matrix-product-state~\cite{fannes1992finitely,orus2014practical} density matrix simulator utilizing the one-dimensional structure of the circuits. Additionally, the mirrored character of the circuits further decreases the simulation cost since the noiseless state has a trivial matrix-product-state representation.   For each $n$, $n_c$, and $L$, we compute the average fidelity as the mean of fidelities of $100$ random circuits. For all simulated system sizes and circuit depths, we observe that the fidelity at $n_c=1$ is lower than that at $n_c=0$, but then it subsequently improves approximately geometrically with increasing $n_c$. Such behavior reflects our analytical result in Theorem 2. The same as in the case of the depolarizing noise results from the main text, the fidelity improves upon $n_c=0$ at some threshold value $n_{\rm threshold}<n$. We find that $n_{\rm threshold}$ does not depend on the layer number and is determined by $n$ and the noise model. Moreover, in the presence of no idling noise, this is a constant with respect to increasing system size. This parallels the threshold behavior for the depolarizing noise considered in main text Sec. ``Random Clifford circuits: setup'' in the main text.  We show the fidelities for the largest simulated $n=30$ and $1/q_I=800$ in Fig.~\ref{fig:eff_res}.

\begin{figure}[t]
    \centering 
    \includegraphics[width=0.5\columnwidth]{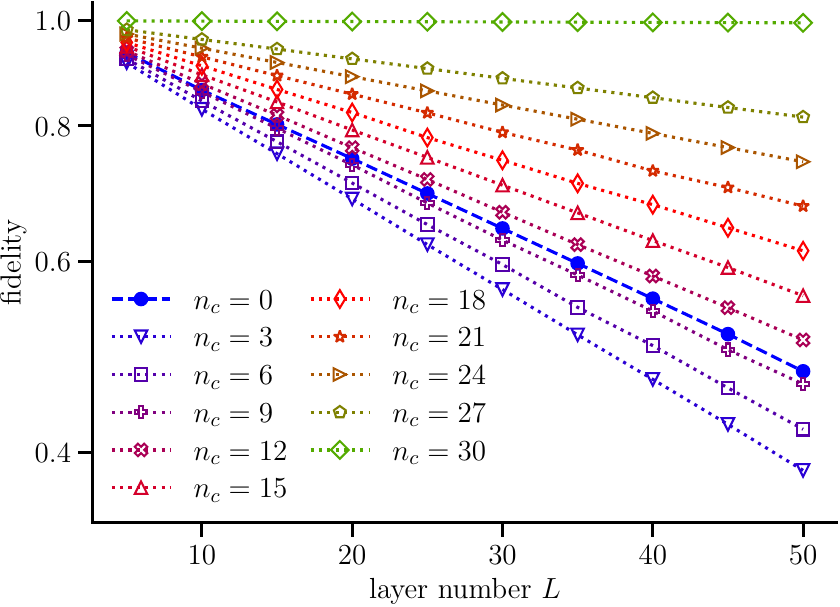}
    \caption{ {\bf Random Clifford circuits with a non-Pauli noise model.} Average fidelities of the noisy and noiseless states prepared by random mirrored Clifford circuits with $n_c$ clean qubits, $n=30$ qubits in total plotted versus the number of layers $L$.   We simulate the setup with a  noise model informed by real-device noise and taking into account idling noise ($q_I=1/800$)  described in detail in SI Sec.~\ref{app:noise_eff}. We use $100$ random circuits per data point. Note that the $y$ scale is logarithmic, indicating exponential decay of the fidelities.}
    \label{fig:eff_res}
\end{figure} 

We examine the dependence of $n_{\rm threshold}$ on $n$ and the idling noise in more detail in  Fig.~\ref{fig:eff_idling}, plotting it against $n$ for $q_I/q=0,0.125,0.25,0.5,1$ corresponding to a ratio of noisy $I$ and $CNOT$ gates infidelities  $\epsilon_I/\epsilon_{CNOT}=0,0.019,0.039,0.078,0.156$, respectively (see SI Sec.~\ref{app:noise_eff} for details). As for the depolarizing noise simulations in Fig.~7 (main text), we find that without idling noise $n_{\rm threshold}\approx5$ and does not depend on $n$. Otherwise, it grows linearly with $n$, and the growth slope increases with the idling noise strength.   

  \begin{figure}[t]
    \centering 
    \includegraphics[width=0.5\columnwidth]{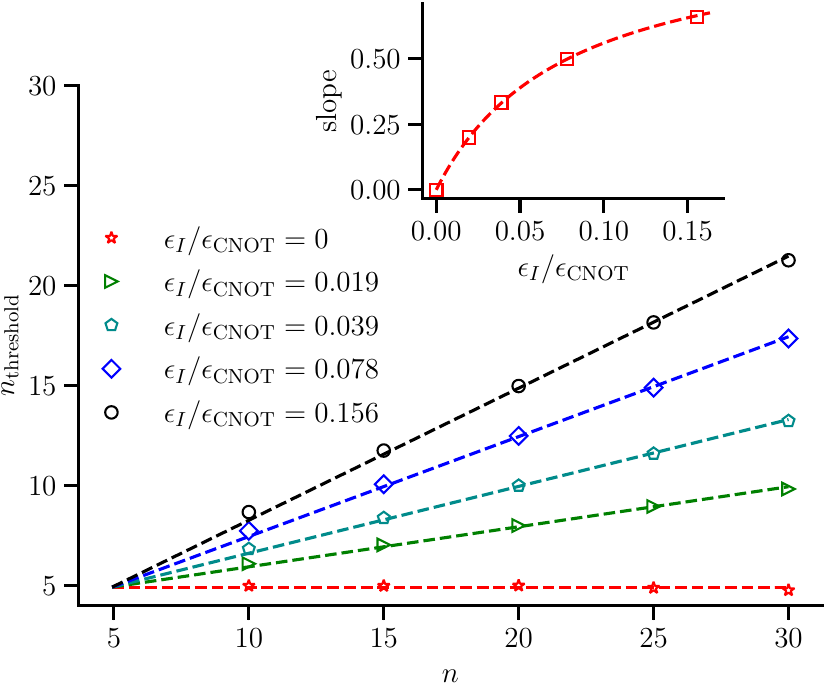}
   \caption{{\bf The clean qubit number required for an advantage over an all-noisy device.} Here we plot $n_{\rm threshold}$ for the random mirrored Clifford circuits with $n=10-30$ and several ratios of the noisy identity and $CNOT$ gates infidelities $\epsilon_I/\epsilon_ {CNOT} =0,0.019,\dots,0.156$, see SI Sec.~\ref{app:noise_eff} for details. The results are fitted by a linear ansatz of Eq.~(25) (main text -- dashed lines) with the slope dependent on the idling noise strength, unlike the intercept.  In the inset, we show the dependence of the slope on $ \epsilon_I/\epsilon_{CNOT} $ fitted by a heuristic phenomenological ansatz of Eq.~(26) (main text -- dashed line).} 
    \label{fig:eff_idling}
\end{figure}

We apply the heuristic predictions of the scaling in the main text Eq.(25) to our numerical results as shown in   Fig.~\ref{fig:eff_idling}. The same as in the depolarizing noise case, for each $\epsilon_I/\epsilon_{CNOT} $ we fit  $n_{\rm threshold}(n)$ with the linear ansatz in main text Eq.~(25) taking $a$ as the fit parameter and estimating $b$ by fitting the no idling noise case with the constant ansatz given in main text Eq.~(27). We obtain good agreement with the numerical results. Secondly, we fit $a$ versus $\epsilon_{I}/\epsilon_{CNOT} $ in more detail, by
a functional dependence of main text Eq.~(26). Again, we obtain good agreement with the numerical results (see Fig.~\ref{fig:eff_idling}). The qualitative agreement of the $n_{\rm threshold}$ scaling results with the depolarizing noise scaling suggests that the observed scaling is valid for a broad class of realistic noise cases.

\section{Non-Pauli noise model}
\label{app:noise_eff}

We consider a noise model with the following gates acting on physical qubits: a $CNOT$ gate, an arbitrary rotation around $z$ axis $R_Z(\theta) = e^{-i (\theta/2) Z}$, and $\sqrt{X}= e^{-i (\pi/4) X}$, where $Z$ and $X$ are Pauli matrices. Furthermore,  we have an identity gate $I$ corresponding to an idling physical qubit. The noisy physical gates are defined with process matrices obtained by gate set tomography of IBM's Ourense quantum computer $ {CNOT} ^{\rm Ourense}, \sqrt{X}^{\rm Ourense}, I^{\rm Ourense}$ and process matrices of the perfect (not affected by the noise) gates $ {CNOT} ^{\rm noiseless}, \sqrt{X}^{\rm noiseless}, I^{\rm noiseless}, R_Z^{\rm noiseless}(\theta)$. The Ourense process matrices can be found in Ref.~\cite{cincio2020machine}. In our numerical implementation, we consider the following maps: 
\begin{align}
 {CNOT}  &= (1-q) {CNOT} ^{\rm noiseless} + q  {CNOT} ^{\rm Ourense}\,, \\
 \sqrt{X} &= (1-q)\sqrt{X}^{\rm noiseless} + q \sqrt{X}^{\rm Ourense}\,, \\
 I &= (1-q_I)I^{\rm noiseless} + q_I I^{\rm Ourense}\,.
 \end{align}
To obtain the error rates enabling successful quantum error correction, we choose $q=0.01$, since the average gate infidelities of $ {CNOT} ^{\rm Ourense}$, $\sqrt{X}^{\rm Ourense}$  are $1.9 \times 10^{-2}$ and  $8.8 \times 10^{-4}$, respectively~\cite{cincio2020machine}. Such a choice corresponds roughly to a noise strength reduction by a factor of $100$. To examine in detail the effects of idling noise, we consider a range of $q_I=0-0.01$. We note that the average gate infidelity of $I^{\rm Ourense}$ is $2.8 \times 10^{-3}$~\cite{cincio2020machine}.  We assume that  $R_Z(\theta)$ is noiseless, as in the case of the IBM's quantum computers,   that is,
 \begin{equation}
R_Z(\theta) = R_Z^{\rm noiseless}(\theta)\,.
\end{equation}

Single-qubit Clifford gates $X,Y,Z,H,S,S^{\dag}$ acting at noisy registers are obtained by compilation to the physical gates resulting in 
\begin{align}
X &= \sqrt{X}\sqrt{X}\,, \\
Y &= R_Z(\pi) \sqrt{X} \sqrt{X}\,,  \\
Z &= R_Z(\pi)\,, \\
S &= R_Z(\pi/2)\,, \\
S^{\dag} &= R_Z(-\pi/2)\,, \\
H  &= R_Z(\pi/2) \sqrt{X} R_Z(\pi/2)\,,
\end{align} 
where we assume that the gates are executed from right to left. 

We assume that the clean registers are encoded with the Steane code. Therefore, we have
 \begin{align}
\overline{X} &= {\rm QEC} \circ \big(\prod_{i=0}^6 X_{i} \big)\,, \\
\overline{Y} &= {\rm QEC} \circ \big(\prod_{i=0}^6 Y_{i} \big)\,, \\
\overline{Z} &= \prod_{i=0}^6 R_{Z}(\pi)_i\,,   \label{eq:Z_impl}\\
\overline{S} &= \prod_{i=0}^6 R_{Z}(3\pi/2)_i \,, \\
\overline{S}^{\dag} &= \prod_{i=0}^6 R_{Z}(-3\pi/2)_i\,,\\ 
\overline{H} &= {\rm QEC} \circ \big(\prod_{i=0}^6 H_{i} \big)\,,
\end{align}
where $i \in \{0,1,\dots,6 \}$ index physical registers constituting one clean register. Here ${\rm QEC}$ is an error correction round, i.e., a syndrome measurement and recovery operation. We note that we do not follow $\overline{Z}$, $\overline{S}$, and $\overline{S}^{\dag}$ with error correction rounds as they are implemented by noiseless physical gates. 

We consider here an idealized error correction round defined by a quantum channel
\begin{equation}
{\rm QEC}(\rho) = \sum_s U_s P_s \rho P_s U_s^{\dag}\,,  
\end{equation}
where $s$ numbers outcomes of the syndrome measurements, $P_s$ are projectors at the eigenspace of the Steane code stabilizers for which $s$ is the syndrome outcome, and $U_s$ are corresponding noiseless recovery operation~\cite{nielsen2000quantum}. Such an idealized QEC round corresponding to noiseless syndrome measurement and recovery results in ${\rm QEC}(\rho)$ mapping $\rho$ into the code space.   

$CNOT$s between a noisy and a clean qubit controlled at a noisy and a clean register are implemented as 
\begin{align}
{\widetilde{CNOT}}_{cd} &=  {\rm QEC} \circ \big(\prod_{i=0}^6  {CNOT} _{i,7} \big)\,,  \label{eq:CNOT_cd_impl} \\
{\widetilde{CNOT}}_{dc} &= {\rm QEC} \circ \big(\prod_{i=0}^6  {CNOT} _{7,i} \big)\,,  \label{eq:CNOT_dc_impl} 
\end{align}
respectively.
Here, the clean register consists of physical $0,1,\dots,6$ registers, the noisy register is labeled by 7, and $ {CNOT} _{i,j}$ is a $CNOT$ controlled at a register $i$ with a target $j$.
Finally, a $CNOT$ in between two clean registers is implemented as  
\begin{equation}
    {\overline{CNOT}} = {\rm QEC} \circ \big(\prod_{i=0}^6  {CNOT} _{i,i+7} \big)\,, 
\end{equation}
where for convenience we assume that the clean registers are encoded with physical registers $0,1,\dots,6$ and $7,8,\dots,13$ (as shown in Fig.~\ref{fig:circ_CNOTcc}), and the QEC round is applied to both of them. Compiling ${\rm \widetilde{CNOT}}$ and ${\rm \overline{CNOT}}$, we take into account the idling noise by inserting gates at each idling physical register 
$I$ accordingly~\cite{cincio2020machine}. We supplement this set of gates with a logical identity gate corresponding to an idling clean qubit
\begin{equation}
\overline{I} = {\rm QEC} \circ \big(\prod_{i=0}^6 I_{i} \big)\,.
\label{eq:CNOTb_Our}
\end{equation}

We note that this particular choice of the logical gate implementation  is not  unique, as in the case of the Steane code one can also choose 
\begin{equation}
\overline{Z}=  R_{Z}(\pi)_1  R_{Z}(\pi)_3  R_{Z}(\pi)_5, 
\end{equation}
with the physical indices labeled as in (\ref{eq:Z_impl}).
This  alternative choice of $\overline{Z}$ in particular enables implementation of the noisy-error-corrected couplings as
\begin{align}
    {\widetilde{CNOT}}_{cd} &= {\rm QEC} \circ \big(\prod_{i=1,3,5}  {CNOT} _{i,7} \big)\,, \\
    {\widetilde{CNOT}}_{dc} &=  {\rm QEC} \circ  \big(  \prod_{i=1,3,5}  {CNOT} _{7,i}   \big) \label{eq:CNOT_cd_impl2}  \,,  
\end{align}
with the physical indices labeling as in (\ref{eq:CNOT_dc_impl}), and where we use main text Eq.~(10). Such an implementation is used in the depolarizing noise simulations, see SI Sec.~\ref{app:QEC_impl}, as it reduces the number of the physical $CNOT$ gates and the circuit depth in comparison to (\ref{eq:CNOT_dc_impl}-\ref{eq:CNOT_cd_impl}). Since in our numerical work utilizing the non-Pauli  we are interested in a qualitative comparison of the threshold behavior with the more realistic depolarizing noise results, we expect that the usage of the less efficient $b$-$CNOT$ compilation (\ref{eq:CNOT_dc_impl}-\ref{eq:CNOT_cd_impl}) does not affect conclusions of SI Sec.~\ref{app:rand_Cliff_Our}.

We assume perfect state preparation of the initial state $\rho_i = \ket{\vec{0}}\bra{\vec{0}},\,\ket{\vec{0}}=\ket{\overline{0}}^{\otimes n_c} \ket{0}^{\otimes n_d}$ and perfect QEC rounds that map an erroneous state to the code space. More precisely, we work within a reduced density matrix formalism  and simulate the QEC round as a channel acting a state of a logical qubit $\rho$ as
\begin{equation}
{\rm QEC} (\rho) = \sum_{s_1, s_2, \dots, s_6} R_{s_1 s_2, \dots, s_6 } P_{s_1 s_2, \dots, s_6 } \rho P_{s_1 s_2, \dots, s_6 }^{\dag} R_{s_1 s_2, \dots, s_6 }^{\dag}. 
\end{equation}
Here, $s_1, s_2, \dots, s_6$ are measurement outcomes of the Steane's code stabilizers
\begin{equation}
g_1 = Z_0 Z_1 Z_2 Z_3, \quad g_2 = Z_1 Z_2 Z_4 Z_5, \quad g_3 = Z_2 Z_3 Z_5 Z_6, \quad  g_4 = X_0 X_1 X_2 X_3, \quad g_5 = X_1 X_2 X_4 X_5, \quad g_6 = X_2 X_3 X_5 X_6, 
\end{equation}
with $s_i \in \{1,-1\}$ being an outcome of $g_i$ measurement. $ P_{s_1 s_2, \dots, s_6 }$ is a projector at a  the stabilizers' eigenspace such that $g_i$'s eigenvalue is $s_i$, i.e
\begin{equation}
P_{s_1 s_2, \dots, s_6 } = \prod_{i=1}^6 \bigg( \frac{1+s_i g_i}{2} \bigg).
\end{equation}
$R_{s_1 s_2, \dots, s_6 }$ is a recovery operation that maps the eigenspace, which $P_{s_1 s_2, \dots, s_6 }$ projects onto, to the code space. This recovery operation is a Pauli with the smallest weight that satisfies this condition and is identified algebraically using a check matrix formalism~\cite{nielsen2000quantum}. When all $s_i=1$, $R_{s_1 s_2, \dots, s_6 }$ reduces to an identity.

These assumption ensure that the state of each clean qubit before and after each gate action is in the code space of the Steane code spanned by  $\ket{\overline{0}}$ and $\ket{\overline{1}}$. We obtain process matrices of the gates acting on the clean qubits with quantum process tomography~\cite{nielsen2000quantum, suzuki2022quantum}. Furthermore, since under the above assumptions, the states of clean registers belong to the code space both before and after the gate action, we compute only the process matrix elements within the code space. In the case of a single-qubit gate $g$, we obtain its process matrix $\mathcal{E}_{g}$ by computing the gate action on basis states spanning the code space, namely $\mathcal{E}_{g}(\ket{\overline{n}}\bra{\overline{m}})$, $n,m\in\{0,1\}$. Analogously, we obtain process matrices of a two-qubit gate acting on one or two clean qubits computing its action on the basis states  $\ket{\overline{n^1}, n^2}\bra{\overline{m^1},m^2}$ or  $\ket{\overline{n^1}, \overline{n^2}}\bra{\overline{m^1},\overline{m^2}}$, respectively,  with  $n^1,n^2,m^1,m^2 \in \{0,1\}$.

We gather average infidelities of the gates computed by averaging over $10000$ random states sampled according to the Haar measure in Tab.~\ref{tab:gate_infid},  \ref{tab:gate_infid2}. In the tables, we also list the circuit depths of the logical gates when compiled to the physical gates that act on the physical registers. These depths are used to account for idling while compiling a circuit to the logical gates. The compilation assumes that a logical gate with the depth  $k$  takes as much time as $k$ logical idling gates. For simplicity, we do not account for the physical gates required to perform the QEC round when computing the logical gate depths. Furthermore, we do not count $R_Z$ gates in computations of the gate depths assuming that they can be performed virtually  without increasing the idling time, as for the IBM's quantum computers. 

\begin{table}
\begin{tabular}{|c|c|c|}
\hline
gate & average infidelity & depth \\
\hline
$X$ & $1.8\times10^{-5}$ & $2$ \\
$Y$ & $1.8\times10^{-5}$ & $2$ \\
$Z$ & $0$ & $0$ \\
$S$ & $0$ & $0$ \\
$S^{\dag}$ & $0$ & $0$ \\
$H$ & $8.7\times10^{-5}$ & $1$  \\
$\overline{X}$ & $7.3\times10^{-9}$ & $2$ \\
$\overline{Y}$ & $7.3\times10^{-9}$ & $2$ \\
$\overline{Z}$ & $0$ & $0$ \\
$\overline{S}$ & $0$ & $0$ \\
$\overline{S}^{\dag}$ & $0$ & $0$ \\
$\overline{H}$ & $1.8\times10^{-9}$ & $1$  \\
$I$ & $2.8\times10^{-5}$ & $1$ \\
$\overline{I}$ & $1.7\times10^{-8}$ & $1$ \\
$ {CNOT} $ &$1.9\times10^{-4}$ & $1$\\ 
${\rm \overline{CNOT}}$ &$6.5\times10^{-6}$ & $7$\\
${\rm \widetilde{CNOT}}_{cd}$ &$1.6\times10^{-3}$ & $7$\\
${\rm \widetilde{CNOT}}_{dc}$ &$1.6\times10^{-3}$ & $7$\\
\hline
\end{tabular}
\caption{{\bf Average infidelities and depths of the logical gates in the device-inspired noise model.} Here $q=0.01$, and $q_I=0.01$. Average infidelities are computed from $10000$ random states sampled according to the Haar measure. Gate depths are numbers of non-parallelizable layers of physical gates acting on the physcial qubits  required for the gate's implementation.} 
\label{tab:gate_infid}
\end{table}

\begin{table}
\begin{tabular}{|c|c|c|c|}
\hline
gate & $q_I$ & average infidelity  \\
\hline
$I$ & $0.005$ & $1.4\times10^{-5}$  \\
$\overline{I}$ & $0.005$ & $4.3\times10^{-9}$ \\
${\rm \overline{CNOT}}$ & $0.005$ & $3.0\times10^{-6}$ \\
${\rm \widetilde{CNOT}}_{cd}$ & $0.005$ & $1.3\times10^{-3}$ \\
${\rm \widetilde{CNOT}}_{dc}$ & $0.005$ & $1.4\times10^{-3}$ \\
$I$ & $0.0025$ & $7.0\times10^{-6}$  \\
$\overline{I}$ & $0.0025$ & $1.0\times10^{-9}$ \\
${\rm \overline{CNOT}}$ & $0.0025$ & $1.7\times10^{-6}$ \\
${\rm \widetilde{CNOT}}_{cd}$ & $0.0025$ & $1.1\times10^{-3}$ \\
${\rm \widetilde{CNOT}}_{dc}$ & $0.0025$ & $1.3\times10^{-3}$ \\
$I$ & $0.00125$ & $3.5\times10^{-6}$  \\
$\overline{I}$ & $0.00125$ & $2.7\times10^{-10}$ \\
${\rm \overline{CNOT}}$ & $0.00125$ & $1.2\times10^{-6}$ \\
${\rm \widetilde{CNOT}}_{cd}$ & $0.00125$ & $1.1\times10^{-3}$ \\
${\rm \widetilde{CNOT}}_{dc}$ & $0.00125$ & $1.2\times10^{-3}$ \\
$I$ & $0$ & $0$ \\
$\overline{I}$ & $0$ & $0$ \\
${\rm \overline{CNOT}}$ & $0.00125$ & $8.3\times10^{-7}$ \\
${\rm \widetilde{CNOT}}_{cd}$ & $0.00125$ & $1.0\times10^{-3}$ \\
${\rm \widetilde{CNOT}}_{dc}$ & $0.00125$ & $1.1\times10^{-3}$ \\
\hline
\end{tabular}
\caption{{\bf Average logical gate infidelities for smaller idling noise strengths $q_I=0-0.005$.} Here $q=0.01$.  The infidelities are computed from $10000$ random states sampled according to the Haar measure. We list only gates for which the infidelities depend on the idling noise strength. }
\label{tab:gate_infid2}
\end{table}

\end{document}